\documentclass[12pt,journal,onecolumn]{IEEEtran}
\usepackage{amsthm}
\usepackage{amssymb}
\usepackage{amsmath}
\usepackage{algorithmic}
\usepackage{algorithm}
\usepackage{array}
\usepackage{arydshln}
\usepackage{nameref}
\newtheorem{theorem}{Theorem}
\newtheorem{lemma}[theorem]{Lemma}

\newtheorem{example}[theorem]{Example}
\allowdisplaybreaks
\usepackage[justification=centering]{caption}

\usepackage[para]{threeparttable}

\newtheorem{remark}[theorem]{Remark}

\usepackage{lineno}
\usepackage{blindtext}
\usepackage{threeparttable}
 \usepackage{graphicx}
 \usepackage{arydshln}
\allowdisplaybreaks[4]

\makeatletter

\newcommand{\Rmnum}[1]{\expandafter\@slowromancap\romannumeral #1@}
\makeatother

\ifCLASSINFOpdf

\else

\fi

\begin{document}

\title{Binary Cyclic Codes With Simultaneously Large Minimum Distances and Dual Distances}

\author{Ziling Heng and Jiantao Hu
\thanks{Z. Heng and J. Hu are with the School of Science, Chang’an University, Xi’an 710064, China, and also with the State Key Laboratory of
 Integrated Services Networks, Xidian University, Xi’an 710071, China  
 (Email: zilingheng@chd.edu.cn, 15320266040@163.com). (\emph{Corresponding author: Jiantao Hu})}
 \thanks{ Z. Heng’s research was supported by the National Natural Science
 Foundation of China under Grant 12271059.}
 }



\maketitle

\begin{abstract}
Binary cyclic codes are an important class of linear codes that admit highly efficient encoding and decoding algorithms.
Constructing binary cyclic codes with favorable parameters has been an active research topic for several decades.
In recent years, substantial progress has been made in the study of binary $[n,k,d]$ cyclic codes with $k$ close to $n/2$ and $d\geq \sqrt{n}$. 
Nevertheless, among the known codes of this type, only a few families simultaneously possess both large minimum distances and large dual distances.
In this paper, for even $s$, we focus on new constructions of $[2^{s}-1,k]$ binary cyclic codes with rate approximately $1/2$,
for which both the minimum distance $d$ and the dual distance $d^\perp$ are simultaneously large.
 Some of the constructions yield binary cyclic codes with parameters $[2^s-1,2^{s-1}\pm 1,d]$ such that the lower bounds of $d\cdot d^\perp$ are close to $n$ or $2n$.  The other constructions yield binary cyclic codes with parameters $[2^s -1, 2^{s-1} + c, d]$ for some $ 1- \frac{5s}{2} \leq c \leq \frac{5s}{2} -1$ such
that the lower bounds of $d\cdot d^\perp$ are close to $n$ or $2n$. Compared with the known binary cyclic codes of the same length and dimension, the majority of the binary cyclic codes we construct exhibit larger minimum distances. In particular, we obtain several codes with excellent parameters, which are verified against the Code Tables available at http://www.codetables.de/.
\end{abstract}

\begin{IEEEkeywords}
Binary code, cyclic code, minimum distance.
\end{IEEEkeywords}

\section{Introduction}
\IEEEPARstart{T}{hroughout} this paper, let $\mathbb{F}_q$ be the finite field with $q$ elements, where $q$ is a power of a prime $p$. An $[n, k, d]$ linear code $\mathcal{C}$ over $\mathbb{F}_q$ is defined by a $k$-dimensional $\mathbb{F}_q$-linear subspace of $\mathbb{F}_q^n$ with minimum (Hamming) distance $d$. The dual code of $\mathcal{C}$ is defined as
\[
\mathcal{C}^{\perp} = \{\mathbf{b} \in \mathbb{F}_q^n : <\mathbf{b}, \mathbf{c}> = 0 \text{ for all } \mathbf{c} \in \mathcal{C}\},
\]
where $<\cdot, \cdot>$ denotes the standard inner product. 
Then $\mathcal{C}^\perp$ is an $[n,n-k,d^\perp]$ linear code over $\mathbb{F}_q$, where $d^\perp$ is also called the dual distance of $\mathcal{C}$. 

A linear code $\mathcal{C}$ over $\mathbb{F}_q$ is said to be cyclic if any $(c_0, c_1, \ldots, c_{n-1}) \in \mathcal{C}$ implies $(c_{n-1}, c_0, \ldots, c_{n-2}) \in \mathcal{C}$. By identifying each vector $(c_0, c_1, \ldots, c_{n-1}) \in \mathbb{F}_q^n$ with
\[
c_0 + c_1 x + c_2 x^2 + \cdots + c_{n-1} x^{n-1} \in \mathbb{F}_q[x] / \langle x^n - 1 \rangle,
\]
a linear code $\mathcal{C}$ of length $n$ over $\mathbb{F}_q$ corresponds to a subset of the quotient ring $\mathbb{F}_q[x] / \langle x^n - 1 \rangle$. The code $\mathcal{C}$ is  cyclic  if and only if the corresponding subset is an ideal of $\mathbb{F}_q[x] / \langle x^n - 1 \rangle$. Since every ideal of $\mathbb{F}_q[x] / \langle x^n - 1 \rangle$ is principal, there exists a monic polynomial $g(x)$ of the smallest degree such that $\mathcal{C} = \langle g(x) \rangle$ with $g(x) \mid (x^n - 1)$. Then $g(x)$ and $h(x): = (x^n - 1)/g(x)$ are respectively referred to as the generator polynomial and check polynomial of $\mathcal{C}$. Moreover, the dual code $\mathcal{C}^\perp$ is generated by the reciprocal polynomial of $h(x)$.
If $\gcd(n,q)=1$, then $g(x)$ has no repeated roots in the algebraic closure of $\mathbb{F}_q$  and $\mathcal{C}$ is said to be a simple-root cyclic code. 
In this paper, we only study simple-root binary cyclic codes.

Cyclic codes have been widely studied due to their efficient encoding and decoding algorithms. Much progress has been made on this type of codes over the past 70 years (e.g. \cite{DING C, LDX, LIU and Ding, Liu2, wang and tang}).
It is known that binary quadratic-residue codes and the punctured Reed-Muller codes have parameters $[n,(n+1)/2,d\geq \sqrt{n}]$. 
Their rates are about $1/2$ and  minimum distances have a square-root bound. Such codes are interesting as they have both high code rate and strong error-correcting capability‌.  
This motivated scholars to construct new binary cyclic codes with code rates near $1/2$ and large minimum distances. 
As pointed out in \cite{CTCD}, constructing such codes is not easy. To the best of our knowledge, the following summarizes the known infinite families of such codes:
\begin{itemize}
    \item Binary $[n, (n+1)/2, d]$ quadratic residue codes and their even-weight subcodes were given in \cite[Section 6.6]{WCH}, where $d^2 \geq n$ and $n \equiv \pm 1 \pmod{8}$ is a prime. 
    
    \item The punctured binary $[2^s-1, 2^{s-1}, 2^{(s+1)/2}-1]$  Reed-Muller codes of order $(s-1)/2$ were studied in \cite{EFA},  where $s$ is odd and $d^\perp \geq 2^{(s+1)/2}$. 

    \item A family of binary $[2^s - 1, 2^{s-1}, d]$ cyclic codes for odd $s \geq 3$ was constructed in \cite{CTCD}, where $d \geq 2^{(s-1)/2} + 1$ and $d^{\perp} \geq 2^{(s-1)/2} + 2$ for $s \equiv 3 \pmod{4}$, and $d \geq 2^{(s-1)/2} + 3$ and $d^{\perp} \geq 2^{(s-1)/2} + 4$ for $s \equiv 1 \pmod{4}$. 

    \item A family of binary $[2^s - 1, 2^{s-1} - 1, d]$ cyclic codes for even $s \geq 4$ was obtained in  \cite{CTCD}, where $d \geq 2^{(s-2)/2} + 1$ and $d^{\perp} \geq 2^{(s-2)/2} + 2$ for $s \equiv 0 \pmod{4}$, and $d \geq 2^{(s-4)/2} + 1$ and $d^{\perp} \geq 2^{(s-4)/2} + 2$ for $s \equiv 2 \pmod{4}$. 
        
    \item A family of binary $[2^s - 1, 2^{s-1}, d]$ cyclic codes for $s \equiv 1 \pmod{4}$ was constructed in \cite{LIU}, where $d \geq 2^{(s-1)/2} + 3$ and $d^\perp \geq 2^{(s-1)/2} + 4$ if $s \equiv 1 \pmod{8}$, and $d \geq 2^{(s-1)/2} + 1$ and $d^\perp \geq 2^{(s-1)/2} + 2$  if $s \equiv 5 \pmod{8}$.

    \item A family of binary $[2^s - 1, 2^{s-1}, d]$ cyclic codes for $s \equiv 3 \pmod{4}$ was derived in \cite{LIU}, where $d \geq 2^{(s-1)/2} + 1$ and $d^\perp \geq 2^{(s-1)/2} + 2$ if $s \equiv 3 \pmod{8}$, and $d \geq 2^{(s-1)/2} + 3$ and $d^\perp \geq 2^{(s-1)/2} + 4$ if $s \equiv 7 \pmod{8}$.

    \item A family of binary $[2^s - 1, 2^{s-1}, d]$ cyclic codes was given in \cite{SUN}, where $s \geq 9$ is odd, $d \geq 3 \cdot 2^{(s-1)/2} - 1$ and $d^\perp \geq 2^{\frac{s-1}{2}}$.

    \item A family of binary $[2^s - 1, 2^{s-1} \pm 1, d]$ cyclic codes was obtained in \cite{LWQ}, where $s$ is even, $d \geq 2^{s/2} - 1$ and $d^\perp \geq 2^{s/2}$.

    \item A family of binary $[2^s - 1, 2^{s-1}, d]$ cyclic codes was derived in \cite{LWQ}, where $s$ is odd, $d \geq 2^{(s+1)/2} - 1$ and $d^\perp \geq 2^{(s+1)/2}$.
    
    \item A family of binary $[2^s - 1, 2^{s-1}, d]$ cyclic codes was constructed in \cite{LWQ}, where $s$ is odd, $d \geq 2^{(s+3)/2} - 15$ and $d^\perp \geq 2^{(s-1)/2}$.
\end{itemize}
Among these known codes, only a few families of binary codes have simultaneously large minimal distances and large dual distances.
Therefore, the construction of new families of such codes is of great interest.

In this paper, for even $s$, we focus on new constructions of $[2^{s}-1,k]$ binary cyclic codes with rate approximately $1/2$,
for which both the minimum distance $d$ and the dual distance $d^\perp$ are simultaneously large.
  Some of the constructions yield binary cyclic codes with parameters $[2^s-1,2^{s-1}\pm 1,d]$ such that the lower bounds of $d\cdot d^\perp$ are close to $n$ or $2n$.  The other constructions yield binary cyclic codes with parameters $[2^s -1, 2^{s-1} + c, d]$ for some $ 1- \frac{5s}{2} \leq c \leq \frac{5s}{2} -1$ such
that the lower bounds of $d\cdot d^\perp$ are close to $n$ or $2n$. The parameters of the binary cyclic codes constructed by us and the known ones with simultaneously large minimal distances and large dual distances are summarized in Table \ref{tab}, where the notation $\approx k_0 n$ in the fifth column indicates that the lower bound of $d \cdot d^\perp$ equals $k_0 n + o(n)$ as $n \to \infty$ for $k_0 > 0$. Compared with the known binary cyclic codes with the same lengths and dimensions by Table \ref{tab}, most of our binary cyclic codes 
have larger lower bound of $d\cdot d^\perp$. Specifically, we have the following:
\begin{enumerate}
\item[$\bullet$] The lower bounds of $d\cdot d^\perp$ of our codes in Theorems \ref{th-16} and \ref{th-29} are about $2n$, while 
the lower bounds of $d\cdot d^\perp$ of the known codes with the same lengths and dimensions in \cite{CTCD} and \cite{LWQ} are much smaller than $2n$.
\item[$\bullet$] The lower bounds of $d\cdot d^\perp$ of our codes in Theorems \ref{th-1}, \ref{th-3}, \ref{th-22} and \ref{th-24} are about $2n$, while 
the lower bounds of $d\cdot d^\perp$ of the known codes with the same lengths and dimensions in \cite{SUN} are  only about  $3n/2$. 
\end{enumerate}

Particularly, we derive several codes with excellent parameters according to the Code Tables at http:\allowbreak//www.codetables.de/.

\begin{table*}[htbp]
\captionsetup{font=scriptsize}
\caption{Binary $[n,k,d]$ cyclic codes with minimum Hamming distance $d$, dual distance $d^\perp$ and $n=2^s-1$ for even $s$}
\label{tab}
\centering
\begin{tabular}{|c|c|c|c|c|c|}
\hline
$s$ & $k$ & $d \geq$ & $d^{\perp} \geq$ & The lower bound of $d \cdot d^{\perp}$& Reference \\
\hline
$s \equiv 4 \pmod{8}$ & $2^{s-1} + 1$ & $2^{(s-4)/2} + 1$ & $2^{(s-4)/2} + 2$ & $\approx \frac{1}{16} n$ & \cite{CTCD} \\
\hline
$s \equiv 2 \pmod{4}$ & $2^{s-1} + 1$ & $2^{(s-2)/2} + 1$ & $2^{(s-2)/2} + 2$ & $\approx \frac{1}{4} n$ & \cite{CTCD} \\
\hline
$s \equiv 0 \pmod{4}$ & $2^{s-1} - 1$ & $2^{(s-2)/2} + 1$ & $2^{(s-2)/2} + 2$ & $\approx \frac{1}{4} n$ & \cite{CTCD} \\
\hline
$s \equiv 2 \pmod{4}$ & $2^{s-1} - 1$ & $2^{(s-4)/2} + 1$ & $2^{(s-4)/2} + 2$ & $\approx \frac{1}{16} n$ & \cite{CTCD} \\
\hline
$s \equiv 2 \pmod{4}$ & $2^{s-1} + 1$ & $2^{(s+2)/2} - 1$ & $2^{s/2} $ & $\approx  2n$ & Theorem \ref{th-16} \\
\hline
$s \equiv 2 \pmod{4}$ & $2^{s-1} - 1$ & $2^{s/2} - 1$ & $2^{(s+2)/2} $ & $\approx  2n$ & Theorem \ref{th-16} \\
\hline
$s \equiv 2 \pmod{4}$ & $2^{s-1} + 1$ & $2^{(s+2)/2} - 3$ & $2^{s/2} $ & $\approx  2n$ & Theorem \ref{th-29} \\
\hline
$s \equiv 2 \pmod{4}$ & $2^{s-1} - 1$ & $2^{s/2} - 1$ & $2^{(s+2)/2}-2 $ & $\approx  2n$ & Theorem \ref{th-29} \\
\hline
$s \equiv 2 \pmod{4}$ & $2^{s-1} + 1$ & $2^{(s+2)/2} + 5$ & $2^{(s-2)/2} $ & $\approx  n$ & Theorem \ref{th-38} \\
\hline
$s \equiv 2 \pmod{4}$ & $2^{s-1} - 1$ & $2^{(s-2)/2} - 1$ & $2^{(s+2)/2}+ 6 $ & $\approx  n$ & Theorem \ref{th-38}\\
\hline
$s$ is even & $2^{s-1} - 1$ & $2^{\frac{s}{2}} - 1$ & $2^{\frac{s}{2}}$ & $\approx n$ & \cite{LWQ} \\
\hline
$s$ is even & $2^{s-1} + 1$ & $2^{\frac{s}{2}} - 1$ & $2^{\frac{s}{2}}$ & $\approx n$ & \cite{LWQ} \\
\hline
$s \equiv 2 \pmod{4}$  & $2^{s-1} +1 -\frac{s}{2}$ & $ 3 \cdot 2^{\frac{s-2}{2}} + 1$ & $2^{\frac{s}{2}}$ & $\approx \frac{3}{2} n$ & \cite{SUN} \\
\hline
$s \equiv 0 \pmod{4}$  & $2^{s-1} -1 +\frac{s}{2}$ & $ 3 \cdot 2^{\frac{s-2}{2}} + 1$ & $2^{\frac{s}{2}}$ & $\approx \frac{3}{2} n$ & \cite{SUN} \\
\hline
$s \equiv 2 \pmod{4}$  & $2^{s-1} -1 + \frac{s}{2}$ & $ 2^{\frac{s}{2}} -1$ & $3 \cdot 2^{\frac{s-2}{2}} + 2$ & $\approx \frac{3}{2} n$ & \cite{SUN} \\
\hline
$s \equiv 0 \pmod{4}$  & $2^{s-1} +1 - \frac{s}{2}$ & $ 2^{\frac{s}{2}} -1$ & $3 \cdot 2^{\frac{s-2}{2}} + 2$ & $\approx \frac{3}{2} n$ & \cite{SUN} \\
\hline
$s \equiv 2 \pmod{4}$  & $2^{s-1} + 1 - \frac{3s}{2}$ & $ 2^{\frac{s}{2}+1} -1$ & $2^{\frac{s}{2}}$ & $\approx  2n$ & Theorems \ref{th-1} and \ref{th-3} \\
\hline
$s \equiv 0 \pmod{4}$  & $2^{s-1} - 1 - \frac{s}{2}$ & $ 2^{\frac{s}{2}+1} -1$ & $2^{\frac{s}{2}}$ & $\approx  2n$ & Theorems \ref{th-1} and \ref{th-3} \\
\hline
$s \equiv 2 \pmod{4}$  & $2^{s-1} - 1 + \frac{3s}{2}$ & $ 2^{\frac{s}{2}} -1$ & $2^{\frac{s}{2}+1}$ & $\approx  2n$ & Theorems \ref{th-1} and \ref{th-3} \\
\hline
$s \equiv 0 \pmod{4}$  & $2^{s-1} + 1 + \frac{s}{2}$ & $ 2^{\frac{s}{2}} -1$ & $2^{\frac{s}{2}+1}$ & $\approx  2n$ & Theorems \ref{th-1} and \ref{th-3} \\
\hline
$s \equiv 2 \pmod{4}$  & $2^{s-1} + 1 - \frac{s}{2}$ & $ 2^{\frac{s}{2}+1} -3$ & $2^{\frac{s}{2}}$ & $\approx  2n$ & Theorems \ref{th-22} and \ref{th-24} \\
\hline
$s \equiv 0 \pmod{4}$  & $2^{s-1} - 1 + \frac{s}{2}$ & $ 2^{\frac{s}{2}+1} -3$ & $2^{\frac{s}{2}}$ & $\approx  2n$ & Theorems \ref{th-22} and \ref{th-24} \\
\hline
$s \equiv 2 \pmod{4}$  & $2^{s-1} - 1 + \frac{s}{2}$ & $ 2^{\frac{s}{2}} -1$ & $2^{\frac{s}{2}+1}-2$ & $\approx  2n$ & Theorems \ref{th-22} and \ref{th-24} \\
\hline
$s \equiv 0 \pmod{4}$  & $2^{s-1} + 1 - \frac{s}{2}$ & $ 2^{\frac{s}{2}} -1$ & $2^{\frac{s}{2}+1}-2$ & $\approx  2n$ & Theorems \ref{th-22} and\ref{th-24} \\
\hline
$s \equiv 2 \pmod{4}$  & $2^{s-1} + 1 - \frac{5s}{2}$ & $ 2^{\frac{s}{2}+1} +5$ & $2^{\frac{s}{2}-1}$ & $\approx  n$ & Theorems \ref{th-31} and \ref{th-33} \\
\hline
$s \equiv 0 \pmod{4}$  & $2^{s-1} - 1 - \frac{3s}{2}$ & $ 2^{\frac{s}{2}+1} +5$ & $2^{\frac{s}{2}-1}$ & $\approx  n$ & Theorems \ref{th-31} and \ref{th-33} \\
\hline
$s \equiv 2 \pmod{4}$  & $2^{s-1} - 1 + \frac{5s}{2}$ & $ 2^{\frac{s}{2}-1} -1$ & $2^{\frac{s}{2}+1}+6$ & $\approx  n$ & Theorems \ref{th-31} and \ref{th-33} \\
\hline
$s \equiv 0 \pmod{4}$  & $2^{s-1} + 1 + \frac{3s}{2}$ & $ 2^{\frac{s}{2}-1} -1$ & $2^{\frac{s}{2}+1}+6$ & $\approx  n$ & Theorems \ref{th-31} and \ref{th-33} \\
\hline
\end{tabular}
\end{table*}

\section{Preliminaries}
Throughout this paper, let $s$ be a positive integer and $n = 2^s - 1$. Let $\beta$ be a primitive element of $\mathbb{F}_{2^{s}}$. For a linear code $\mathcal{C}$, $\dim(\mathcal{C})$ and $d(\mathcal{C})$ denote its dimension and minimum Hamming distance, respectively.
\subsection{Cyclic codes and the BCH bound}
Let $\mathbb{Z}_n = \{0, 1, \ldots, n - 1\}$ be the ring of integers modulo $n$. For any $i \in \mathbb{Z}_n$, the $2$-cyclotomic coset of $i$ modulo $n$ is defined as
\[
C_i^{(2,n)} = \{ i \cdot 2^j \pmod{n} \mid 0 \leq j \leq \ell_i - 1 \},
\]
where $\ell_{i}$ is the smallest positive integer satisfying $i \cdot 2^{\ell_i} \equiv i \pmod{n}$. Hence $\ell_{i} = |C_i^{(2,n)}|$. The smallest integer of $C_i^{(2,n)}$ is referred to as its coset leader, and let $\Gamma_{(2,n)}$ denote the set of all such leaders. Then we have $$C_i^{(2,n)} \cap C_j^{(2,n)} = \emptyset$$ for any two distinct elements $i$ and $j$ in $\Gamma_{(2,n)}$, and
\[
\mathbb{Z}_{n} = \bigcup_{i \in \Gamma_{(2,n)}} C_{i}^{(2,n)}.
\]

Let $\mathcal{C}$ be a binary cyclic code of length $n$ with generator polynomial $g(x)$. Then there exists a subset $\Gamma \subseteq \Gamma_{(2,n)}$ such that
\[
g(x) = \prod_{i \in \Gamma} \mathbb{M}_{\beta^{i}}(x),
\]
which $\mathbb{M}_{\beta^{i}}(x)$ is the minimal polynomial of $\beta^{i}$ over $\mathbb{F}_{2}$.
The set $T := \bigcup_{i \in \Gamma} C_i^{(q,2n)}$ is called the defining set of $\mathcal{C}$ with respect to $\beta$. Then $\dim(\mathcal{C}) = n - |T|$. It is easy to deduce that the defining set of $\mathcal{C}^\perp$ is given by $-(\mathbb{Z}_{n} \setminus T) = \{-i : i \in \mathbb{Z}_{n} \setminus T\}$. The well known BCH bound is presented as follows. 

\begin{lemma}\cite{FS} [BCH Bound]
Let $\mathcal{C}$ be a binary cyclic code of length $n$ with defining set $T$. If there exist integers $b$ and $\delta$ with $2 \leq \delta \leq n$ such that
$\{i \bmod n : b \leq i \leq b + \delta - 2\} \subseteq T,$
then $d(\mathcal{C}) \geq \delta$.
\end{lemma}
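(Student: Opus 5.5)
We prove the BCH bound by a Vandermonde argument; no earlier result of the paper is needed. Let $\delta$ and $b$ be as in the statement. Write $\alpha_j=\beta^{b+j}$ for $0\le j\le \delta-2$. The hypothesis gives $(b+j)\bmod n\in T$ for each such $j$, so every $\alpha_j$ is a root of the generator polynomial $g(x)$. Hence every codeword $c(x)=\sum_{i=0}^{n-1}c_ix^i\in\mathcal{C}$ satisfies $c(\alpha_j)=0$ for $0\le j\le\delta-2$.

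We argue by contradiction. Suppose $c\neq 0$ has weight $w\le \delta-1$, with support $\{i_1,\dots,i_w\}\subseteq\mathbb{Z}_n$. Keep only the first $w$ vanishing conditions, $j=0,\dots,w-1$; this is allowed since $w\le\delta-1$. They form the linear system
\[
\sum_{t=1}^{w} c_{i_t}\,\beta^{(b+j)i_t}=0,\qquad 0\le j\le w-1,
\]
in the unknowns $c_{i_1},\dots,c_{i_w}$. Its coefficient matrix is $M=(\beta^{(b+j)i_t})_{j,t}$. Factoring $\beta^{b i_t}$ out of column $t$ leaves the Vandermonde matrix $(x_t^{\,j})$ with $x_t=\beta^{i_t}$, so
\[
\det M=\prod_{t}\beta^{b i_t}\prod_{t<u}(x_u-x_t).
\]

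The step that needs care is showing $\det M\neq 0$. The first product is nonzero because $\beta\neq 0$. For the second, the $i_t$ are distinct residues modulo $n$ and $\beta$ has order exactly $n$, so the $x_t$ are pairwise distinct. Therefore $M$ is invertible over $\mathbb{F}_{2^s}$. The system then forces all $c_{i_t}=0$, contradicting the choice of support. So every nonzero codeword has weight at least $\delta$, i.e. $d(\mathcal{C})\ge\delta$.

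Finally, the "$\bmod\ n$" wraparound in the hypothesis causes no trouble: $\beta^{n}=1$, so $\beta^{b+j}$ depends only on $(b+j)\bmod n$. For the same reason, negative $b$ is also allowed.
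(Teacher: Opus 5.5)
Your proof is correct and is essentially the same approach as the source: the paper does not prove the BCH bound itself but quotes it from \cite{FS}, where the standard proof is this Vandermonde argument. There, any $\delta-1$ columns of the parity-check matrix built from the consecutive zeros $\beta^{b},\dots,\beta^{b+\delta-2}$ are linearly independent because the relevant minor is a column-scaled Vandermonde determinant in the distinct elements $\beta^{i_t}$. Your square $w\times w$ system, the use of the fact that $\beta$ has order exactly $n$, and the remark that $\beta^{n}=1$ handles the wraparound modulo $n$ together cover the paper's setting $n=2^s-1$ with $\beta$ primitive.
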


\subsection{Extended codes}
Let $\mathcal{C}$ be an $[n,k,d]$ binary linear code. Its extended code $\overline{\mathcal{C}}$ is defined by
\[
\overline{\mathcal{C}} = \left\{(\textbf{c}, c_{n}): \textbf{c} = (c_0, c_1, \ldots, c_{n-1}) \in \mathcal{C}, \sum_{i=0}^{n} c_i=0\right\}.
\]
It is not hard to see that $\overline{\mathcal{C}}$ is an $[n+1,k,d_1]$ binary linear code, where $d_1 = d$ for even $d$, and $d_1 = d+1$ for odd $d$. 
The extending technique is useful for obtaining longer codes from old ones. In \cite{DT} and \cite{H}, this technique was employed to construct codes with interesting properties. 

\subsection{Some auxiliary results}
 For any $i$ with $1 \leq i \leq n-1$, define $\omega_2(i) = \sum_{h=0}^{s-1} i_h$, where $i = \sum_{h=0}^{s-1} i_h 2^h$ is the $2$-adic expansion of $i$ and each $i_h \in \Bbb Z_2$. Let $R_{(i,s)} := \{ h : 1 \leq h \leq n-1, \, \omega_2(h) \equiv i \pmod{2} \}$ for  $i \in \{0,1\}$.

\begin{lemma}\label{lem-3}\cite{LIU and Ding}
Let $s \geq 6$ be even, and let $i$ be an integer satisfying $1 \leq i \leq 2^{\frac{s+2}{2}} - 1$ and $i \equiv 1 \pmod2$. Then $i$ is a coset leader with $|C_i^{(2,n)}| = s$ except that $i = 2^{\frac{s}{2}} + 1$, in which case $|C_i^{(2,n)}| = \frac{s}{2}$.
\end{lemma}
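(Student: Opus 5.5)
We need to show two things for odd $i$ with $1\le i\le 2^{\frac{s+2}{2}}-1$, where $m:=s/2\ge 3$ and $n=2^{2m}-1$: first, $i$ is the smallest element of $C_i^{(2,n)}$; second, the coset size is $s$, except for $i=2^m+1$. The plan is to work directly with binary expansions. Regard $i$ as an $s$-bit word. Multiplying by $2^j$ modulo $n$ is then a cyclic left shift by $j$ positions. Since $i<2^{m+1}$, the word has all its $1$'s in positions $0,\dots,m$, and bit $0$ is $1$ because $i$ is odd. So the word contains a run of at least $s-m-1=m-1$ zeros, occupying positions $m+1,\dots,s-1$, immediately followed cyclically by the $1$ at position $0$.

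\textbf{Coset leader.} Consider any shift $i2^j \bmod n$ with $1\le j\le s-1$. The bit at position $0$ moves to position $j$. If $j\ge m+1$, the result is at least $2^{m+1}>i$. If $1\le j\le m$, the shifted word is $\sum_h i_h 2^{(h+j)\bmod s}$. Bits at positions $h$ with $h+j\le s-1$ stay in place after multiplying by $2^j$; since $h\le m$ and $j\le m$, we have $h+j\le 2m=s$, so only $h=m$, $j=m$ can wrap around. In the case $j<m$ there is no wrap: we get exactly $2^j i$, which exceeds $i$. In the case $j=m$, if $i_m=0$ the value is $2^m i>i$. If $i_m=1$, write $i=2^m+r$ with $r$ odd and $r<2^m$. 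Then $i2^m\equiv 2^m r+1 \pmod n$. This is at least $i=2^m+r$ exactly when $2^m r+1\ge 2^m+r$, i.e.\ $(2^m-1)(r-1)\ge 0$, which always holds. Equality occurs only when $r=1$, i.e.\ $i=2^m+1$. So every shift is $\ge i$, and $i$ is the coset leader.

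\textbf{Coset size.} The size $|C_i^{(2,n)}|$ is the least $\ell$ with $i(2^\ell-1)\equiv 0 \pmod n$, and it divides $s$. The computation above shows more: $i2^j\not\equiv i$ for every $1\le j\le s-1$, except $j=m$ when $i=2^m+1$. (For $j\ge m+1$ the shift is strictly larger than $i$, and the strict inequalities above cover the remaining $j$.) So $|C_i^{(2,n)}|=s$ unless $i=2^m+1$. In that exceptional case $(2^m+1)(2^m-1)=n$, so $i\,2^m\equiv i$. The size therefore divides $m$, and it cannot be smaller than $m$ because the $j<m$ shifts differ from $i$. Hence it equals $s/2$.

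\textbf{Main obstacle.} The delicate point is the wrap-around bookkeeping at $j=m$, together with ruling out the boundary case $i=2^{m+1}-1$, whose bits fill positions $0..m$. That case is handled by the $j=m$ analysis with $r=2^m-1$, which gives strict inequality. The hypothesis $s\ge 6$ is used only to guarantee $m-1\ge 1$, so that the zero run is nonempty and the cases do not overlap.
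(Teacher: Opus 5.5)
Your proof is correct and complete. The paper gives no proof of this lemma; it is imported from the cited work of Liu and Ding, so there is no in-paper argument to compare against. Your cyclic-shift analysis is a clean, self-contained verification. You split the exponent into $j<m$, $j=m$ and $j\ge m+1$, and handle the only possible wrap-around at $j=m$ via $i2^m\equiv 2^m r+1 \pmod n$ and $(2^m-1)(r-1)\ge 0$. This settles the coset-leader property and the coset sizes together. It includes the exceptional $i=2^m+1$ being a leader, which the paper relies on later when it treats every odd $h\le 2^{\frac{s}{2}+1}-1$ as a coset leader.

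One small correction to your closing remark: the bound on $s$ is not really used through the zero run. It enters in two places:
\begin{enumerate}
\item[(a)] $i\le 2^{m+1}-1<n$, so that $i$ is a nonzero residue and the bit-rotation picture applies.
\item[(b)] In the case $j=m$, $i_m=1$, you need $2^m r+1\le 2^{2m}-2^m+1<n$, so that $2^m r+1$ is already the reduced residue.
\end{enumerate}
Both conditions need only $m\ge 2$. So your argument actually proves the statement for every even $s\ge 4$. It genuinely fails for $s=2$, where $2^{m+1}-1=n$.
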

\begin{lemma}\label{lem-4}\cite{CTCD}
Let $s \geq 6$ be even. Then $|R_{(0,s)}| = 2^{s-1} - 2$ and $|R_{(1,s)}| = 2^{s-1}$.
\end{lemma}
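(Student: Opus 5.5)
The plan is a direct count over binary expansions. I will not use any cyclotomic structure. First I will count, for the full range $0 \le h \le 2^s-1$, how many integers have even and odd binary weight $\omega_2(h)$. Then I will correct for the endpoints $h=0$ and $h=2^s-1=n$, which lie outside the range $1 \le h \le n-1$ used in the definition of $R_{(i,s)}$.

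The first step is the standard parity count. The map $h \mapsto (h_0,\dots,h_{s-1})$ is a bijection between $\{0,1,\dots,2^s-1\}$ and $\mathbb{F}_2^s$, and $\omega_2(h)$ is the Hamming weight of this vector. Flipping the bit $h_0$ is an involution on $\mathbb{F}_2^s$ that changes the weight parity. So exactly $2^{s-1}$ of these integers have even weight and $2^{s-1}$ have odd weight. Equivalently, $\sum_{j \text{ even}}\binom{s}{j} = \sum_{j \text{ odd}}\binom{s}{j} = 2^{s-1}$, which follows from expanding $(1-1)^s=0$.

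The second step removes the two excluded integers. The integer $h=0$ has weight $0$, which is even. The integer $h=2^s-1$ has all $s$ bits equal to $1$, so its weight is $s$, which is even by hypothesis. Both removed elements are therefore even-weight. This gives $|R_{(0,s)}| = 2^{s-1}-2$ and $|R_{(1,s)}| = 2^{s-1}$.

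There is no real obstacle. The only point needing care is the parity of $\omega_2(2^s-1)=s$; this is where evenness of $s$ is used. The hypothesis $s \ge 6$ plays no role beyond matching the setting in which the lemma is applied later (e.g. alongside Lemma \ref{lem-3}).
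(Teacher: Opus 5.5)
Your proof is correct and complete. The paper gives no argument for this lemma and simply cites it from \cite{CTCD}, so your direct count supplies a self-contained proof. First, the bit-flip involution on $h_0$ shows that even and odd weights each occur $2^{s-1}$ times among $0\le h\le 2^s-1$. Then you discard $h=0$ and $h=2^s-1$, and both have even weight because $s$ is even. This is the same parity-counting idea the paper uses in the proof of Lemma \ref{lem-6}, where one bit is determined by the parity of the others. You are also right that $s\ge 6$ plays no role: the count holds for every even $s\ge 2$. For example, $s=2$ gives $|R_{(0,2)}|=0$ and $|R_{(1,2)}|=2$.
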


 \begin{lemma}\label{lem-6}
  Let $s \geq 6$ be even integer and let $i$ be an odd such that $1 \leq i \leq  2^{\frac{s}{2} + 1} - 1$. Then
\begin{eqnarray*}
\nonumber & & \left| \{ i : 1 \leq i \leq 2^{\frac{s}{2} + 1} - 1,\ i \equiv 1 \bmod{2},\  \omega_2(i) \equiv 0 \bmod{2}  \}\right|  \\
 &=&\left| \{ i : 1 \leq i \leq 2^{\frac{s}{2} + 1} - 1,\ i \equiv 1 \bmod{2},\  \omega_2(i) \equiv 1 \bmod{2}  \}\right| \\
 &=& 2^{\frac{s}{2} - 1 }.
\end{eqnarray*}
\end{lemma}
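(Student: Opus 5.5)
The count reduces to the parity of $\omega_2$ on binary strings. Set $m=\frac{s}{2}+1$, so the range $1 \le i \le 2^{m}-1$ consists exactly of the nonzero integers with at most $m$ binary digits. Such an $i$ is odd precisely when its least significant bit $i_0$ equals $1$. Hence the odd integers in this range are in bijection with the strings $(i_1,\dots,i_{m-1})\in\mathbb{Z}_2^{m-1}$ via $i=1+\sum_{h=1}^{m-1} i_h2^h$. Every such string gives a nonzero $i\le 2^m-1$, so there are $2^{m-1}=2^{s/2}$ odd integers in total. Under this bijection $\omega_2(i)=1+\sum_{h=1}^{m-1} i_h$.

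It therefore suffices to show that exactly half of the $2^{m-1}$ strings in $\mathbb{Z}_2^{m-1}$ have even weight. I would do this with the standard involution that flips the bit $i_1$. This map is well defined because $m-1=\frac{s}{2}\ge 3\ge 1$ for $s\ge 6$, so the coordinate $i_1$ exists. Flipping $i_1$ is a fixed-point-free bijection on $\mathbb{Z}_2^{m-1}$, and it changes $\sum_{h\ge1} i_h$ by exactly one. It thus swaps the set of odd $i$ with $\omega_2(i)$ even and the set of odd $i$ with $\omega_2(i)$ odd.

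In terms of integers, the map is $i\mapsto i\oplus 2$, meaning $i+2$ or $i-2$ according to the bit $i_1$. This form makes it clear that the map preserves oddness and stays inside $[1,2^m-1]$. The two sets are therefore equinumerous, and since they partition a set of size $2^{s/2}$, each has size $2^{s/2-1}$.

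The argument is elementary and I expect no real obstacle. The only point needing care is checking that the range exactly matches the $m$-bit strings with $i_0=1$, including that $i=1$ and $i=2^m-1$ both lie in the range. The hypothesis that $s$ is even only serves to make $s/2$ an integer; the hypothesis $s\ge 6$ is stronger than needed, since the argument works for any $m\ge 2$.
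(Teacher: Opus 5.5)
Your proof is correct and follows essentially the same route as the paper: both identify the odd $i$ in the range with strings $(i_1,\dots,i_{s/2})\in\mathbb{Z}_2^{s/2}$ and reduce to counting by the parity of $\sum_{h\ge 1} i_h$. The paper counts the even-$\omega_2$ class directly ($i_{s/2}$ is determined by the other bits, giving $2^{\frac{s}{2}-1}$) and subtracts from the total $2^{\frac{s}{2}}$, whereas you show equinumerosity via the involution flipping $i_1$; this is only a cosmetic difference.
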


 \begin{proof}
 For $1 \leq i \leq 2^{\frac{s}{2} + 1} - 1$ and $i \equiv 1 \pmod{2}$, let $i = 1 + i_1 2 + i_2 2^2 + \cdots + i_{\frac{s}{2}} 2^{\frac{s}{2}} $, where $i_h \in \Bbb Z_2$. It is clear that $\omega_2(i) \equiv 0 \pmod{2}$ if and only if $\sum_{h=1}^{\frac{s}{2}} i_h \equiv 1 \pmod{2}$. Therefore, we deduce
\begin{eqnarray*}
\nonumber & & \left| \{ i : 1 \leq i \leq 2^{\frac{s}{2} + 1} - 1,\ i \equiv 1 \bmod{2},\ \omega_2(i) \equiv 0 \bmod{2} \} \right|\\
&=& \left| \left\{ (i_1, i_2, \ldots, i_{\frac{s}{2}}) \in \Bbb Z_2^{\frac{s}{2}} : \sum_{h=1}^{\frac{s}{2}} i_h \equiv 1 \pmod{2} \right\} \right|.
\end{eqnarray*}
 It is clear that the congruence $i_1 + i_2 + \ldots + i_{\frac{s}{2}}  \equiv 1 \pmod{2}$ is equivalent to 
 $$
 i_{\frac{s}{2}} \equiv 1+i_1 + i_2 + \ldots + i_{\frac{s}{2}-1}  \pmod{2}.
 $$
 Then we have
 $$
 \left| \left\{ (i_1, i_2, \ldots, i_{\frac{s}{2}}) \in \Bbb Z_2^{\frac{s}{2}} : \sum_{h=1}^{\frac{s}{2}} i_h \equiv 1 \pmod{2} \right\} \right|= 2^{\frac{s}{2} -1}.
 $$
Meanwhile, we have
 \begin{eqnarray*}
 \nonumber& & \left| \{ i : 1 \leq i \leq 2^{\frac{s}{2} + 1} - 1,\ i \equiv 1 \pmod{2}  \}\right|\\
 & = &  \left| \{ i : 1 \leq i \leq 2^{\frac{s}{2} + 1} - 1,\ i \equiv 1 \pmod{2},\ \omega_2(i) \equiv 0 \pmod{2}  \}\right|\\
 &&+\left| \{ i : 1 \leq i \leq 2^{\frac{s}{2} + 1} - 1,\  i \equiv 1 \pmod{2},\ \omega_2(i) \equiv 1 \pmod{2}  \}\right| \\
 & = & 2^{\frac{s}{2}}.
 \end{eqnarray*}
Thus,
 \begin{eqnarray*}
& & \left \{ i : 1 \leq i \leq 2^{\frac{s}{2} + 1} - 2,\  i \equiv 1 \pmod{2},\ \omega_2(i) \equiv 1 \pmod{2}  \} \right|\\
&=& 2^{\frac{s}{2} - 1 }.
 \end{eqnarray*}
 This completes the proof.
 \end{proof}

\section{New constructions of binary cyclic codes}
 In this section, for even $s$, we present new constructions of $[2^{s}-1,k]$ binary cyclic codes whose rates are bout $1/2$ and minimal distances $d$ and dual distances $d^\perp$ are simultaneously large. 
 
 \subsection{The first construction of binary cyclic codes with  $ d \cdot d^\perp \approx 2n$}
 Let $n = 2^s - 1$ and $s \geq 6$ be even. we have
\[
R_{(i,s)} = \{h : 1 \leq h \leq n-1, \ \omega_2(h) \equiv i \pmod{2}\}
\]
for $i \in \{0,1\}$. Denote
\[
A = \bigcup_{h=1}^{2^{\frac{s}{2}+1}-2} C_{h}^{(2,n)}
\]
and
\[
B_{(i,s)} = \bigcup_{\substack{h=1 \\ \omega_2(h) \equiv i \pmod{2}}}^{2^{\frac{s}{2}+1}-2} C_{n-h}^{(2,n)}
\]
for each $i \in \{0,1\}$. Let
\[
D_{(1,s)} = 
\begin{cases}
(R_{(0,s)} \setminus B_{(0,s)}) \cup A & \text{if } s \equiv 2 \pmod{4}, \\
(R_{(1,s)} \setminus B_{(1,s)}) \cup A & \text{if } s \equiv 0 \pmod{4},
\end{cases}
\]
and $D_{(0,s)} = \{1,2,\cdots,n-1\} \setminus D_{(1,s)}$.
 Then both $D_{(0,s)}$ and $D_{(1,s)}$ consist of unions of certain $2$-cyclotomic cosets.
For  $i \in \{0,1\}$, let $\mathcal{C}{(i,s)}$ be the binary cyclic code of length $n$ with defining set  $D_{(i,s)}$. The following lemmas are used to characterize the parameters of $\mathcal{C}_{(i,s)}$.

\begin{lemma}\label{lem-5}
Let $s \geq 6$ be even and $1 \leq i \leq  2^{\frac{s}{2} + 1} - 2$ be an integer. Then
\begin{eqnarray*}
 \left(\bigcup_{i=1}^{2^{\frac{s}{2} + 1} - 2} C_i^{(2,n)}\right) \bigcap  \left(\bigcup_{h=1}^{2^{\frac{s}{2} + 1} - 2} C_{n-h}^{(2,n)}\right) \\
= C_{2^{\frac{s}{2}}-1}^{(2,n)} \cup C_{3 \cdot 2^{\frac{s}{2}-1}-1}^{(2,n)} \cup C_{2^{\frac{s}{2}+1}-3}^{(2,n)}.
\end{eqnarray*}
\end{lemma}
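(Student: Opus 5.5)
We need to determine which cyclotomic cosets $C_i^{(2,n)}$ with $1\le i\le 2^{\frac{s}{2}+1}-2$ coincide with some coset $C_{n-h}^{(2,n)}$ with $1\le h\le 2^{\frac{s}{2}+1}-2$. Put $m=s/2$ and $N=2^{m+1}-2$. Every element of $\bigcup_{i\le N}C_i^{(2,n)}$ lies in the coset of an odd $i\le 2^{m+1}-1$, since even $i$ reduce to $i/2^{v}$. By Lemma \ref{lem-3}, these odd $i$ are coset leaders. So the left set is a union of cosets $C_i$ with $i$ odd, $i\le 2^{m+1}-3$.

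Similarly, $C_{n-h}^{(2,n)}=-C_h^{(2,n)}$. So the question becomes: for which odd $i,h\le N$ do we have $-i\equiv 2^j h \pmod n$ for some $j$? In binary (length $s$, cyclically), $-i\equiv n-i$ is the bitwise complement of $i$. Thus we need the complement of $i$ to be a cyclic shift of $h$.

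The key weight observation is as follows. An odd $i\le 2^{m+1}-1$ has all its ones in positions $0,\dots,m$, so its complement has at least $s-(m+1)=m-1$ ones. Moreover, the complement contains the zero-free block of positions $m+1,\dots,s-1$, which is a run of $m-1$ consecutive ones, extended by whatever ones occur in positions $0..m$ of the complement. On the other hand, every cyclic shift of $h\le 2^{m+1}-1$ has a cyclic run of at least $s-(m+1)=m-1$ consecutive zeros. So the complement of $i$, i.e. the string $\bar i$, must contain a cyclic run of at least $m-1$ zeros, and these zeros are exactly ones of $i$ within positions $0..m$. Hence $i$ must contain a run of at least $m-1$ consecutive ones inside its $m+1$ low bits, with bit 0 set. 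Symmetrically, $\bar i$ must be supported on a cyclic window of length $m+1$. Since $\bar i$ already has ones on positions $m+1..s-1$ (length $m-1$), all remaining ones of $\bar i$ in the low positions must lie within 2 positions adjacent to that block cyclically, i.e. in positions $0$ or $m$. But bit 0 of $\bar i$ is $0$ because $i$ is odd.

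It follows that $i$ has ones on positions $0..m$, except possibly positions $m$ and one further boundary position. Carrying out this case analysis, the zeros of $i$ among positions $0,\dots,m$ can only be a subset of $\{m\}$ together with $\{m-1\}$, or the window can wrap the other way, which allows position $1$. This gives candidates:
\begin{itemize}
\item $i=2^{m}-1$ (bit $m$ zero),
\item $i=2^{m+1}-1-2^{m-1}=3\cdot 2^{m-1}-1$ (bit $m-1$ zero),
\item $i=2^{m+1}-3$ (bit $1$ zero),
\item $i=2^{m+1}-1$, which is excluded because it exceeds $N$.
\end{itemize}
A few more combinations need to be ruled out, such as two zeros, by checking that the complement then has a support window longer than $m+1$.

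For each of the three survivors we verify directly that $-i$ lies in some $C_h$ with $h\le N$:
\begin{itemize}
\item $n-(2^m-1)=2^s-2^m=2^m(2^m-1)$, so $C_{n-(2^m-1)}=C_{2^m-1}$ with $h=2^m-1$.
\item $n-(2^{m+1}-3)=2^s-2^{m+1}+2$, which is $2$ times $2^{s-1}-2^m+1$, a shift of $2^{m}+\dots$; we compute its reduced odd representative as $2^{m+1}-3$ or $3\cdot2^{m-1}-1$.
\item The remaining candidate $3\cdot 2^{m-1}-1$ is handled in the same way, with the two values possibly swapping.
\end{itemize}
These computations show that the three cosets are closed under negation as a set, which gives the stated equality.

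The main obstacle is the case analysis in the weight and run argument: making the "cyclic window of length $m+1$" condition precise, so that exactly these three odd values survive, without overlooking wrap-around configurations.
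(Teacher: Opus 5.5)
Your reformulation is sound and is a reasonable alternative to the paper's direct case analysis of $i+h\cdot 2^k\equiv 0 \pmod n$ according to $k$: negation mod $n$ is bitwise complement, and multiplication by $2$ is cyclic rotation. However, the exclusion step has a genuine hole, at exactly the boundary this lemma is about.

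Your window argument only uses $h<2^{m+1}$ (with your $m=s/2$), never the strict bound $h\le 2^{m+1}-2$. The cyclic windows of length $m+1$ containing the block $\{m+1,\dots,s-1\}$ are $\{m-1,\dots,s-1\}$, $\{m,\dots,s-1,0\}$ and $\{m+1,\dots,s-1,0,1\}$. So your own analysis admits the zero set $\{m-1,m\}$ for $i$, i.e. $i=2^{m-1}-1$. For this $i$, $\bar i$ is supported on $\{m-1,\dots,s-1\}$, a window of length exactly $m+1$. Hence the test you propose for two-zero configurations (``the complement then has a support window longer than $m+1$'') does not eliminate it.

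It is excluded only because $n-(2^{m-1}-1)=2^{m-1}(2^{m+1}-1)$, so $-C_{2^{m-1}-1}^{(2,n)}=C_{2^{m+1}-1}^{(2,n)}$. Every element of that coset has weight $m+1$, while every integer $1\le h\le 2^{m+1}-2$ has $\omega_2(h)\le m$. This is precisely the coset that does enter the intersection in Lemma~\ref{lem-28}, where the range is extended to $2^{m+1}-1$. An argument that never uses $h\neq 2^{m+1}-1$ cannot distinguish the two lemmas. As written, yours would either wrongly include $C_{2^{m-1}-1}^{(2,n)}$ or leave it unaddressed.

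The gap closes inside your framework if you replace the ``at least $m-1$ ones'' bound with an exact weight count. Odd $i,h\le 2^{m+1}-3$ have all their ones in positions $0,\dots,m$ and differ from $2^{m+1}-1$, so $\omega_2(i)\le m$ and $\omega_2(h)\le m$. Since $\bar i$ is a rotation of $h$, $\omega_2(i)+\omega_2(h)=s=2m$, which forces $\omega_2(i)=\omega_2(h)=m$. Hence $i$ has exactly one zero $z\in\{1,\dots,m\}$ among positions $0,\dots,m$. The window condition on $\{z\}\cup\{m+1,\dots,s-1\}$ then gives $z\in\{1,m-1,m\}$, i.e. exactly $i\in\{2^{m+1}-3,\,3\cdot 2^{m-1}-1,\,2^{m}-1\}$, and no two-zero case ever arises.

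Your verification of the survivors should also be stated explicitly rather than as ``$2^{m+1}-3$ or $3\cdot 2^{m-1}-1$''. We have $n-(2^m-1)=2^m(2^m-1)$ and $n-(3\cdot 2^{m-1}-1)=2^{m-1}(2^{m+1}-3)$. So $-C_{2^m-1}^{(2,n)}=C_{2^m-1}^{(2,n)}$ and $-C_{3\cdot 2^{m-1}-1}^{(2,n)}=C_{2^{m+1}-3}^{(2,n)}$, and all the resulting $h$ lie in $[1,2^{m+1}-2]$.
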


\begin{proof}
For $1 \leq i \leq  2^{\frac{s}{2} + 1} - 2$, we have $C_i^{(2,n)} \subseteq \bigcup_{h=1}^{2^{\frac{s}{2} + 1} - 2} C_{n-h}^{(2,n)}$ if and only if $i \in C_{n-h}^{(2,n)}$ for some $1 \leq h \leq  2^{\frac{s}{2} + 1} - 2$. Equivalently, one can find an integer $k$ with $0 \leq k \leq s - 1$ such that
\begin{eqnarray}\label{eqn-1}
i + h \cdot 2^k \equiv 0 \pmod{n}.
\end{eqnarray}
 
Case 1: If $0 \leq k \leq \frac{s}{2}-2$, then for any $1 \leq i, h \leq 2^{\frac{s}{2} + 1} - 2$, it holds that
\begin{eqnarray*}
1 < i + h \cdot 2^k &\leq (2^{\frac{s}{2}+1}-2)(2^{\frac{s}{2}-2}+1)\\
&=2^{s-1} + 3\cdot 2^\frac{s-2}{2} -2 < n.
\end{eqnarray*}
This implies that Equation \eqref{eqn-1} has no solution.

Case 2: If $k = \frac{s}{2}-1$, then for any $1 \leq i, h \leq 2^{\frac{s}{2} + 1} - 2$, we have
\begin{eqnarray*}
1 < i + h \cdot 2^k &\leq (2^{\frac{s}{2}+1}-2)(2^{\frac{s}{2}-1}+1)\\
&= 2^s + 2^\frac{s}{2} -2 <2n.
\end{eqnarray*}
In this case, Equation \eqref{eqn-1} is equivalent to $i + h \cdot 2^{\frac{s}{2}-1} = 2^s - 1$. Therefore, the equation has the following solutions: $(i, h) = (2^{\frac{s}{2}} - 1, 2^{\frac{s}{2}+1} - 2)$ and $(3 \cdot 2^{\frac{s}{2}-1} - 1, 2^{\frac{s}{2}+1} - 3)$.

Case 3: If $k = \frac{s}{2}$, then for any $1 \leq i, h \leq 2^{\frac{s}{2} + 1} - 2$, we have
\begin{eqnarray*}
1 < i + h \cdot 2^k &\leq (2^{\frac{s}{2}+1}-2)(2^\frac{s}{2}+1)\\
&= 2^{s+1} -2  = 2n.
\end{eqnarray*}
In this case, Equation \eqref{eqn-1} holds if and only if $i + h \cdot 2^{\frac{s}{2}} = 2^s - 1$ or $i + h \cdot 2^{\frac{s}{2}} = 2^{s+1} - 2$, where the first equation has the unique solution $(i, h) = (2^{\frac{s}{2}} - 1, 2^{\frac{s}{2}} - 1)$ and the second one has no solution.

Case 4: If $k = \frac{s}{2}+1$, then $s - k = \frac{s}{2} - 1$. Similarly to Case 2, we deduce Equation \eqref{eqn-1} has solutions $(i, h) = ( 2^{\frac{s}{2}+1} - 2, 2^{\frac{s}{2}} - 1)$ and $(i,h)=(2^{\frac{s}{2}+1} - 3,3 \cdot 2^{\frac{s}{2}-1} - 1)$.

Case 5: If $ \frac{s}{2} + 2 \leq k \leq s-1$, then $1 \leq s-k \leq \frac{s}{2}-2$. Similarly to Case 1, we deduce Equation \eqref{eqn-1} has no solution.

Therefore, the desired conclusion is obtained from the above discussion. 
\end{proof}
 
 \begin{lemma}\label{lem-7}
   Let $s \geq 6$ be even. Then the following conclusions are obtained.
   \begin{enumerate}
    \item  $B_{(i,s)} \subseteq R_{(i,s)}$ for  $i \in \{0, 1\}$.
    
    \item If $s \equiv 2 \pmod{4}$, then $B_{(0,s)} \cap A = \emptyset$ and $B_{(1,s)} \cap A = C_{2^{\frac{s}{2}}-1}^{(2,n)} \cup C_{3 \cdot 2^{\frac{s}{2}-1}-1}^{(2,n)} \cup C_{2^{\frac{s}{2}+1}-3}^{(2,n)}$.
    
    \item If $s \equiv 0 \pmod{4}$, then $B_{(0,s)} \cap A = C_{2^{\frac{s}{2}}-1}^{(2,n)} \cup C_{3 \cdot 2^{\frac{s}{2}-1}-1}^{(2,n)} \cup C_{2^{\frac{s}{2}+1}-3}^{(2,n)}$ and $B_{(1,s)} \cap A = \emptyset$.
\end{enumerate}
 \end{lemma}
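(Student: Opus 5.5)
Part 1 of the statement follows from the behaviour of $\omega_2$ under the two operations that build $B_{(i,s)}$; parts 2 and 3 follow from Lemma \ref{lem-5} together with part 1 and a parity check on three explicit integers.

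\textbf{Part 1.} Multiplying by $2$ modulo $n=2^s-1$ cyclically rotates the $s$-bit binary expansion. Hence $\omega_2$ is constant on every $2$-cyclotomic coset $C_j^{(2,n)}$ with $j\neq 0$. For $1\le h\le n-1$, the expansion of $n-h$ is the bitwise complement of that of $h$, so
$$\omega_2(n-h)=s-\omega_2(h)\equiv \omega_2(h)\pmod 2,$$
using that $s$ is even. Since $1\le n-h\le n-1$, every element of $C_{n-h}^{(2,n)}$ lies in $\{1,\dots,n-1\}$ and has weight parity $\omega_2(h)$. Therefore $B_{(i,s)}\subseteq R_{(i,s)}$.

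\textbf{Parts 2 and 3.} By definition,
$$B_{(0,s)}\cup B_{(1,s)}=\bigcup_{h=1}^{2^{\frac s2+1}-2}C_{n-h}^{(2,n)}.$$
Lemma \ref{lem-5} then gives
$$(B_{(0,s)}\cup B_{(1,s)})\cap A=E:=C_{2^{\frac{s}{2}}-1}^{(2,n)} \cup C_{3 \cdot 2^{\frac{s}{2}-1}-1}^{(2,n)} \cup C_{2^{\frac{s}{2}+1}-3}^{(2,n)}.$$
Next I compute the binary weights of the three coset representatives:
\begin{itemize}
\item $2^{s/2}-1$ consists of $s/2$ ones, so its weight is $s/2$;
\item $3\cdot 2^{s/2-1}-1=2^{s/2}+(2^{s/2-1}-1)$ has weight $1+(s/2-1)=s/2$;
\item $2^{s/2+1}-3$ is the string of $s/2+1$ ones with the bit of $2^1$ cleared, so its weight is $s/2$.
\end{itemize}
Since $\omega_2$ is constant on cosets, $E\subseteq R_{(\epsilon,s)}$ with $\epsilon\equiv s/2\pmod 2$. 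Thus $\epsilon=1$ when $s\equiv 2\pmod 4$ and $\epsilon=0$ when $s\equiv 0\pmod 4$.

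Now write $E=(B_{(0,s)}\cap A)\cup(B_{(1,s)}\cap A)$. By part 1, $B_{(1-\epsilon,s)}\cap A\subseteq R_{(1-\epsilon,s)}\cap E$. This set is empty because $R_{(0,s)}$ and $R_{(1,s)}$ are disjoint and $E\subseteq R_{(\epsilon,s)}$. Hence $B_{(\epsilon,s)}\cap A=E$. This gives the claimed intersections for both residues of $s$ modulo $4$.

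The only genuine content is Lemma \ref{lem-5}, which we are allowed to assume. The one step I would double-check is the weight computation for the three leaders, in particular $2^{s/2+1}-3$, whose binary form has a zero in position $1$.
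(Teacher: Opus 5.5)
Your proof is correct and follows the same route as the paper. Part 1 uses the complement identity $\omega_2(n-h)=s-\omega_2(h)$ with $s$ even, and parts 2 and 3 combine Lemma~\ref{lem-5} with the fact that the three coset representatives all have weight $s/2$; you also spell out two points the paper leaves implicit ($\omega_2$ is constant on each $2$-cyclotomic coset, and part 1 plus disjointness of $R_{(0,s)}$ and $R_{(1,s)}$ forces $B_{(1-\epsilon,s)}\cap A=\emptyset$), which tightens but does not change the argument.
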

 \begin{proof}
1) Observe that $\omega_2(n - h) = s - \omega_2(h)$ for all $0 \leq h \leq n - 1$. Since $s$ is even, we obtain
\[
\omega_2(n - h) \equiv \omega_2(h) \pmod{2}.
\]
Therefore, $B_{(i,s)} \subseteq R_{(i,s)}$. 

2) Note that
\[
B_{(0,s)} \cup B_{(1,s)} = \bigcup_{j=1}^{2^{\frac{s}{2} + 1} - 2} C_{n-h}^{(2,n)}.
\]
It follows from Lemma 5 that
\begin{eqnarray*}
\nonumber & &  A \cap (B_{(0,s)} \cup B_{(1,s)})\\
&=& (A \cap B_{(0,s)}) \cup (A \cap B_{(1,s)})\\
&=&  C_{2^{\frac{s}{2}}-1}^{(2,n)} \cup C_{3 \cdot 2^{\frac{s}{2}-1}-1}^{(2,n)} \cup C_{2^{\frac{s}{2}+1}-3}^{(2,n)}.
\end{eqnarray*}
Observe that $B_{(0,s)} \cap A$ and $B_{(1,s)} \cap A$ consist of unions of certain $2$-cyclotomic cosets modulo $n$. Moreover,  $\omega_2(2^{\frac{s}{2}}-1) = \omega_2(3 \cdot 2^{\frac{s}{2}-1}-1) = \omega_2(2^{\frac{s}{2}+1}-3) = \frac{s}{2}$.

If $s \equiv 2 \pmod{4}$, then $\omega_2(2^{\frac{s}{2}}-1) = \omega_2(3 \cdot 2^{\frac{s}{2}-1}-1) = \omega_2(2^{\frac{s}{2}+1}-3) = \frac{s}{2} \equiv 1 \pmod{2}$ . It then follows that $B_{(0,s)} \cap A = \emptyset$ and $B_{(1,s)} \cap A = C_{2^{\frac{s}{2}}-1}^{(2,n)} \cup C_{3 \cdot 2^{\frac{s}{2}-1}-1}^{(2,n)} \cup C_{2^{\frac{s}{2}+1}-3}^{(2,n)}$.

3) If $s \equiv 0 \pmod{4}$, then $\omega_2(2^{\frac{s}{2}}-1) = \omega_2(3 \cdot 2^{\frac{s}{2}-1}-1) = \omega_2(2^{\frac{s}{2}+1}-3) = \frac{s}{2} \equiv 0 \pmod{2}$. This implies that $B_{(0,s)} \cap A = C_{2^{\frac{s}{2}}-1}^{(2,n)} \cup C_{3 \cdot 2^{\frac{s}{2}-1}-1}^{(2,n)} \cup C_{2^{\frac{s}{2}+1}-3}^{(2,n)}$ and $B_{(1,s)} \cap A = \emptyset$.
\end{proof}

 \begin{lemma}\label{lem-8}
Let $s \geq 6$ be even. Then we have the following conclusions. 
\begin{enumerate}
    \item $\{h : 1 \leq h \leq  2^{\frac{s}{2}+1} -2\} \subseteq D_{(1,s)}$.
    
    \item $\{n - h : 1 \leq h \leq 2^{\frac{s}{2}} - 2\} \subseteq D_{(0,s)}$.
    
    \item If $s \equiv 2 \pmod{4}$, then $|D_{(1,s)}| = 2^{s-1} - 2 + \frac{3s}{2}$ and $|D_{(0,s)}| = 2^{s-1} - \frac{3s}{2}$.
    
    \item If $s \equiv 0 \pmod{4}$, then $|D_{(1,s)}| = 2^{s-1} + \frac{s}{2}$ and $|D_{(0,s)}| = 2^{s-1} - 2 - \frac{s}{2}$.
\end{enumerate}
\end{lemma}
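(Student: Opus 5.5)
The plan is to verify items 1) and 2) by membership arguments, and items 3) and 4) by a coset count that uses Lemmas \ref{lem-3}, \ref{lem-6}, \ref{lem-5} and \ref{lem-7}.

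Item 1) is immediate, since $A\subseteq D_{(1,s)}$ in both cases and $h\in C_h^{(2,n)}\subseteq A$ for $1\le h\le 2^{\frac{s}{2}+1}-2$.

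For item 2), fix $1\le h\le 2^{\frac{s}{2}}-2$ and show $n-h\notin D_{(1,s)}$. Let $j$ be the parity that appears in $D_{(1,s)}$: $j=0$ if $s\equiv 2\pmod 4$ and $j=1$ if $s\equiv 0\pmod 4$.
\begin{itemize}
\item \emph{The term $R_{(j,s)}\setminus B_{(j,s)}$.} By Lemma \ref{lem-7}(1), $\omega_2(n-h)\equiv\omega_2(h)\pmod 2$. If $\omega_2(h)\not\equiv j$, then $n-h\notin R_{(j,s)}$. If $\omega_2(h)\equiv j$, then $n-h\in C_{n-h}^{(2,n)}\subseteq B_{(j,s)}$. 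Either way $n-h\notin R_{(j,s)}\setminus B_{(j,s)}$.
\item \emph{The term $A$.} Suppose $n-h\in A$. Then $n-h\in C_i^{(2,n)}$ for some $1\le i\le 2^{\frac{s}{2}+1}-2$, which means $i+h2^k\equiv 0\pmod n$ for some $k$. This is exactly Equation \eqref{eqn-1}. The proof of Lemma \ref{lem-5} lists all its solutions, and the $h$-values there are $2^{\frac{s}{2}}-1$, $3\cdot 2^{\frac{s}{2}-1}-1$, $2^{\frac{s}{2}+1}-3$ and $2^{\frac{s}{2}+1}-2$. All of these exceed $2^{\frac{s}{2}}-2$, a contradiction.
\end{itemize}

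For items 3) and 4), I would compute
\[
|D_{(1,s)}|=|R_{(j,s)}|-|B_{(j,s)}|+|A|-|A\cap R_{(j,s)}|+|A\cap B_{(j,s)}|,
\]
using $B_{(j,s)}\subseteq R_{(j,s)}$. Then $|D_{(0,s)}|=n-1-|D_{(1,s)}|$. The terms are obtained as follows.
\begin{itemize}
\item \emph{$|R_{(j,s)}|$} comes from Lemma \ref{lem-4}.
\item \emph{$|A|$.} Every even $h$ in the range lies in the coset of an odd number below it. So $A$ is the disjoint union of $C_i^{(2,n)}$ over odd $1\le i\le 2^{\frac{s}{2}+1}-3$. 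By Lemma \ref{lem-3} these are $2^{\frac{s}{2}}-1$ distinct cosets of size $s$, except $i=2^{\frac{s}{2}}+1$ of size $\frac{s}{2}$. Hence $|A|=(2^{\frac{s}{2}}-2)s+\frac{s}{2}$.
\item \emph{$|B_{(j,s)}|$.} Since $C_{n-h}^{(2,n)}=-C_h^{(2,n)}$ has the same size, $B_{(j,s)}$ is the disjoint union of $-C_i^{(2,n)}$ over odd $i\le 2^{\frac{s}{2}+1}-3$ with $\omega_2(i)\equiv j$.
\item \emph{$|A\cap R_{(j,s)}|$.} This is the union of the $C_i^{(2,n)}$ over the same set of $i$, so it has the same size as $B_{(j,s)}$.
\item \emph{Counting these $i$.} Lemma \ref{lem-6} gives $2^{\frac{s}{2}-1}$ odd numbers of each parity up to $2^{\frac{s}{2}+1}-1$. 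Remove the excluded number $2^{\frac{s}{2}+1}-1$, which has weight $\frac{s}{2}+1$. Note that the short coset $2^{\frac{s}{2}}+1$ has even weight.
\item \emph{$|A\cap B_{(j,s)}|$} is $0$, by Lemma \ref{lem-7}, in both cases of interest.
\end{itemize}

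For $s\equiv 2\pmod 4$ (so $j=0$), $\frac{s}{2}+1$ is even. So there are $2^{\frac{s}{2}-1}-1$ odd $i$ with even weight, one of which is $2^{\frac{s}{2}}+1$. This gives $|B_{(0,s)}|=|A\cap R_{(0,s)}|=(2^{\frac{s}{2}-1}-2)s+\frac{s}{2}$. Therefore
\[
|D_{(1,s)}|=2^{s-1}-2-2|B_{(0,s)}|+|A|=2^{s-1}-2+\tfrac{3s}{2}.
\]
For $s\equiv 0\pmod 4$ (so $j=1$), $\frac{s}{2}+1$ is odd. So there are $2^{\frac{s}{2}-1}-1$ odd $i$ with odd weight, all giving cosets of size $s$, and $|B_{(1,s)}|=(2^{\frac{s}{2}-1}-1)s$. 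Then
\[
|D_{(1,s)}|=2^{s-1}-2(2^{\frac{s}{2}-1}-1)s+(2^{\frac{s}{2}}-2)s+\tfrac{s}{2}=2^{s-1}+\tfrac{s}{2}.
\]

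The main obstacle is the bookkeeping in this count. One must track which parity class contains the excluded endpoint $2^{\frac{s}{2}+1}-1$ and which contains the short coset $C_{2^{\frac{s}{2}}+1}^{(2,n)}$. One must also justify that the cosets $-C_i^{(2,n)}$ with $i$ odd are pairwise distinct; this follows because $C_i^{(2,n)}\mapsto -C_i^{(2,n)}$ is a bijection on cosets.
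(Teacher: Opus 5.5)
Your proposal is correct and follows essentially the paper's route: the parity split on $\omega_2(h)$ together with Lemma~\ref{lem-5} gives item 2, and the same inclusion--exclusion with Lemmas~\ref{lem-3}, \ref{lem-4}, \ref{lem-6}, \ref{lem-7} gives items 3 and 4, with your identity $|B_{(j,s)}|=|A\cap R_{(j,s)}|$ and your tracking of $2^{\frac{s}{2}+1}-1$ and $2^{\frac{s}{2}}+1$ matching the paper's count exactly. One small caveat: the solution list in the proof of Lemma~\ref{lem-5} is not quite complete, since at $k=\frac{s}{2}$ the equation $i+h\cdot 2^{\frac{s}{2}}=2n$ does have the solution $i=h=2^{\frac{s}{2}+1}-2$; this $h$-value is already on your list, so your exclusion of $n-h$ from $A$ is unaffected.
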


 \begin{proof}
 1)   Observe that $\{ h : 1 \leq h \leq 2^{\frac{s}{2}+1} - 2\} \subseteq A \subseteq D_{(1,s)}$.
 
 2) Let $1 \leq h \leq 2^{\frac{s}{2}} - 2$, Note that $n - h \in D_{(0,s)}$ if and only if $n - h \notin D_{(1,s)}$.  It is obvious that  $n - h \notin A$ for each $1 \leq h \leq 2^{\frac{s}{2}} - 2$ by Lemma \ref{lem-5}. Hence, it suffices to prove that $n - h \notin (R_{(i,s)} \setminus B_{(i,s)})$ for each $1 \leq h \leq 2^{\frac{s}{2}} - 2$ and $i \in \{0,1\}$. 

Case 1: If $\omega_2(h) \equiv i \pmod{2}$, then $n - h \in B_{(i,s)}$. Therefore, $n - h \notin (R_{(i,s)} \setminus B_{(i,s)})$.

Case 2: If $\omega_2(h) \equiv i + 1 \pmod{2}$, then $n - h \in R_{(i \oplus 1,s)}$, where $\oplus$ denotes the modulo 2 addition. Note that $R_{(i,s)} \cap R_{(i \oplus 1,s)} = \emptyset$, then $n - h \notin (R_{(i,s)} \setminus B_{(i,s)})$.

Therefore, the desired conclusion is obtained from the above discussion. 

3) Let $s \equiv 2 \pmod{4}$. By Lemma \ref{lem-7}, we have $B_{(0,s)} \subseteq R_{(0,s)}$ and $B_{(0,s)} \cap A = \emptyset$. Thus,
\begin{eqnarray}\label{eqn-S}
\nonumber & &(R_{(0,s)} \setminus B_{(0,s)}) \cap A \\
 \nonumber        &=& (R_{(0,s)} \cap A ) \setminus (B_{(0,s)} \cap A ) \\
 \nonumber        &=& R_{(0,s)} \cap A\\
          &=& \bigcup_{\substack{h=1 \\ \omega_2(h) \equiv 0 \pmod{2}}}^{2^{\frac{s}{2}+1} - 2} C_h^{(2,n)}.
\end{eqnarray}
By Lemma \ref{lem-3}, every integer $h$ satisfying $1 \leq h \leq 2^{\frac{s}{2}+1} - 2$ and $h \equiv 1 \pmod{2}$ is a coset leader and $|C_h^{(2,n)}| = s$ if $h \neq 2^{\frac{s}{2}} + 1$. For each $h$ that satisfies $1 \leq h \leq 2^{\frac{s}{2}+1} - 2$ and $h \not\equiv 1 \pmod{2}$,  there exists $h'<h$ such that  $h' \equiv 1 \pmod{2}$ and $ h \in C_{h'}^{(2,n)}$. Observe that $\omega_2(2^{\frac{s}{2}} + 1) =  2$, $|C_{2^{\frac{s}{2}} + 1}^{(2,n)}| = \frac{s}{2}$ and $\omega_2(2^{\frac{s}{2}+1} - 1) \equiv 0 \pmod{2}$. By Equation (\ref{eqn-S}) and Lemma \ref{lem-6}, we then have
\begin{eqnarray*}
\nonumber & & |(R_{(0,s)} \setminus B_{(0,s)}) \cap A| \\
          &=&  s \cdot \left|\left\{1 \leq h \leq 2^{\frac{s}{2}+1} - 2: h \equiv 1 \pmod{2},   \omega_2(h) \equiv 0 \pmod{2}\right\}\right| - \frac{s}{2}\\
          &=&  s \cdot \left|\left\{1 \leq h \leq 2^{\frac{s}{2}+1} - 1: h \equiv 1 \pmod{2}, \omega_2(h) \equiv 0 \pmod{2}\right\}\right|-s- \frac{s}{2}\\
          &=& s \cdot 2^{\frac{s}{2}-1} - \frac{3s}{2}
\end{eqnarray*}
and
\begin{eqnarray}\label{A}
|A| = s \cdot (2^{\frac{s}{2}} - 1) - \frac{s}{2} = s \cdot 2^{\frac{s}{2}} - \frac{3s}{2}.
\end{eqnarray}
 Note that $C_{n-h_1}^{(2,n)} = C_{n-h_2}^{(2,n)}$ if and only if $C_{h_1}^{(2,n)} = C_{h_2}^{(2,n)}$ and $|C_{n-h}^{(2,n)}| = |C_h^{(2,n)}|$. Then
\begin{eqnarray*}
 \nonumber |B_{(0,s)}| &=& \left| \bigcup_{\substack{h=1 \\ \omega_2(h) \equiv 0 \pmod{2}}}^{2^{\frac{s}{2}+1} - 2} C_h^{(2,n)} \right|\\
                       &=& |(R_{(0,s)} \setminus B_{(0,s)}) \cap A|.
\end{eqnarray*}
Therefore, we have
\begin{eqnarray*}
 \nonumber |D_{(1,s)}| &=& |(R_{(0,s)} \setminus B_{(0,s)}) \cup A| \\
   &=& |R_{(0,s)} \setminus B_{(0,s)}| + |A| - |(R_{(0,s)} \setminus B_{(0,s)}) \cap A|\\
   &=& |R_{(0,s)}| + |A| - |B_{(0,s)}| - |(R_{(0,s)} \setminus B_{(0,s)}) \cap A|\\
   &=& |R_{(0,s)}| + \frac{3s}{2} \\
   &=& 2^{s-1} - 2 + \frac{3s}{2}
\end{eqnarray*}
by Lemma \ref{lem-4}.
In addition, we have $|D_{(0,s)}| = n - 1 - |D_{(1,s)}| = 2^{s-1} - \frac{3s}{2}$.

4) Let $s \equiv 0 \pmod{4}$. $B_{(1,s)} \subseteq R_{(1,s)}$ and $B_{(1,s)} \cap A = \emptyset$  by Lemma \ref{lem-7}. Then we have
\begin{eqnarray*}
\nonumber & &(R_{(1,s)} \setminus B_{(1,s)}) \cap A \\
          &=& (R_{(1,s)} \cap A ) \setminus (B_{(1,s)} \cap A ) \\
          &=& R_{(1,s)} \cap A\\
          &=& \bigcup_{\substack{h=1 \\ \omega_2(h) \equiv 1 \pmod{2}}}^{2^{\frac{s}{2}+1} - 2} C_h^{(2,n)}.
\end{eqnarray*}
Observe that $\omega_2(2^{\frac{s}{2}+1} - 1) \equiv 1 \pmod{2}$ and  $\omega_2(2^{\frac{s}{2}} + 1) \equiv 0 \pmod{2}$. By Lemmas \ref{lem-3} and \ref{lem-6}, we then have
\begin{eqnarray*}
\nonumber & & |R_{(1,s)} \cap A|=|B_{(1,s)}| \\
&=&\left| \bigcup_{\substack{h=1 \\ \omega_2(h) \equiv 1 \pmod{2}}}^{2^{\frac{s}{2}+1} - 2} C_h^{(2,n)} \right|\\
          &=&  s \cdot |\{1 \leq h\leq 2^{\frac{s}{2}+1} - 2: h  \equiv 1 \pmod{2}, \omega_2(h) \equiv 1 \pmod{2}\}| \\
          &=&  s \cdot |\{1 \leq h \leq 2^{\frac{s}{2}+1} - 1: h \equiv 1 \pmod{2},\omega_2(h) \equiv 1 \pmod{2}\}|-s\\
          &=& s \cdot 2^{\frac{s}{2}-1} - s.
\end{eqnarray*}
Therefore, we have
\begin{eqnarray*}
 \nonumber |D_{(1,s)}| &=& |(R_{(1,s)} \setminus B_{(1,s)}) \cup A| \\
   &=& |R_{(1,s)} \setminus B_{(1,s)}| + |A| - |(R_{(1,s)} \setminus B_{(1,s)}) \cap A|\\
   &=& |R_{(1,s)}| + |A| - |B_{(1,s)}| - |R_{(1,s)} \cap A|\\
   &=& 2^{s-1} + \frac{s}{2}
\end{eqnarray*}
by Lemma \ref{lem-4} and Equation (\ref{A}).
In addition, we have $|D_{(0,s)}| = n - 1 - |D_{(1,s)}| = 2^{s-1} - 2 - \frac{s}{2}$.
 \end{proof}
 
The parameters of $\mathcal{C}_{(i,s)}$ are characterized as follows.
 
\begin{theorem}\label{th-1}
Let $s \geq 6$ be even. Then the following results are obtained.
\begin{enumerate}
    \item If $s \equiv 2 \pmod{4}$, then the code $\mathcal{C}_{(1,s)}$ has parameters
    $
    \left[ 2^s - 1, 2^{s-1} + 1 - \frac{3s}{2}, d \geq 2^{\frac{s}{2} + 1 } - 1 \right],
   $
    and  the code $\mathcal{C}_{(0,s)}$ has parameters
    $
    \left[ 2^s - 1, 2^{s-1} - 1 + \frac{3s}{2}, d \geq 2^{\frac{s}{2}} - 1 \right].
    $
    
    \item If $s \equiv 0 \pmod{4}$, then the code $\mathcal{C}_{(1,s)}$ has parameters
    $
    \left[ 2^s - 1, 2^{s-1} - 1 -\frac{s}{2}, d \geq  2^{\frac{s}{2} + 1} - 1 \right],
   $
    and the code $\mathcal{C}_{(0,s)}$ has parameters
   $
    \left[ 2^s - 1, 2^{s-1} + 1 + \frac{s}{2}, d \geq 2^{\frac{s}{2}} - 1 \right].
    $
\end{enumerate}
\end{theorem}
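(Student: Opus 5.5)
The proof splits into the dimensions, which come from Lemma \ref{lem-8}, and the distance bounds, which come from the BCH bound.

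\textbf{Dimensions.} Recall that a cyclic code with defining set $T$ has dimension $n-|T|$. Note that $0\notin D_{(i,s)}$ for either $i$, since both sets lie in $\{1,\dots,n-1\}$.

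For $s\equiv 2\pmod 4$, Lemma \ref{lem-8} 3) gives $|D_{(1,s)}|=2^{s-1}-2+\frac{3s}{2}$. Hence
\[
\dim\mathcal{C}_{(1,s)}=2^s-1-2^{s-1}+2-\tfrac{3s}{2}=2^{s-1}+1-\tfrac{3s}{2}.
\]
Likewise $|D_{(0,s)}|=2^{s-1}-\frac{3s}{2}$, so
\[
\dim\mathcal{C}_{(0,s)}=2^{s-1}-1+\tfrac{3s}{2}.
\]

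For $s\equiv 0\pmod 4$, Lemma \ref{lem-8} 4) gives $|D_{(1,s)}|=2^{s-1}+\frac{s}{2}$ and $|D_{(0,s)}|=2^{s-1}-2-\frac{s}{2}$. These yield the dimensions $2^{s-1}-1-\frac{s}{2}$ and $2^{s-1}+1+\frac{s}{2}$, respectively.

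\textbf{Distance of $\mathcal{C}_{(1,s)}$.} By Lemma \ref{lem-8} 1), the consecutive run $\{1,2,\dots,2^{\frac{s}{2}+1}-2\}$ lies in $D_{(1,s)}$. Apply the BCH bound with $b=1$ and $\delta-2=2^{\frac{s}{2}+1}-3$, i.e. $\delta=2^{\frac{s}{2}+1}-1$. This gives $d(\mathcal{C}_{(1,s)})\ge 2^{\frac{s}{2}+1}-1$ in both residue classes of $s$ modulo $4$.

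\textbf{Distance of $\mathcal{C}_{(0,s)}$.} By Lemma \ref{lem-8} 2), the set $\{n-h:1\le h\le 2^{\frac{s}{2}}-2\}$ lies in $D_{(0,s)}$. This is the run of consecutive integers from $n-2^{\frac{s}{2}}+2$ to $n-1$, of length $2^{\frac{s}{2}}-2$. Take $b=n-2^{\frac{s}{2}}+2$ and $\delta=2^{\frac{s}{2}}$ in the BCH bound. This would give $d\ge 2^{\frac{s}{2}}$, which is stronger than the claimed $2^{\frac{s}{2}}-1$. The weaker stated bound therefore follows a fortiori.

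I would double-check the run length, since the statement's bound suggests the authors counted one element fewer. In any case, the run of $2^{\frac{s}{2}}-2$ consecutive elements already guarantees at least $2^{\frac{s}{2}}-1$.

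\textbf{Remaining checks.} The main point to confirm is that the lemmas apply verbatim in both cases $s\equiv 0,2\pmod 4$. Lemma \ref{lem-8} 1) and 2) are stated without any case split, so the distance bounds hold uniformly in $s$. The only case-dependent input is the cardinalities, which Lemma \ref{lem-8} 3) and 4) supply.

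No genuine obstacle remains: all the combinatorial work (Lemmas \ref{lem-5}–\ref{lem-8}) has been done, and the theorem is the assembly of these counts with the BCH bound.
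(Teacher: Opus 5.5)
Your proof is the paper's proof: Lemma \ref{lem-8} assembled with the BCH bound. Your dimension computations and the bound for $\mathcal{C}_{(1,s)}$ are correct.

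Your aside on $\mathcal{C}_{(0,s)}$, however, has an off-by-one error, and your suggestion that the authors miscounted is misplaced. With $b=n-2^{\frac{s}{2}}+2$ and $\delta=2^{\frac{s}{2}}$, the BCH window $\{b,\dots,b+\delta-2\}$ ends at $b+\delta-2=n\equiv 0 \pmod{n}$. You would therefore need $0\in D_{(0,s)}$, which you yourself noted is false. A run of $L$ consecutive elements of the defining set gives $\delta=L+1$. You used exactly this for $\mathcal{C}_{(1,s)}$, where $L=2^{\frac{s}{2}+1}-2$ gave $2^{\frac{s}{2}+1}-1$. Here $L=2^{\frac{s}{2}}-2$ gives $\delta=2^{\frac{s}{2}}-1$, which is what the paper states.

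Nor can the run be extended on the left. Since $n-(2^{\frac{s}{2}}-1)=2^{\frac{s}{2}}(2^{\frac{s}{2}}-1)$, this element lies in $C_{2^{\frac{s}{2}}-1}^{(2,n)}\subseteq A\subseteq D_{(1,s)}$, so it is not in $D_{(0,s)}$. In fact the stronger bound $d\ge 2^{\frac{s}{2}}$ is false. For $s=6$ the paper's Magma example gives $\mathcal{C}_{(0,6)}$ parameters $[63,40,7]$, and $7<8$.

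So drop the \emph{a fortiori} sentence and the remark about the authors' count. Your closing sentence, ``the run of $2^{\frac{s}{2}}-2$ consecutive elements already guarantees at least $2^{\frac{s}{2}}-1$'', is the correct argument. With that fix, your proof is complete and coincides with the paper's.
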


\begin{proof}
By 1) and 2) of Lemma \ref{lem-8}, the defining set $D_{(1,s)}$ of the code $\mathcal{C}_{(1,s)}$ contains the set $\{h : 1 \leq h \leq  2^{\frac{s}{2}+1} -2\}$, while the defining set $D_{(0,s)}$ of  $\mathcal{C}_{(0,s)}$  contains the set $\{n - h : 1 \leq h \leq 2^{\frac{s}{2}} - 2\}$. The BCH bound implies that
$
d(\mathcal{C}_{(1,s)}) \geq 2^{\frac{s}{2} + 1 } - 1 \quad \text{and} \quad d(\mathcal{C}_{(0,s)}) \geq 2^{\frac{s}{2}} - 1,
$
respectively. Note that $\dim(\mathcal{C}_{(i,s)}) = n - |D_{(i,s)}|$ for $i \in \{0,1\}$, then the dimensions of $\mathcal{C}_{(1,s)}$ and $\mathcal{C}_{(0,s)}$  follow from Lemma \ref{lem-8}. 
\end{proof}

\begin{theorem}\label{th-2}
Let $s \geq 6$ be even. Then the following conclusions are obtained.
\begin{enumerate}
    \item If $s \equiv 2 \pmod{4}$, then  the code $\overline{\mathcal{C}_{(1,s)}}$ has parameters
    $
    \left[ 2^s, 2^{s-1} + 1 - \frac{3s}{2}, d \geq 2^{\frac{s}{2} + 1 }  \right],
    $
    and the code $\overline{\mathcal{C}_{(0,s)}}$ has parameters
    $
    \left[ 2^s - 1, 2^{s-1} - 1 - \frac{3s}{2}, d \geq 2^{\frac{s}{2}} \right].
    $
    
    \item If $s \equiv 0 \pmod{4}$, then  the code $\overline{\mathcal{C}_{(1,s)}}$ has parameters
    $
    \left[ 2^s, 2^{s-1} - 1 -\frac{s}{2}, d \geq  2^{\frac{s}{2} + 1}  \right],
    $
    and  the code $\overline{\mathcal{C}_{(0,s)}}$ has parameters
    $
    \left[ 2^s, 2^{s-1} + 1 + \frac{s}{2}, d \geq 2^{\frac{s}{2}}  \right].
    $
\end{enumerate}
\end{theorem}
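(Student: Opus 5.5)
The plan is to read everything off Theorem \ref{th-1}, using the elementary facts about extended codes recalled in the Preliminaries. Let $\mathcal{C}$ denote either $\mathcal{C}_{(1,s)}$ or $\mathcal{C}_{(0,s)}$, with parameters $[n,k,d]$ given by Theorem \ref{th-1}.

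First, the extended code $\overline{\mathcal{C}}$ has length $n+1=2^s$. Its dimension is still $k$, because the map $\mathbf{c}\mapsto(\mathbf{c},\sum_{i}c_i)$ is an injective linear map from $\mathcal{C}$ onto $\overline{\mathcal{C}}$. So the lengths and dimensions in Theorem \ref{th-2} are exactly those of Theorem \ref{th-1} with the length increased by one.

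I would therefore read the second code in part 1) as $\left[2^s,\,2^{s-1}-1+\frac{3s}{2},\,d\geq 2^{\frac{s}{2}}\right]$. This matches $\mathcal{C}_{(0,s)}$ in Theorem \ref{th-1}. The printed length $2^s-1$ and dimension $2^{s-1}-1-\frac{3s}{2}$ look like typographical slips.

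The only point needing care is the distance, since we know only a lower bound on $d$ and not its exact value or parity. I would not rely on the case split "$d_1=d$ or $d+1$". Instead I would argue directly with parity:
\begin{itemize}
\item Every codeword of $\overline{\mathcal{C}}$ has even weight, by the defining parity condition $\sum_{i=0}^{n}c_i=0$.
\item A nonzero codeword $(\mathbf{c},c_n)$ has weight at least $\mathrm{wt}(\mathbf{c})\geq d$, where $d$ is at least the BCH bound from Theorem \ref{th-1}.
\end{itemize}
For $\mathcal{C}_{(1,s)}$ the bound is $d\geq 2^{\frac{s}{2}+1}-1$, which is odd. An even integer that is at least $2^{\frac{s}{2}+1}-1$ is at least $2^{\frac{s}{2}+1}$, so $d(\overline{\mathcal{C}_{(1,s)}})\geq 2^{\frac{s}{2}+1}$. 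The same argument applied to $d\geq 2^{\frac{s}{2}}-1$ gives $d(\overline{\mathcal{C}_{(0,s)}})\geq 2^{\frac{s}{2}}$. This holds in both residue classes of $s$ modulo $4$.

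There is no real obstacle here. The only care needed is to state the distance step through this parity argument, since the true $d$ might exceed the BCH bound and have either parity. We must also record the corrected parameters for $\overline{\mathcal{C}_{(0,s)}}$ when $s\equiv 2\pmod 4$.
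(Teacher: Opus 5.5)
Your proposal is correct and matches the paper's one-line proof, which likewise reads everything off Theorem \ref{th-1} and the definition of the extended code; your parity argument is an explicit form of the rule for $d_1$ recalled in the Preliminaries. You are also right that the printed parameters of $\overline{\mathcal{C}_{(0,s)}}$ for $s\equiv 2\pmod 4$ are a typo and should read $\left[2^s,\,2^{s-1}-1+\frac{3s}{2},\,d\geq 2^{\frac{s}{2}}\right]$.
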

\begin{proof}
  The desired results can be deduced from Theorem \ref{th-1} and the definition of the extended codes.
\end{proof}

\begin{theorem}\label{th-3}
Let $s \geq 6$ be even. Then the following conclusions are obtained.
\begin{enumerate}
    \item If $s \equiv 2 \pmod{4}$, then the dual code $\mathcal{C}^{\perp}_{(1,s)}$ has
parameters
    $
    \left[ 2^s - 1, 2^{s-1} +\frac{3s}{2} - 2, d^\perp \geq 2^{\frac{s}{2} }  \right],
    $
    and the dual code $\mathcal{C}^{\perp}_{(0,s)}$  has parameters

    $
    \left[ 2^s - 1, 2^{s-1} - \frac{3s}{2}, d^\perp \geq 2^{\frac{s}{2}+1}  \right].
    $
    
    \item If $s \equiv 0 \pmod{4}$, then the dual code $\mathcal{C}^{\perp}_{(1,s)}$ has
parameters
    $
    \left[ 2^s - 1, 2^{s-1} +  \frac{s}{2}, d^\perp \geq  2^{\frac{s}{2}}  \right],
    $
    and the dual code $\mathcal{C}^{\perp}_{(0,s)}$ has
parameters
    $
    \left[ 2^s - 1, 2^{s-1} -  \frac{s}{2} -2, d^\perp \geq 2^{\frac{s}{2}+1}  \right].
    $
\end{enumerate}
\end{theorem}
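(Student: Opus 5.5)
The plan is to read off the defining sets of the two dual codes from the rule that $\mathcal{C}^\perp$ has defining set $-(\mathbb{Z}_n\setminus T)$, and then apply the BCH bound to a run of consecutive integers containing $0$. The dimensions come from $\dim(\mathcal{C}^\perp)=n-\dim(\mathcal{C})$ together with Theorem \ref{th-1}.

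First, I determine the complements. Since $D_{(0,s)}=\{1,\dots,n-1\}\setminus D_{(1,s)}$ and $0$ lies in neither set, we have
\[
\mathbb{Z}_n\setminus D_{(1,s)}=D_{(0,s)}\cup\{0\}, \qquad \mathbb{Z}_n\setminus D_{(0,s)}=D_{(1,s)}\cup\{0\}.
\]
Hence the defining set of $\mathcal{C}^\perp_{(1,s)}$ is $\{0\}\cup(-D_{(0,s)})$, and the defining set of $\mathcal{C}^\perp_{(0,s)}$ is $\{0\}\cup(-D_{(1,s)})$.

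Next, I apply Lemma \ref{lem-8}. By 2) of Lemma \ref{lem-8}, $D_{(0,s)}\supseteq\{n-h:1\le h\le 2^{\frac{s}{2}}-2\}$. Negating, the defining set of $\mathcal{C}^\perp_{(1,s)}$ contains $\{0,1,\dots,2^{\frac{s}{2}}-2\}$. This is a run of $2^{\frac{s}{2}}-1$ consecutive integers, so the BCH bound with $b=0$ and $\delta=2^{\frac{s}{2}}$ gives $d^\perp\ge 2^{\frac{s}{2}}$.

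Similarly, by 1) of Lemma \ref{lem-8}, $-D_{(1,s)}\supseteq\{n-h:1\le h\le 2^{\frac{s}{2}+1}-2\}$. Adding $0\equiv n$, the defining set of $\mathcal{C}^\perp_{(0,s)}$ contains the integers from $n-(2^{\frac{s}{2}+1}-2)$ to $n$ taken modulo $n$. These are $2^{\frac{s}{2}+1}-1$ consecutive residues, so the BCH bound with $b=n-2^{\frac{s}{2}+1}+2$ and $\delta=2^{\frac{s}{2}+1}$ gives $d^\perp\ge 2^{\frac{s}{2}+1}$. These bounds hold for both congruence classes of $s$.

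Finally, I compute the dimensions as $n$ minus the dimensions in Theorem \ref{th-1}. Equivalently, the dimension of a dual code is $n-|D_{(i,s)}|-1$ computed from the other code's defining set. For example, when $s\equiv 2\pmod 4$,
\[
2^s-1-\left(2^{s-1}+1-\tfrac{3s}{2}\right)=2^{s-1}-2+\tfrac{3s}{2},
\]
and the other three cases are the same arithmetic using 3) and 4) of Lemma \ref{lem-8}.

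There is no real obstacle. The only delicate point is remembering that $0$ belongs to $\mathbb{Z}_n\setminus D_{(i,s)}$: this is exactly what extends each run of consecutive integers by one and improves the bounds from $2^{\frac{s}{2}}-1$ and $2^{\frac{s}{2}+1}-1$ to $2^{\frac{s}{2}}$ and $2^{\frac{s}{2}+1}$.
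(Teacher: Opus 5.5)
Your proof is correct and follows the paper's argument. Both get the dimensions from $\dim\mathcal{C}+\dim\mathcal{C}^\perp=n$ and Theorem~\ref{th-1}, and both get the dual distances by applying the BCH bound to the runs $\{0,1,\dots,2^{\frac{s}{2}}-2\}$ and $\{n-h: 0\le h\le 2^{\frac{s}{2}+1}-2\}$, which come from parts 2) and 1) of Lemma~\ref{lem-8}. The paper uses without comment that $0\notin D_{(0,s)}\cup D_{(1,s)}$, and your proof states it explicitly.
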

\begin{proof}
Since $\dim(\mathcal{C}_{(i,s)}) + \dim(\mathcal{C}^{\perp}_{(i,s)}) = n $, the  dimension of $\dim(\mathcal{C}_{(i,s)})$ for each $i\in \{0,1\}$ follows from Theorem \ref{th-1}. 
The defining set of $ \mathcal{C}^{\perp}_{(i,s)}$ equals  $\mathbb{Z}_{n} \setminus (-D_{(i,s)})$. The remaining proofs proceed by considering the following cases.

Case 1: If $i =1$.  We deduce the set $\{h : 0 \leq h \leq 2^{\frac{s}{2}} - 2\} \subseteq \mathbb{Z}_{n} \setminus (-D_{(1,s)})$ by 2) of Lemma \ref{lem-8}.
 By the BCH bound,  the desired lower bound on  $d(\mathcal{C}^{\perp}_{(1,m)})$ is obtained.
 
Case 2: If $i =0$.  We deduce the set $\{n-h : 0 \leq h \leq 2^{\frac{s}{2}+1} - 2\} \subseteq \mathbb{Z}_{n} \setminus (-D_{(0,s)})$ by 1) of Lemma \ref{lem-8}.
 By the BCH bound,  the desired lower bound on $d(\mathcal{C}^{\perp}_{(0,s)})$ is obtained.
\end{proof}

\begin{example}
We remark that the bounds on the minimum distances of the codes in this subsection are tight in some cases. Now we give some examples by Magma. 
  When $s = 6$, then $n = 2^6 - 1 = 63$. 
Then the binary cyclic code $\mathcal{C}_{(1,s)}$ has parameters $\left[63, 24, 15 \right]$, where the minimum distance is very close to that of the best known $[63,24,16]$ binary linear code according to the Database\cite{Grassl}. The extended code $\overline{\mathcal{C}_{(1,s)}}$ has parameters $\left[64, 24, 16 \right]$ which are the same as the best-known parameters according to the Database\cite{Grassl}. The binary cyclic code  $\mathcal{C}_{(0,s)}$ has parameters $\left[63, 40, 7 \right]$,   where the minimum distance is very close to that of the best known $[63, 40, 8]$ binary linear code according to the Database\cite{Grassl}. 
The extended code $\overline{\mathcal{C}_{(0,s)}}$ has parameters $\left[64, 40, 8 \right]$, where the minimum distance is very close to that of the best known $[64,40,9]$ binary linear code according to the Database\cite{Grassl}. The binary cyclic code $\mathcal{C}^{\perp}_{(1,s)}$ has parameters $\left[63, 39, 8 \right]$, where the minimum distance is very close to that of the best known $[63,39,9]$ binary linear code according to the Database\cite{Grassl}.  The code $\mathcal{C}^{\perp}_{(0,s)}$ has parameters $\left[63, 23, 16 \right]$ which are the same as the best-known parameters according to the Database\cite{Grassl}.
\end{example}

 \subsection{The second construction of binary cyclic codes with  $ d \cdot d^\perp \approx 2n$}
Let the notations follow the above. In the following, we modify the first construction. 
Let $s \equiv 2 \pmod{4}$, 
\[
D_{(1,s)}^\prime := \left[R_{(0,s)} \setminus \left(B_{(0,s)} \cup C_{5 \cdot 2^\frac{s}{2} +5}^{(2,n)} \cup  C_{3 \cdot 2^{\frac{s}{2}}-13}^{(2,n)}\right)\right] \cup A 
\]
and $D_{(0,s)}^\prime: = \{1,2,\cdots,n-1\} \setminus D_{(1,s)}^\prime$. Then both $D_{(0,s)}^\prime$ and $D_{(1,s)}^\prime$ consist of unions of certain $2$-cyclotomic cosets. 
Let $\mathcal{C}_{(i,s)}^\prime$ be the binary cyclic code of length $n$ with defining set $D_{(i,s)}^\prime$ for $i \in \{0,1\}$. The following lemmas are used to characterize the parameters of $\mathcal{C}_{(i,s)}^\prime$.
\begin{lemma}\label{lem-14}
  Let $s \equiv 2 \pmod{4}$ and $s \geq 10$. The following conclusions are obtained.
  \begin{enumerate}
    \item $C_{5 \cdot 2^\frac{s}{2} +5}^{(2,n)} \cup C_{3 \cdot 2^{\frac{s}{2}}-13}^{(2,n)} \subseteq R_{(0,s)}$.
    \item The coset leader of $C_{5 \cdot 2^\frac{s}{2} +5}^{(2,n)}$ is $5 \cdot 2^\frac{s}{2} +5$ and $\left|C_{5 \cdot 2^\frac{s}{2} +5}^{(2,n)}\right| = \frac{s}{2}$. The coset leader of $C_{3 \cdot 2^{\frac{s}{2}}-13}^{(2,n)}$ is $3 \cdot 2^{\frac{s}{2}}-13$ and $\left|C_{3 \cdot 2^{\frac{s}{2}}-13}^{(2,n)}\right| = s$.
    \item $\left(C_{5 \cdot 2^\frac{s}{2} +5}^{(2,n)} \cup C_{3 \cdot 2^{\frac{s}{2}}-13}^{(2,n)}\right) \cap A=\emptyset$.
    \item $\left(C_{5 \cdot 2^\frac{s}{2} +5}^{(2,n)} \cup C_{3 \cdot 2^{\frac{s}{2}}-13}^{(2,n)}\right) \cap  B_{(0,s)}=\emptyset$.
  \end{enumerate}
\end{lemma}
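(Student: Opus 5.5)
Let $m=s/2$; it is odd since $s\equiv 2 \pmod 4$, and $m\ge 5$. Throughout I will work with $s$-bit binary strings, where multiplying by $2$ modulo $n$ is a cyclic rotation. Write $a=5\cdot 2^m+5=5(2^m+1)$ and $b=3\cdot 2^m-13$.

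\emph{Part 1 (weights).} The binary expansion of $a$ is $101$ in the high half and $101$ in the low half, so $\omega_2(a)=4$. For $b$, note $3\cdot 2^m-13=2^{m+1}+(2^m-13)$. The number $2^m-13$ has $m$ bits, namely $m-4$ ones followed by $0011$, so its weight is $m-2$. Hence $\omega_2(b)=m-1$, which is even because $m$ is odd. Rotation preserves weight, so both cosets lie in $R_{(0,s)}$.

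\emph{Part 2 (leaders and sizes).} For $a$: since $a=5(2^m+1)$, we have $2^m a\equiv a \pmod n$, so $|C_a^{(2,n)}|$ divides $m$. It equals $m$ because the $m$-bit pattern $0\cdots0101$ is not periodic with a smaller period dividing the odd number $m$; a shorter period would force it to be constant. The rotations of this pattern are $2^{j}\cdot 5$ and $2^{m-2}+1\cdot$-type wraps, and $0\cdots0101$ is the smallest of them. So $a$ is the leader.

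For $b$: its $s$-bit string is, from the top, $0^{m-2}\,10\,1^{m-4}\,0011$. I must check it is the lexicographically smallest rotation, i.e.\ the minimum over cyclic shifts. The longest run of zeros is $0^{m-2}$, of length $m-2\ge 3$; the only other zero runs have length $1$ or $2$. A smallest rotation must start with the longest zero run, and that run is unique, so $b$ is the leader. Since the run is unique, the string has no nontrivial rotational symmetry, giving $|C_b^{(2,n)}|=s$. This step needs the most care in writing out the bit pattern, but it is routine.

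\emph{Part 3 (disjoint from $A$).} Cosets are disjoint or equal. Every coset in $A$ meets $[1,2^{m+1}-2]$, so its leader is at most $2^{m+1}-2$. Both $a$ and $b$ are leaders exceeding $2^{m+1}-2$: indeed $b=3\cdot2^m-13>2^{m+1}$ for $m\ge 4$. Hence neither coset lies in $A$.

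\emph{Part 4 (disjoint from $B_{(0,s)}$).} A coset in $B_{(0,s)}$ has the form $C_{n-h}^{(2,n)}$, and $C_a^{(2,n)}=C_{n-h}^{(2,n)}$ means $C_{n-a}^{(2,n)}=C_h^{(2,n)}$ for some $h\le 2^{m+1}-2$. So it suffices to show that the leaders of $C_{n-a}^{(2,n)}$ and $C_{n-b}^{(2,n)}$ exceed $2^{m+1}-2$. Equivalently, no rotation of $n-a$ or $n-b$ (the bitwise complements) has its top $m-1$ bits all zero. An $s$-bit string has a rotation with at most $m+1$ significant bits exactly when it contains a cyclic run of at least $m-1$ zeros. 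The complement of $a$ has zero runs of length $1$ only. The complement of $b$ is $1^{m-2}\,01\,0^{m-4}\,1100$, whose longest cyclic zero run is $m-4<m-1$. Hence both coset leaders are larger than $2^{m+1}-2$, which gives disjointness.

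The main obstacle is writing these bit strings correctly. Part 4 is the only real argument, and it reduces cleanly to the zero-run criterion above.
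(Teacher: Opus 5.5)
Your proposal is correct and follows essentially the paper's route: binary weights for part 1, the binary expansions for part 2, coset-leader comparison for part 3, and for part 4 your zero-run criterion on $n-a$ and $n-b$ is just the complemented form of the paper's upper bound on the largest element of each coset. One small slip: the longest cyclic zero run of $n-b$ is $\max(m-4,2)$ rather than $m-4$ (it equals $2$ when $s=10$), but this is still less than $m-1$, so the conclusion is unaffected.
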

\begin{proof}
  1) It is easy to verify that $3 \cdot 2^{\frac{s}{2}} - 13 = 2^{\frac{s}{2}+1} + \sum_{h=4}^{\frac{s}{2}-1} 2^{h} + 2+1$. Then $\omega_2(3 \cdot 2^{\frac{s}{2}} - 13)= \frac{s-2}{2}$.  Since $s \equiv 2 \pmod{4}$, $\omega_2(3 \cdot 2^{\frac{s}{2}} - 13) \equiv 0 \pmod{2}$. 
  It is clear that $\omega_2(5 \cdot 2^\frac{s}{2} +5) = 4 \equiv 0 \pmod{2}$. Then $C_{5 \cdot 2^\frac{s}{2} +5}^{(2,n)}  \cup C_{3 \cdot 2^{\frac{s}{2}}-13}^{(2,n)} \subseteq S_{(0,s)}$.
  
  2) By the  2-adic expansion of $5 \cdot 2^\frac{s}{2} +5$ and $3 \cdot 2^{\frac{s}{2}}-13$, it can be verified that they are coset leaders, with  $\left|C_{5 \cdot 2^\frac{s}{2} +5}^{(2,n)}\right| = \frac{s}{2}$ and $\left|C_{3 \cdot 2^{\frac{s}{2}}-13}\right| = s$.
  
  3) By Lemma 3, every integer $i$ with $1 \leq i \leq 2^{\frac{s}{2}+1} - 1$ and $i \equiv 1 \pmod{2}$ is a coset leader. Since $5 \cdot 2^\frac{s}{2} +5$ is a coset leader and $5 \cdot 2^\frac{s}{2} +5 > 2^{\frac{s}{2}+1} -1$, we deduce $C_{5 \cdot 2^\frac{s}{2} +5}^{(2,n)} \cap A=\emptyset$. Similarly, we deduce $C_{3 \cdot 2^{\frac{s}{2}}-13}^{(2,n)} \cap A =\emptyset$.
  
  4) If $C_{5 \cdot 2^\frac{s}{2} +5}^{(2,n)} \cap B_{(0,s)}\neq \emptyset$, then there exists a $1 \leq h \leq 2^{\frac{s}{2}+1} - 4$ and $0 \leq l \leq s - 1$ such that
\[
(5 \cdot 2^\frac{s}{2} +5) \cdot 2^l \bmod n = n - h.
\]
Note that $(5 \cdot 2^\frac{s}{2} +5) \cdot 2^l \bmod n \leq 2^{s-1} + 2^{s-3} + 2^{\frac{s}{2}-1} + 2^{\frac{s}{2}-3}$ for any $0 \leq l \leq s - 1$. When $1 \leq h \leq 2^{\frac{s}{2}+1} - 4$, then $n - h \geq 2^s - 2^{\frac{s}{2}+1} + 3$.
It follows that
\[
(5 \cdot 2^\frac{s}{2} +5) \cdot 2^l \bmod n < n - h
\]
for all $1 \leq h \leq 2^{\frac{s}{2}+1} - 4$ and $0 \leq l \leq s - 1$, which leads to a contradiction. Therefore, $C_{5 \cdot 2^\frac{s}{2} +5}^{(2,n)} \cap B_{(0,s)}= \emptyset$.
Similarly, we have $C_{3 \cdot 2^{\frac{s}{2}}-13}^{(2,n)} \cap B_{(0,s)}=\emptyset$.
\end{proof}

\begin{lemma}\label{lem-15}
Let $s \equiv 2 \pmod 4$ and $s \geq 10$. The following results hold.
\begin{enumerate}
    \item $\{h : 1 \leq h \leq  2^{\frac{s}{2}+1} -2\} \subseteq D_{(1,s)}^\prime$.
    
    \item $\{n - h : 1 \leq h \leq 2^{\frac{s}{2}} - 2\} \subseteq D_{(0,s)}^\prime$.
    
    \item $|D_{(1,s)}^\prime| = 2^{s-1} - 2 $ and $|D_{(0,s)}^\prime| = 2^{s-1} $.
\end{enumerate}
\end{lemma}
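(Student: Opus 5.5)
The plan is to reduce everything to Lemma \ref{lem-8} and Lemma \ref{lem-14}, since $D_{(1,s)}^\prime$ is obtained from $D_{(1,s)}=(R_{(0,s)}\setminus B_{(0,s)})\cup A$ (the case $s\equiv 2 \pmod 4$) by additionally deleting the two cosets
\[
E:=C_{5 \cdot 2^{\frac{s}{2}} +5}^{(2,n)} \cup C_{3 \cdot 2^{\frac{s}{2}}-13}^{(2,n)}
\]
from the $R_{(0,s)}$-part.

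For 1), I observe that $A\subseteq D_{(1,s)}^\prime$ by definition. Every $h$ with $1\le h\le 2^{\frac{s}{2}+1}-2$ lies in $C_h^{(2,n)}\subseteq A$, so 1) is immediate, exactly as in 1) of Lemma \ref{lem-8}.

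For 2), I show $n-h\notin D_{(1,s)}^\prime$ for $1\le h\le 2^{\frac{s}{2}}-2$. Since $R_{(0,s)}\setminus(B_{(0,s)}\cup E)\subseteq R_{(0,s)}\setminus B_{(0,s)}$, we have $D_{(1,s)}^\prime\subseteq D_{(1,s)}$. Hence $D_{(0,s)}\subseteq D_{(0,s)}^\prime$, and 2) follows from 2) of Lemma \ref{lem-8}.

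For 3), I write $D_{(1,s)}^\prime = D_{(1,s)}\setminus E$. To justify this, I use that $E\subseteq R_{(0,s)}$ and $E\cap B_{(0,s)}=\emptyset$ by 1) and 4) of Lemma \ref{lem-14}, so $E\subseteq R_{(0,s)}\setminus B_{(0,s)}$. Moreover $E\cap A=\emptyset$ by 3) of Lemma \ref{lem-14}, so removing $E$ from the first part removes it from the union as well. Consequently $E\subseteq D_{(1,s)}$ and
\[
|D_{(1,s)}^\prime| = |D_{(1,s)}| - |E|.
\]
By 2) of Lemma \ref{lem-14}, the two cosets are distinct, since they have distinct coset leaders, with sizes $\frac{s}{2}$ and $s$. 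Thus $|E|=\frac{3s}{2}$. Combining with 3) of Lemma \ref{lem-8} gives
\[
|D_{(1,s)}^\prime| = 2^{s-1}-2+\tfrac{3s}{2}-\tfrac{3s}{2} = 2^{s-1}-2,
\]
and then $|D_{(0,s)}^\prime| = n-1-|D_{(1,s)}^\prime| = 2^{s-1}$.

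The only delicate point is the disjointness $E\cap A=\emptyset$ and $E\cap B_{(0,s)}=\emptyset$, which guarantees that exactly $\frac{3s}{2}$ elements are removed. Both facts are already supplied by Lemma \ref{lem-14}, which also needs $s\ge 10$ so that the two cosets are distinct and well defined.
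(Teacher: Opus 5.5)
Your proposal is correct and takes essentially the same route as the paper: part 3 is exactly the paper's computation $|D_{(1,s)}^\prime| = |D_{(1,s)}| - \frac{s}{2} - s$ via Lemmas \ref{lem-8} and \ref{lem-14}. For part 2, your inclusion $D_{(1,s)}^\prime \subseteq D_{(1,s)}$, hence $D_{(0,s)} \subseteq D_{(0,s)}^\prime$, is just a cleaner way of carrying out the paper's ``same argument as Lemma \ref{lem-8}''. One small correction to your closing aside: $s \geq 10$ is needed not for distinctness of the two cosets, but because for $s=6$ one has $3\cdot 2^{3}-13 = 11 \in A$ and $5\cdot 2^{3}+5 = 45$ is not a coset leader (it lies in $C_{27}^{(2,63)}$), so Lemma \ref{lem-14} would fail.
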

\begin{proof}
  1)   It is clear that $\{h : 1 \leq h \leq 2^{\frac{s}{2}} - 2\} \subseteq A \subseteq D_{(0,s)}^\prime$.
  
  2) The proof follows the same argument as Lemma \ref{lem-8} and is therefore omitted.
  
  3) Lemma \ref{lem-14} implies that 
$$
D_{(1,s)}^\prime = \left[\left(R_{(0,s)} \setminus B_{(0,s)}\right) \cup A\right]\setminus \left(C_{5 \cdot 2^\frac{s}{2} +5}^{(2,n)} \cup C_{3 \cdot 2^{\frac{s}{2}}-13}^{(2,n)}\right).
$$
Then
 \begin{eqnarray*}
 |D_{(1,s)}^\prime | &=& \left|[(R_{(0,s)} \setminus B_{(0,s)}) \cup A] \right|- \left|C_{5 \cdot 2^\frac{s}{2} +5}^{(2,n)} \cup C_{3 \cdot 2^{\frac{s}{2}}-13}^{(2,n)}\right|\\
 &=&\left|D_{(1,s)}\right|-\left|C_{5 \cdot 2^\frac{s}{2} +5}^{(2,n)}\right|-\left|C_{3 \cdot 2^{\frac{s}{2}}-13}^{(2,n)}\right|\\
 &=&2^{s-1} - 2 
 \end{eqnarray*} 
 and $|D_{(0,s)}^\prime| =n-1-|D_{(1,s)}^\prime |= 2^{s-1} $
 by Lemmas \ref{lem-8} and \ref{lem-14}.
\end{proof}

\begin{theorem}\label{th-16}
Let $s \geq 10$ and $s \equiv 2 \pmod{4}$. Then we derive the following results.
\begin{enumerate}

\item The binary cyclic code $\mathcal{C}_{(1,s)}^\prime$ has parameters
    $
    \left[ 2^s - 1, 2^{s-1} + 1 , d \geq 2^{\frac{s}{2} + 1 } -1 \right],
    $
    and the binary cyclic code $\mathcal{C}_{(0,s)}^\prime$ has parameters
    $
    \left[ 2^s - 1, 2^{s-1} - 1 , d \geq 2^{\frac{s}{2}} - 1 \right].
    $
\item The dual code $\mathcal{C}_{(1,s)}^{\prime \perp}$ has parameters
    $
    \left[ 2^s - 1, 2^{s-1} - 2 , d^\perp \geq 2^{\frac{s}{2}} \right],
    $
    and the dual code $\mathcal{C}_{(0,s)}^{\prime \perp} $ has parameters
    $
    \left[ 2^s - 1, 2^{s-1}, d^\perp \geq 2^{\frac{s}{2}+1}  \right].
    $
\end{enumerate}    
\end{theorem}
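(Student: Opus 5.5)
The proof will mirror Theorems \ref{th-1} and \ref{th-3}, with Lemma \ref{lem-15} taking the place of Lemma \ref{lem-8}.

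\emph{Dimensions.} Since $\mathcal{C}_{(i,s)}^\prime$ has defining set $D_{(i,s)}^\prime$, its dimension is $n-|D_{(i,s)}^\prime|$. Part 3) of Lemma \ref{lem-15} gives $|D_{(1,s)}^\prime| = 2^{s-1}-2$ and $|D_{(0,s)}^\prime| = 2^{s-1}$. Hence $\dim \mathcal{C}_{(1,s)}^\prime = 2^s-1-(2^{s-1}-2) = 2^{s-1}+1$ and $\dim \mathcal{C}_{(0,s)}^\prime = 2^{s-1}-1$. The dual dimensions follow from $\dim \mathcal{C} + \dim \mathcal{C}^\perp = n$, giving $2^{s-1}-2$ and $2^{s-1}$ respectively.

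\emph{Minimum distances.} Part 1) of Lemma \ref{lem-15} says $D_{(1,s)}^\prime$ contains the run of consecutive integers $1,2,\ldots,2^{\frac{s}{2}+1}-2$. This run has length $2^{\frac{s}{2}+1}-2$, so the BCH bound gives $d(\mathcal{C}_{(1,s)}^\prime)\ge 2^{\frac{s}{2}+1}-1$. Part 2) says $D_{(0,s)}^\prime$ contains $\{n-h : 1\le h\le 2^{\frac{s}{2}}-2\}$. This is the consecutive run $n-2^{\frac{s}{2}}+2,\ldots,n-1$ of length $2^{\frac{s}{2}}-2$, so the BCH bound gives $d(\mathcal{C}_{(0,s)}^\prime)\ge 2^{\frac{s}{2}}-1$.

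\emph{Dual distances.} The defining set of $\mathcal{C}_{(i,s)}^{\prime\perp}$ is $\mathbb{Z}_n\setminus(-D_{(i,s)}^\prime)$. We use two facts: $0\notin D_{(i,s)}^\prime$, and $D_{(0,s)}^\prime$ and $D_{(1,s)}^\prime$ partition $\{1,\ldots,n-1\}$.

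For $i=1$: part 2) of Lemma \ref{lem-15} says each $n-h$ with $1\le h\le 2^{\frac{s}{2}}-2$ lies in $D_{(0,s)}^\prime$, hence not in $D_{(1,s)}^\prime$. So $h\notin -D_{(1,s)}^\prime$, and together with $0$ the dual defining set contains $\{0,1,\ldots,2^{\frac{s}{2}}-2\}$. These are $2^{\frac{s}{2}}-1$ consecutive elements, so $d^\perp\ge 2^{\frac{s}{2}}$.

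For $i=0$: part 1) of Lemma \ref{lem-15} says each $h$ with $1\le h\le 2^{\frac{s}{2}+1}-2$ lies in $D_{(1,s)}^\prime$, hence not in $D_{(0,s)}^\prime$. So $n-h\notin -D_{(0,s)}^\prime$, and with $0$ the dual defining set contains the consecutive block $\{n-h: 0\le h\le 2^{\frac{s}{2}+1}-2\}$ modulo $n$. This block has $2^{\frac{s}{2}+1}-1$ elements, so $d^\perp\ge 2^{\frac{s}{2}+1}$.

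\emph{Main obstacle.} The argument is routine once Lemma \ref{lem-15} is available; the real content is in that lemma and in Lemma \ref{lem-14}. Two points need care. First, removing the two cosets $C^{(2,n)}_{5\cdot 2^{s/2}+5}$ and $C^{(2,n)}_{3\cdot 2^{s/2}-13}$ must not break either consecutive run. This holds because these cosets are disjoint from $A$ (part 3) of Lemma \ref{lem-14}), and because they are added to $D_{(0,s)}^\prime$, which only enlarges it. Second, the count must come out right. The removed cosets have sizes $s/2$ and $s$ (part 2) of Lemma \ref{lem-14}), lie in $R_{(0,s)}$ and avoid $B_{(0,s)}$, so removing them lowers $|D_{(1,s)}|$ by exactly $\frac{3s}{2}$. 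This turns $2^{s-1}-2+\frac{3s}{2}$ into $2^{s-1}-2$.
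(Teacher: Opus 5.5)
Your proposal is correct and follows the paper's proof essentially step for step. It applies the BCH bound to the runs from parts 1) and 2) of Lemma \ref{lem-15}, takes the dimensions from part 3), and handles the dual defining set $\mathbb{Z}_n\setminus(-D'_{(i,s)})$ via the complementarity of $D'_{(0,s)}$ and $D'_{(1,s)}$ together with $0\notin D'_{(i,s)}$; one small correction to your closing remark is that the run $\{1,\ldots,2^{\frac{s}{2}+1}-2\}\subseteq A\subseteq D'_{(1,s)}$ survives automatically because $A$ is adjoined after the removal, so disjointness of the removed cosets from $A$ matters only for the count to drop by exactly $\frac{3s}{2}$.
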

\begin{proof}
By 1) and 2) of Lemma \ref{lem-15}, the defining set $D_{(1,s)}^\prime$ of the code  $\mathcal{C}_{(1,s)}^\prime$ contains the set $\{h : 1 \leq h \leq  2^{\frac{s}{2}+1} -2\}$, while the defining set $D_{(0,s)}^\prime$ of  $\mathcal{C}_{(0,s)}^\prime$  contains the set $\{n - h : 1 \leq h \leq 2^{\frac{s}{2}} - 2\}$. The BCH bound implies that
$
d(\mathcal{C}_{(1,s)}^\prime) \geq 2^{\frac{s}{2} + 1 } - 1 \quad \text{and} \quad d(\mathcal{C}_{(0,s)}^\prime) \geq 2^{\frac{s}{2}} - 1,
$
respectively. For $i \in \{0,1\}$, we have $\dim(\mathcal{C}_{(i,s)}^\prime) = n - |D_{(i,s)}^\prime|$, then the dimension of $\mathcal{C}_{(i,s)}^\prime$ can be determined from Lemma \ref{lem-15}. Since $\dim(\mathcal{C}_{(i,s)}^\prime) +\dim(\mathcal{C}'^{\perp}_{(i,s)})= n$, the dimension of $\mathcal{C}_{(i,s)}'^\perp$ directly follows. The minimum distance of $\mathcal{C}_{(i,s)}'^\perp$ is estimated by considering the following cases.

Case 1: If $i =1$. The set $\{h : 0 \leq h \leq 2^{\frac{s}{2}} - 2\} \subseteq \mathbb{Z}_{n} \setminus (-D_{(1,s)}^\prime)$ by 2) of Lemma \ref{lem-15}.
 By the BCH bound,  the desired lower bound on $d(\mathcal{C}'^{\perp}_{(1,s)})$ is obtained.
 
Case 2: If $i =0$. The set $\{n-h : 0 \leq h \leq 2^{\frac{s}{2}+1} - 2\} \subseteq \mathbb{Z}_{n} \setminus (-D_{(0,s)}^\prime)$ By 1) of Lemma \ref{lem-15}.
By the BCH bound,  the desired lower bound on $d(\mathcal{C}'^{\perp}_{(0,s)})$ is obtained.
\end{proof}

\begin{remark}
In comparison with the binary cyclic codes in \cite{CTCD} with the same lengths and dimensions, the cyclic codes $\mathcal{C}_{(1,s)}^\prime$ and $\mathcal{C}_{(0,s)}^\prime$ in Theorem \ref{th-16} have larger lower bounds on the minimum distances and dual distances.
Compared to the binary cyclic codes in \cite{LWQ} with the same lengths and dimensions, the cyclic codes $\mathcal{C}_{(1,s)}^\prime$ and $\mathcal{C}_{(0,s)}^\prime$ in Theorem \ref{th-16} have larger lower bounds on the minimum distances and the same lower bounds on the dual distances.
\end{remark}

\begin{theorem}\label{th-17}
Let $s \equiv 2 \pmod{4}$ and $s \geq 10$. Then the extended code $\overline{\mathcal{C}_{(1,s)}^\prime}$ has parameters
    $$
    \left[ 2^s, 2^{s-1} + 1, d \geq 2^{\frac{s}{2} + 1 } \right],
    $$
    and the extended code $\overline{\mathcal{C}_{(0,s)}^\prime}$ has parameters
    $$
    \left[ 2^s, 2^{s-1} - 1, d \geq 2^{\frac{s}{2}}  \right].
    $$
\end{theorem}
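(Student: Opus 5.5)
The plan is to derive Theorem \ref{th-17} directly from Theorem \ref{th-16} together with the elementary properties of extended codes recalled in the preliminaries.

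\emph{Length and dimension.} By Theorem \ref{th-16}, $\mathcal{C}_{(1,s)}^\prime$ is a $[2^s-1,\,2^{s-1}+1]$ code and $\mathcal{C}_{(0,s)}^\prime$ is a $[2^s-1,\,2^{s-1}-1]$ code. Appending the overall parity check coordinate $c_n=\sum_{i=0}^{n-1}c_i$ increases the length by one. The map $\mathbf{c}\mapsto(\mathbf{c},c_n)$ is an injective linear map, so the dimension is unchanged. Hence $\overline{\mathcal{C}_{(1,s)}^\prime}$ and $\overline{\mathcal{C}_{(0,s)}^\prime}$ have lengths $2^s$ and dimensions $2^{s-1}+1$ and $2^{s-1}-1$, respectively.

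\emph{Minimum distances.} Recall that $d(\overline{\mathcal{C}})=d$ if $d$ is even and $d+1$ if $d$ is odd. Since we only have lower bounds, argue codeword by codeword. Every nonzero codeword of $\overline{\mathcal{C}}$ has even weight, and that weight is at least the weight of the corresponding codeword of $\mathcal{C}$. For $\overline{\mathcal{C}_{(1,s)}^\prime}$, every nonzero codeword therefore has even weight at least $2^{\frac{s}{2}+1}-1$. Because $2^{\frac{s}{2}+1}-1$ is odd, the weight is at least $2^{\frac{s}{2}+1}$. In the same way, every nonzero codeword of $\overline{\mathcal{C}_{(0,s)}^\prime}$ has even weight at least $2^{\frac{s}{2}}-1$, which is odd, so the weight is at least $2^{\frac{s}{2}}$.

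No step here is a genuine obstacle. The only point needing care is the parity step: it uses the bound from Theorem \ref{th-16}, not the exact minimum distance, which is why it is phrased codeword by codeword.
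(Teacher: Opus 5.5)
Your proposal is correct and follows the same route as the paper, whose proof simply deduces the result from Theorem \ref{th-16} and the definition of the extended code. Your explicit treatment of length, dimension, and the even-weight step (turning the odd BCH lower bounds $2^{\frac{s}{2}+1}-1$ and $2^{\frac{s}{2}}-1$ into $2^{\frac{s}{2}+1}$ and $2^{\frac{s}{2}}$) just spells out that one-line argument.
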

\begin{proof}
The desired results can be deduced from Theorem \ref{th-16} and the definition of the extended codes.
\end{proof}

\subsection{The third construction of  binary cyclic codes with $ d \cdot d^\perp \approx 2n$}
 Let $n = 2^s - 1$ and $s \geq 6$ be even. we have
\[
R_{(i,s)} = \{h : 1 \leq h \leq n-1, \, \omega_2(h) \equiv i \pmod{2}\}
\]
for  $i \in \{0,1\}$. Denote
\[
M := \bigcup_{h=1}^{2^{\frac{s}{2}+1}-4} C_{h}^{(2,n)}
\]
and
\[
N_{(i,s)} := \bigcup_{\substack{h=1 \\ \omega_2(h) \equiv i \pmod{2}}}^{2^{\frac{s}{2}+1}-4} C_{n-h}^{(2,n)}
\]
for each $i \in \{0,1\}$. Let
\[
V_{(1,s)} = 
\begin{cases}
(R_{(0,s)} \setminus N_{(0,s)}) \cup M & \text{if } s \equiv 2 \pmod{4}, \\
(R_{(1,s)} \setminus N_{(1,s)}) \cup M & \text{if } s \equiv 0 \pmod{4},
\end{cases}
\]
and $V_{(0,s)} = \{1,2,\cdots,n-1\} \setminus V_{(1,s)}$. Then both $V_{(0,s)}$ and $V_{(1,s)}$ consist of unions of certain $2$-cyclotomic cosets. 
 Let $\mathcal{C}_{(i,s)}''$ be the binary cyclic code of length $n$ with defining set  $V_{(i,s)}$ for  $i \in \{0,1\}$. The following lemmas are used to characterize the parameters of $\mathcal{C}_{(i,s)}''$.

\begin{lemma}\label{lem-21}
Let $s \geq 6$ be even and $i$ be an integer such that $1 \leq i \leq  2^{\frac{s}{2} + 1} - 4$. Then
\[
 \left(\bigcup_{i=1}^{2^{\frac{s}{2} + 1} - 4} C_i^{(2,n)}\right) \bigcap  \left(\bigcup_{h=1}^{2^{\frac{s}{2} + 1} - 4} C_{n-h}^{(2,n)}\right) = C_{2^{\frac{s}{2}}-1}^{(2,n)}.
\]
\end{lemma}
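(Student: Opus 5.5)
We follow the proof of Lemma \ref{lem-5} closely, reducing to the congruence
\begin{eqnarray*}
i + h\cdot 2^k \equiv 0 \pmod n, \qquad 1\le i,h\le 2^{\frac{s}{2}+1}-4,\ 0\le k\le s-1.
\end{eqnarray*}
A coset $C_i^{(2,n)}$ with $1\le i\le 2^{\frac{s}{2}+1}-4$ lies in $\bigcup_{h} C_{n-h}^{(2,n)}$ exactly when this congruence has a solution for some such $h$ and $k$. Since both sides of the claimed identity are unions of cyclotomic cosets, it suffices to find all solutions $(i,h,k)$.

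\emph{Step 1: excluding most values of $k$.} For $0\le k\le \frac{s}{2}-2$ we have $1<i+h2^k<n$, by the same estimate as in Case 1 of Lemma \ref{lem-5}, since the range is now even smaller. Hence there is no solution. For $\frac{s}{2}+2\le k\le s-1$ we multiply the congruence by $2^{s-k}$ and exchange the roles of $i$ and $h$, which reduces this range to the previous one.

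\emph{Step 2: the three middle values of $k$.}
\begin{itemize}
\item $k=\frac{s}{2}-1$: the bound $i+h2^{\frac{s}{2}-1}<2n$ forces $i+h2^{\frac{s}{2}-1}=2^s-1$. Then $i$ must be odd, and writing $h=2^{\frac{s}{2}+1}-t$ gives $i=t2^{\frac{s}{2}-1}-1$. The constraint $h\le 2^{\frac{s}{2}+1}-4$ means $t\ge 4$, so $i\ge 2^{\frac{s}{2}+1}-1$, which is out of range. This is exactly where the solutions $h=2^{\frac{s}{2}+1}-2$ and $h=2^{\frac{s}{2}+1}-3$ found in Lemma \ref{lem-5} disappear. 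Hence there is no solution.
\item $k=\frac{s}{2}+1$: this case is symmetric to the previous one, with $i$ and $h$ swapped, so again there is no solution.
\item $k=\frac{s}{2}$: the sum satisfies $i+h2^{\frac{s}{2}}<2n$, so $i+h2^{\frac{s}{2}}=2^s-1$. Reducing modulo $2^{\frac{s}{2}}$ gives $i\equiv -1$. Since $i\le 2^{\frac{s}{2}+1}-4$, either $i=2^{\frac{s}{2}}-1$, which gives $h=2^{\frac{s}{2}}-1$ and is admissible, or $i=2^{\frac{s}{2}+1}-1$, which is out of range.
\end{itemize}

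\emph{Step 3: conclusion.} The only solution is $i=h=2^{\frac{s}{2}}-1$, so the intersection equals $C_{2^{\frac{s}{2}}-1}^{(2,n)}$. The reverse inclusion holds because $2^{\frac{s}{2}}-1$ lies in both index ranges and satisfies $(2^{\frac{s}{2}}-1)2^{\frac{s}{2}}\equiv n-(2^{\frac{s}{2}}-1)\pmod n$.

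The main point needing care is Step 2 for $k=\frac{s}{2}\pm1$. There one must check that the parametrisation $h=2^{\frac{s}{2}+1}-t$ captures every solution, that is, $h\ge 2^{\frac{s}{2}+1}-3$ is forced exactly when $i$ is in range. The remaining steps are routine size bounds.
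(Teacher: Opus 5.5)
Your proof is correct and follows the paper's intended route. The paper omits the proof and refers to the case analysis on $k$ in Lemma~\ref{lem-5}, which is exactly what you carry out.

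One simplification: for $k=\frac{s}{2}\pm 1$, the bound $(2^{\frac{s}{2}+1}-4)(2^{\frac{s}{2}-1}+1)=2^s-4<n$ already rules out any solution, so the parametrisation $h=2^{\frac{s}{2}+1}-t$ is not needed.
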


\begin{proof}
The proof follows the same argument as Lemma \ref{lem-5} and is therefore omitted.
\end{proof}

 \begin{lemma}\label{lem-22}
   Let $s \geq 6$ be even. The following results are obtained.
   \begin{enumerate}
    \item  $N_{(i,s)} \subseteq R_{(i,s)}$ for $i \in \{0, 1\}$.
    
    \item If $s \equiv 2 \pmod{4}$, then $N_{(0,s)} \cap M = \emptyset$ and $N_{(1,s)} \cap M = C_{2^{\frac{s}{2}}-1}^{(2,n)}$.
    
    \item If $s \equiv 0 \pmod{4}$, then $N_{(0,s)} \cap M = C_{2^{\frac{s}{2}}-1}^{(2,n)}$ and $N_{(1,s)} \cap M = \emptyset$.
\end{enumerate}
 \end{lemma}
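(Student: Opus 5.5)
The plan is to follow the proof of Lemma \ref{lem-7} step by step, with Lemma \ref{lem-21} taking the place of Lemma \ref{lem-5}.

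For 1), I will use the identity $\omega_2(n-h) = s - \omega_2(h)$ for $0 \leq h \leq n-1$. It holds because $n-h$ is the bitwise complement of $h$ in $s$ bits. Since $s$ is even, $\omega_2(n-h) \equiv \omega_2(h) \pmod 2$. The weight $\omega_2$ is also invariant under multiplication by $2$ modulo $n$, since that operation cyclically shifts the $s$-bit expansion. So every element of $C_{n-h}^{(2,n)}$ with $\omega_2(h)\equiv i \pmod 2$ lies in $R_{(i,s)}$. We also need $n-h \neq 0$, which holds since $1 \le h \le n-1$. Together these give $N_{(i,s)} \subseteq R_{(i,s)}$.

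For 2) and 3), I will first note that
\[
N_{(0,s)} \cup N_{(1,s)} = \bigcup_{h=1}^{2^{\frac{s}{2}+1}-4} C_{n-h}^{(2,n)}.
\]
Lemma \ref{lem-21} then gives
\[
(M\cap N_{(0,s)}) \cup (M \cap N_{(1,s)}) = C_{2^{\frac{s}{2}}-1}^{(2,n)}.
\]
Next I will decide which of the two pieces contains this single coset. The sets $N_{(0,s)}$ and $N_{(1,s)}$ are disjoint, since by 1) they lie in the disjoint sets $R_{(0,s)}$ and $R_{(1,s)}$. Each intersection $M\cap N_{(i,s)}$ is a union of cyclotomic cosets, so the whole coset $C_{2^{\frac{s}{2}}-1}^{(2,n)}$ lies in exactly one of them. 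That one is $N_{(j,s)}$ with $j \equiv \omega_2(2^{\frac{s}{2}}-1) = \frac{s}{2} \pmod 2$. If $s\equiv 2 \pmod 4$, then $\frac{s}{2}$ is odd, so $N_{(1,s)}\cap M = C_{2^{\frac{s}{2}}-1}^{(2,n)}$ and $N_{(0,s)}\cap M=\emptyset$. If $s \equiv 0 \pmod 4$, the roles are swapped.

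The main obstacle is not in this lemma itself but in Lemma \ref{lem-21}, whose proof the paper omits, so I would recheck it. In Case 2 of Lemma \ref{lem-5} ($k=\frac{s}{2}-1$), the solutions have $h = 2^{\frac{s}{2}+1}-2$ or $h=2^{\frac{s}{2}+1}-3$. Both exceed $2^{\frac{s}{2}+1}-4$, so they disappear in the new range, and Case 4 is symmetric. Only the Case 3 solution $(2^{\frac{s}{2}}-1, 2^{\frac{s}{2}}-1)$ survives. This confirms the single-coset intersection used above.
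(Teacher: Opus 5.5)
Your proposal is correct and is the argument the paper intends: it simply says the proof is that of Lemma \ref{lem-7} with Lemma \ref{lem-21} in place of Lemma \ref{lem-5}, and your explicit remarks on the shift-invariance of $\omega_2$ and on the disjointness of $N_{(0,s)}$ and $N_{(1,s)}$ fill in steps the paper leaves implicit. One small caveat about your recheck of Lemma \ref{lem-21}: the solution list in Lemma \ref{lem-5} is incomplete, because in its Case 3 the pair $(i,h)=(2^{\frac{s}{2}+1}-2,\,2^{\frac{s}{2}+1}-2)$ satisfies $i+h\cdot 2^{\frac{s}{2}}=2n$. That pair lies outside the range $1\le i,h\le 2^{\frac{s}{2}+1}-4$, where in fact $i+h\cdot 2^{\frac{s}{2}}<2n$, so your conclusion that only $(2^{\frac{s}{2}}-1,2^{\frac{s}{2}}-1)$ survives still stands.
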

 \begin{proof}
 The proof follows the same argument as Lemma \ref{lem-7} and is therefore omitted.
 \end{proof}
  \begin{lemma}\label{lem-23}
Let $s \geq 6$ be even. Then the following hold.
\begin{enumerate}
    \item $\{h : 1 \leq h \leq  2^{\frac{s}{2}+1} -4\} \subseteq V_{(1,s)}$.
    
    \item $\{n - h : 1 \leq h \leq 2^{\frac{s}{2}} - 2\} \subseteq V_{(0,s)}$.
    
    \item If $s \equiv 2 \pmod{4}$, then $|V_{(1,s)}| = 2^{s-1} - 2 + \frac{s}{2}$ and $|V_{(0,s)}| = 2^{s-1} - \frac{s}{2}$.
    
    \item If $s \equiv 0 \pmod{4}$, then $|V_{(1,s)}| = 2^{s-1} - \frac{s}{2}$ and $|V_{(0,s)}| = 2^{s-1} - 2 + \frac{s}{2}$.
\end{enumerate}
\end{lemma}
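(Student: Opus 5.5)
The proof mirrors that of Lemma \ref{lem-8}, with $A$, $B_{(i,s)}$, Lemma \ref{lem-5} and Lemma \ref{lem-7} replaced by $M$, $N_{(i,s)}$, Lemma \ref{lem-21} and Lemma \ref{lem-22}.

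\emph{Part 1).} This is immediate: $\{h : 1\le h\le 2^{\frac s2+1}-4\}\subseteq M\subseteq V_{(1,s)}$.

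\emph{Part 2).} Fix $1\le h\le 2^{\frac s2}-2$. We must show $n-h\notin V_{(1,s)}$.
\begin{itemize}
\item First, $n-h\notin M$. By Lemma \ref{lem-21}, an element $n-h$ with $h\le 2^{\frac s2+1}-4$ can lie in $M$ only if it lies in $C_{2^{\frac s2}-1}^{(2,n)}$. Every element of that coset has weight $\frac s2$. On the other hand, $\omega_2(n-h)=s-\omega_2(h)>\frac s2$, because $h<2^{\frac s2}-1$ forces $\omega_2(h)<\frac s2$.
\item Second, $n-h\notin R_{(i,s)}\setminus N_{(i,s)}$. This uses the same two-case parity argument as in 2) of Lemma \ref{lem-8}. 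If $\omega_2(h)\equiv i \pmod 2$, then $n-h\in N_{(i,s)}$. Otherwise $n-h\in R_{(i\oplus1,s)}$, which is disjoint from $R_{(i,s)}$. Here we use $\omega_2(n-h)\equiv\omega_2(h)\pmod 2$, since $s$ is even.
\end{itemize}

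\emph{Parts 3) and 4): setting up the count.} Write $j=0$ if $s\equiv2\pmod 4$ and $j=1$ if $s\equiv 0\pmod 4$. By Lemma \ref{lem-22}, $N_{(j,s)}\subseteq R_{(j,s)}$ and $N_{(j,s)}\cap M=\emptyset$. Hence
\[
|V_{(1,s)}|=|R_{(j,s)}|-|N_{(j,s)}|+|M|-|R_{(j,s)}\cap M|.
\]
The map $C_h^{(2,n)}\mapsto C_{n-h}^{(2,n)}$ preserves coset sizes and weight parity. Therefore $|N_{(j,s)}|=|R_{(j,s)}\cap M|$, and both equal the size of the union of the cosets $C_h^{(2,n)}$ with $h\le 2^{\frac s2+1}-4$ and $\omega_2(h)\equiv j\pmod 2$.

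\emph{Counting via coset leaders.} By Lemma \ref{lem-3}, the leaders involved are the odd $h\le 2^{\frac s2+1}-5$. Each has a coset of size $s$, except $h=2^{\frac s2}+1$, whose coset has size $\frac s2$ and whose weight is $2$ (even).
\begin{itemize}
\item There are $2^{\frac s2}-2$ such odd $h$, obtained from the $2^{\frac s2}$ odd numbers up to $2^{\frac s2+1}-1$ by deleting $2^{\frac s2+1}-1$ and $2^{\frac s2+1}-3$. Hence $|M|=s(2^{\frac s2}-2)-\frac s2$.
\item For the parity split, use Lemma \ref{lem-6} and account for the two deleted leaders. The number $2^{\frac s2+1}-1$ has weight $\frac s2+1$, and $2^{\frac s2+1}-3$ has weight $\frac s2$.
\end{itemize}

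\emph{Case $s\equiv2\pmod 4$.} Here $2^{\frac s2+1}-1$ has even weight, so exactly one even-weight leader is removed. The count of even-weight leaders is $2^{\frac s2-1}-1$, so $|N_{(0,s)}|=s(2^{\frac s2-1}-1)-\frac s2$. With Lemma \ref{lem-4},
\[
|V_{(1,s)}|=2^{s-1}-2+s2^{\frac s2}-\tfrac{5s}{2}-2s2^{\frac s2-1}+2s+s=2^{s-1}-2+\tfrac s2 .
\]

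\emph{Case $s\equiv0\pmod 4$.} Here $2^{\frac s2+1}-1$ has odd weight, and $2^{\frac s2}+1$ is even, so it does not affect the odd-weight count. Thus $|N_{(1,s)}|=s(2^{\frac s2-1}-1)$, and
\[
|V_{(1,s)}|=2^{s-1}+s2^{\frac s2}-\tfrac{5s}{2}-s2^{\frac s2}+2s=2^{s-1}-\tfrac s2 .
\]

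In both cases $|V_{(0,s)}|=n-1-|V_{(1,s)}|$, which gives the stated values.

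The main obstacle is this bookkeeping: tracking which deleted leaders ($2^{\frac s2+1}-1$, $2^{\frac s2+1}-3$, and the short coset of $2^{\frac s2}+1$) fall into which parity class in each residue of $s$ modulo $4$.
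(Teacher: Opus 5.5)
Your proof is correct and takes essentially the paper's route: the paper omits the proof as mirroring Lemma \ref{lem-8}, with Lemmas \ref{lem-21} and \ref{lem-22} in place of Lemmas \ref{lem-5} and \ref{lem-7}, the identity $|V_{(1,s)}|=|R_{(j,s)}|+|M|-2|R_{(j,s)}\cap M|$, and coset-leader counting via Lemmas \ref{lem-3}, \ref{lem-4} and \ref{lem-6}, and your bookkeeping reproduces the stated sizes in both residue classes of $s$ modulo $4$. Your weight argument ($\omega_2(n-h)>\frac{s}{2}$ while $C_{2^{\frac{s}{2}}-1}^{(2,n)}$ has weight $\frac{s}{2}$) usefully justifies the step $n-h\notin M$, which the paper's Lemma \ref{lem-8} only calls obvious.
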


 \begin{proof}
 The proof follows the same argument as Lemma \ref{lem-8} and is therefore omitted.
 \end{proof}

For $i \in \{0,1\}$, the following theorem characterizes the parameters of $\mathcal{C}_{(i,s)}''$. 
 
\begin{theorem}\label{th-22}
Let $s \geq 6$ be even. Then the following conclusions are obtained.
\begin{enumerate}
    \item If $s \equiv 2 \pmod{4}$, then the code $\mathcal{C}_{(1,s)}''$ has parameters
    $
    \left[ 2^s - 1, 2^{s-1} + 1 - \frac{s}{2}, d \geq 2^{\frac{s}{2} + 1 } - 3 \right],
    $
    and the code $\mathcal{C}_{(0,s)}''$ has parameters
    $
    \left[ 2^s - 1, 2^{s-1} - 1 + \frac{s}{2}, d \geq 2^{\frac{s}{2}} - 1 \right].
    $
    
    \item If $s \equiv 0 \pmod{4}$, then the code $\mathcal{C}_{(1,s)}''$ has parameters
    $
    \left[ 2^s - 1, 2^{s-1} - 1 + \frac{s}{2}, d \geq  2^{\frac{s}{2} + 1} - 3 \right],
    $
    and the code $\mathcal{C}_{(0,s)}''$ has parameters
    $
    \left[ 2^s - 1, 2^{s-1} + 1 - \frac{s}{2}, d \geq 2^{\frac{s}{2}} - 1 \right].
    $
\end{enumerate}
\end{theorem}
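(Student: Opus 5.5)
The proof follows Theorem \ref{th-1}, with Lemma \ref{lem-23} in place of Lemma \ref{lem-8}. It has two parts: a BCH-bound argument for the minimum distances and the identity $\dim = n - |V_{(i,s)}|$ for the dimensions.

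\emph{Minimum distances.} By 1) of Lemma \ref{lem-23}, the defining set $V_{(1,s)}$ of $\mathcal{C}_{(1,s)}''$ contains the $2^{\frac{s}{2}+1}-4$ consecutive integers $1,2,\ldots,2^{\frac{s}{2}+1}-4$. Applying the BCH bound with $b=1$ and $\delta-2 = 2^{\frac{s}{2}+1}-5$ gives $\delta = 2^{\frac{s}{2}+1}-3$. Hence $d(\mathcal{C}_{(1,s)}'') \geq 2^{\frac{s}{2}+1}-3$ in both residue classes of $s$ modulo $4$.

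By 2) of Lemma \ref{lem-23}, $V_{(0,s)}$ contains $n-(2^{\frac{s}{2}}-2),\ldots,n-1$. This is a run of $2^{\frac{s}{2}}-2$ consecutive residues. The BCH bound with $b = n-2^{\frac{s}{2}}+2$ then gives $d(\mathcal{C}_{(0,s)}'') \geq 2^{\frac{s}{2}}-1$.

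\emph{Dimensions.} We use $\dim(\mathcal{C}_{(i,s)}'') = n-|V_{(i,s)}|$ with $n = 2^s-1$, and read the set sizes from 3) and 4) of Lemma \ref{lem-23}.
\begin{itemize}
\item[] If $s\equiv 2 \pmod 4$: $n-(2^{s-1}-2+\frac{s}{2}) = 2^{s-1}+1-\frac{s}{2}$ and $n-(2^{s-1}-\frac{s}{2}) = 2^{s-1}-1+\frac{s}{2}$.
\item[] If $s\equiv 0\pmod 4$: $n-(2^{s-1}-\frac{s}{2}) = 2^{s-1}-1+\frac{s}{2}$ and $n-(2^{s-1}-2+\frac{s}{2}) = 2^{s-1}+1-\frac{s}{2}$.
\end{itemize}
These are exactly the dimensions claimed in the theorem.

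\emph{Where the real work lies.} The theorem itself is immediate once Lemma \ref{lem-23} is in hand. The substantive step is the cardinality count in Lemma \ref{lem-23}, whose proof the paper omits and which I would check by mirroring Lemma \ref{lem-8}.

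Take $s\equiv 2\pmod 4$; the other case is symmetric. By Lemma \ref{lem-22}, $N_{(0,s)}\cap M=\emptyset$ and $|N_{(0,s)}| = |R_{(0,s)}\cap M|$, so $|V_{(1,s)}| = |R_{(0,s)}| + |M| - 2|R_{(0,s)}\cap M|$.

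\emph{Counting $|M|$.} Lemma \ref{lem-3} says the odd $h\le 2^{\frac{s}{2}+1}-4$ are coset leaders of size $s$, except $2^{\frac{s}{2}}+1$, which has size $\frac{s}{2}$. The odd integers below $2^{\frac{s}{2}+1}-1$ that are excluded from $M$ are $2^{\frac{s}{2}+1}-1$ and $2^{\frac{s}{2}+1}-3$. Hence $|M| = s(2^{\frac{s}{2}}-2)-\frac{s}{2}$.

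\emph{Counting $|R_{(0,s)}\cap M|$.} Both $2^{\frac{s}{2}+1}-1$ (weight $\frac{s}{2}+1$) and $2^{\frac{s}{2}+1}-3$ (weight $\frac{s}{2}$) must be removed from the count of Lemma \ref{lem-6}, according to their weight parities. When $s\equiv 2 \pmod 4$, $\frac{s}{2}+1$ is even and $\frac{s}{2}$ is odd. The coset of $2^{\frac{s}{2}}+1$ has weight $2$, so it lies in $R_{(0,s)}$. This gives $|R_{(0,s)}\cap M| = s(2^{\frac{s}{2}-1}-1)-\frac{s}{2}$.

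\emph{Conclusion of the count.} Combining with Lemma \ref{lem-4},
$$|V_{(1,s)}| = 2^{s-1}-2 + s2^{\frac{s}{2}}-2s-\tfrac{s}{2} - s2^{\frac{s}{2}}+2s+s = 2^{s-1}-2+\tfrac{s}{2},$$
in agreement with Lemma \ref{lem-23}. Getting these parity bookkeeping details right is the only delicate part.
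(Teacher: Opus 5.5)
Your proposal is correct and follows the paper's own proof. Apply the BCH bound to the runs of consecutive integers given by 1) and 2) of Lemma \ref{lem-23}, and read off the dimensions as $n-|V_{(i,s)}|$ from 3) and 4). Your BCH computation correctly gives $2^{\frac{s}{2}+1}-3$, where the paper's proof text misprints it as $2^{\frac{s}{2}+1}-1$, and your check of the omitted count in Lemma \ref{lem-23} is sound.
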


\begin{proof}
By 1) and 2) of Lemma \ref{lem-23}, the defining set $V_{(1,s)}$ of the code $\mathcal{C}_{(1,s)}''$ contains the set $\{h : 1 \leq h \leq  2^{\frac{s}{2}+1} -4\}$, while the defining set $V_{(0,s)}$ of  $\mathcal{C}_{(0,s)}''$ contains the set $\{n - h : 1 \leq h \leq 2^{\frac{s}{2}} - 2\}$. The BCH bound implies that
$
d(C_{(1,s)}) \geq 2^{\frac{s}{2} + 1 } - 1 \quad \text{and} \quad d(C_{(0,s)}) \geq 2^{\frac{s}{2}} - 1,
$
respectively. For each $i \in \{0,1\}$, we have $\dim(\mathcal{C}_{(i,s)}) = n - |V_{(i,s)}|$. Then the dimensions directly follow from Lemma \ref{lem-23}. 
\end{proof}

\begin{theorem}\label{th-23}
Let $s \geq 6$ be even. Then the following conclusions are obtained.
\begin{enumerate}
    \item If $s \equiv 2 \pmod{4}$, then the extended code $\overline{\mathcal{C}_{(1,s)}''}$ has parameters
    $
    \left[ 2^s, 2^{s-1} + 1 - \frac{s}{2}, d \geq 2^{\frac{s}{2} + 1 } - 2 \right],
    $
    and the extended code $\overline{\mathcal{C}_{(0,s)}''}$ has parameters
    $
    \left[ 2^s, 2^{s-1} - 1 + \frac{s}{2}, d \geq 2^{\frac{s}{2}}  \right].
    $
    
    \item If $s \equiv 0 \pmod{4}$, then the extended code $\overline{\mathcal{C}_{(1,s)}''}$ has parameters
    $
    \left[ 2^s, 2^{s-1} - 1 + \frac{s}{2}, d \geq  2^{\frac{s}{2} + 1} - 2 \right],
    $
    and the extended code $\overline{\mathcal{C}_{(0,s)}''}$ has parameters
    $
    \left[ 2^s, 2^{s-1} + 1 - \frac{s}{2}, d \geq 2^{\frac{s}{2}}  \right].
    $
\end{enumerate}
\end{theorem}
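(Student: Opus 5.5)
The plan is to derive Theorem \ref{th-23} directly from Theorem \ref{th-22} and the basic property of extended binary codes recalled in the Preliminaries. If $\mathcal{C}$ is an $[n,k,d]$ binary code, then $\overline{\mathcal{C}}$ is an $[n+1,k,d_1]$ code. Here $d_1=d$ when $d$ is even and $d_1=d+1$ when $d$ is odd. In either case, if $d\ge \delta$ with $\delta$ odd, then $d_1\ge \delta+1$. This holds because $d\ge\delta$ odd gives either $d$ odd, so $d_1=d+1\ge\delta+1$, or $d$ even, so $d\ge\delta+1$ and $d_1=d$.

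First, the length becomes $2^s$ and the dimension is unchanged. So the dimensions in both cases $s\equiv 2 \pmod 4$ and $s\equiv 0\pmod 4$ are copied verbatim from Theorem \ref{th-22}, that is, from the cardinalities $|V_{(i,s)}|$ computed in Lemma \ref{lem-23}.

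Second, for the minimum distances I apply the parity observation above to the BCH-based bounds of Theorem \ref{th-22}. For $\mathcal{C}''_{(0,s)}$ the bound is $2^{\frac s2}-1$, which is odd, so $\overline{\mathcal{C}''_{(0,s)}}$ has $d\ge 2^{\frac s2}$. For $\mathcal{C}''_{(1,s)}$ the bound stated in Theorem \ref{th-22} is $2^{\frac s2+1}-3$, also odd, so the extension gives $d\ge 2^{\frac s2+1}-2$. This matches the statement.

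There is one point that needs care. In the proof of Theorem \ref{th-22}, the BCH run available for $\mathcal{C}''_{(1,s)}$ is $\{1,\dots,2^{\frac s2+1}-4\}$ from 1) of Lemma \ref{lem-23}. That run has length $2^{\frac s2+1}-4$ and gives $\delta=2^{\frac s2+1}-3$, not the $2^{\frac s2+1}-1$ written in that proof. I would therefore make sure to use $\delta=2^{\frac s2+1}-3$, which is exactly the bound the theorem records. Beyond this check the argument is routine, and it is identical for the two congruence classes of $s$.
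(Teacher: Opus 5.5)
Your proposal is correct and follows the paper's route: the paper deduces this result from Theorem~\ref{th-22} and the definition of the extended code. That works because extending a binary code preserves the dimension and raises an odd minimum-distance bound $\delta$ to $\delta+1$. Your check is also right. The run $\{1,\dots,2^{\frac{s}{2}+1}-4\}\subseteq V_{(1,s)}$ gives $\delta=2^{\frac{s}{2}+1}-3$, which is the value in the statement of Theorem~\ref{th-22}, so the ``$2^{\frac{s}{2}+1}-1$'' in that theorem's proof is a typo. Using $2^{\frac{s}{2}+1}-3$ is exactly what yields $d\geq 2^{\frac{s}{2}+1}-2$ here.
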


\begin{proof}
The desired conclusions can be deduced from Theorem \ref{th-22} and the definition of the extended codes.
\end{proof}

\begin{theorem}\label{th-24}
Let $s \geq 6$ be even. Then the following conclusions are obtained.
\begin{enumerate}
    \item If $s \equiv 2 \pmod{4}$, then the dual code $\mathcal{C}_{(1,s)}''^\perp$ has parameters
    $
    \left[ 2^s - 1, 2^{s-1} + \frac{s}{2} -2, d^\perp \geq 2^{\frac{s}{2}}  \right],
    $
    and the dual code $\mathcal{C}_{(0,s)}''^\perp$ has parameters
    $
    \left[ 2^s - 1, 2^{s-1} - \frac{s}{2}, d^\perp \geq 2^{\frac{s}{2}+1} - 2 \right].
    $
    
    \item If $s \equiv 0 \pmod{4}$, then the dual code $\mathcal{C}_{(1,s)}''^\perp$ has parameters
    $
    \left[ 2^s - 1, 2^{s-1} - \frac{s}{2}, d^\perp \geq  2^{\frac{s}{2}} \right],
    $
    and the dual code $\mathcal{C}_{(0,s)}''^\perp$ has parameters
    $
    \left[ 2^s - 1, 2^{s-1} + \frac{s}{2} -2, d^\perp \geq 2^{\frac{s}{2}+1} - 2 \right].
    $
\end{enumerate}
\end{theorem}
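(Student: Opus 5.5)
The plan mirrors the proof of Theorem \ref{th-3}. First, the dimensions. Since $\dim(\mathcal{C}''_{(i,s)})+\dim(\mathcal{C}''^{\perp}_{(i,s)})=n$, I get $\dim(\mathcal{C}''^{\perp}_{(i,s)})=|V_{(i,s)}|$. This value is read off from 3) and 4) of Lemma \ref{lem-23}. For $s\equiv 2\pmod 4$ it gives $2^{s-1}-2+\frac{s}{2}$ for $i=1$ and $2^{s-1}-\frac{s}{2}$ for $i=0$. For $s\equiv 0\pmod 4$ it gives $2^{s-1}-\frac{s}{2}$ for $i=1$ and $2^{s-1}-2+\frac{s}{2}$ for $i=0$. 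These are exactly the stated dimensions.

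For the dual distances I use the fact that the defining set of $\mathcal{C}''^{\perp}_{(i,s)}$ is $\mathbb{Z}_n\setminus(-V_{(i,s)})$. Note that $0$ always lies in it, since $0\notin V_{(i,s)}$. Because $V_{(0,s)}$ and $V_{(1,s)}$ partition $\{1,\dots,n-1\}$, this set equals $\{0\}\cup(-V_{(1-i,s)})$.

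\emph{Case $i=1$.} By 2) of Lemma \ref{lem-23}, $\{n-h:1\le h\le 2^{\frac{s}{2}}-2\}\subseteq V_{(0,s)}$. Negating gives $\{h:1\le h\le 2^{\frac{s}{2}}-2\}\subseteq -V_{(0,s)}$. Adding $0$, the defining set contains the $2^{\frac{s}{2}}-1$ consecutive integers $0,1,\dots,2^{\frac{s}{2}}-2$. The BCH bound then gives $d^\perp\ge 2^{\frac{s}{2}}$.

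\emph{Case $i=0$.} By 1) of Lemma \ref{lem-23}, $\{h:1\le h\le 2^{\frac{s}{2}+1}-4\}\subseteq V_{(1,s)}$. Negating and adding $0$, the defining set contains the consecutive run $\{n-h: 0\le h\le 2^{\frac{s}{2}+1}-4\}$ of length $2^{\frac{s}{2}+1}-3$. The BCH bound gives $d^\perp\ge 2^{\frac{s}{2}+1}-2$.

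The argument is the same for both residues of $s$ modulo $4$. The only step needing care is including $0$ to gain one extra consecutive element, and counting run lengths correctly so that each bound comes out as BCH run length plus one. There is no real obstacle, since all the combinatorial work is already done in Lemma \ref{lem-23}.
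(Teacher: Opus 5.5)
Your proposal is correct and takes the same route as the paper's proof. Dimensions come from $\dim(\mathcal{C}''_{(i,s)})+\dim(\mathcal{C}''^{\perp}_{(i,s)})=n$ with the cardinalities in Lemma~\ref{lem-23}. Dual distances come from the BCH bound on the runs $\{0,1,\dots,2^{\frac{s}{2}}-2\}$ and $\{n-h:0\le h\le 2^{\frac{s}{2}+1}-4\}$ inside $\mathbb{Z}_n\setminus(-V_{(i,s)})$, given by 2) and 1) of Lemma~\ref{lem-23}; you also correctly make explicit that $0$ lies in the dual's defining set, which the paper uses only implicitly.
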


\begin{proof}
Observe that $\dim(\mathcal{C}_{(i,s)}'') = n - \dim(\mathcal{C}''^{\perp}_{(i,s)})$. Then the  dimension of $\dim(\mathcal{C}_{(i,s)}''^\perp)$ for each $i \in \{0,1\}$ follows from Theorem \ref{th-22}. The minimum distance of $\mathcal{C}''^{\perp}_{(i,s)}$ is estimated by considering the following cases. 

Case 1: Let $i =1$. The set $\{h : 0 \leq h \leq 2^{\frac{s}{2}} - 2\} \subseteq \mathbb{Z}_{n} \setminus (-V_{(1,s)})$ by 2) of Lemma \ref{lem-23}.
 By the BCH bound, the desired lower bound on  $d(\mathcal{C}''^{\perp}_{(1,s)})$ is obtained.
 
Case 2: Let $i =0$.  The set $\{n-h : 0 \leq h \leq 2^{\frac{s}{2}+1} - 4\} \subseteq \mathbb{Z}_{n} \setminus (-V_{(0,s)})$ by 1) of Lemma \ref{lem-23}.
 By the BCH bound, the desired lower bound on  $d(\mathcal{C}''^{\perp}_{(0,s)})$ is obtained.
  
\end{proof}

\begin{example}
  If $s = 6$, then $n = 2^6 - 1 = 63$. By Magma, we obtain that
 $\mathcal{C}_{(1,s)}''$  is a binary cyclic code with  parameters $\left[63, 30, 13 \right]$, the extended code $\overline{\mathcal{C}_{(1,s)}''}$ has parameters $\left[64, 30, 14 \right]$, the dual code $\mathcal{C}_{(1,s)}''^\perp$ has parameters $\left[63, 33, 12 \right]$, the extended code $\overline{\mathcal{C}''_{(0,s)}}$ has parameters $\left[64, 34, 12 \right]$ and the dual code $\mathcal{C}_{(0,s)}''^\perp$ has parameters $\left[63, 29, 14 \right]$.  The above five binary cyclic codes have the best known parameters according to the Database in \cite{Grassl}. The binary cyclic code  $\mathcal{C}_{(0,s)}''$ has parameters $\left[63, 34, 11 \right]$ and the minimum distance is very close to that of the best known parameters $[63,34,12]$ for binary linear codes of length $63$ and dimension $34$. 
\end{example}

\begin{remark}
In comparison with the binary cyclic codes in \cite{SUN}, the cyclic code $\mathcal{C}_{(1,s)}''$ in Theorem \ref{th-22} has larger minimum distance and the same dual distance, and
the cyclic codes  $\mathcal{C}_{(0,s)}''$ in Theorem \ref{th-22} has larger dual distance and the same minimum distance.
\end{remark}

\subsection{The fourth construction of  binary cyclic codes with $ d \cdot d^\perp \approx 2n$}
In this subsection, we give the fourth construction of  binary cyclic codes with $ d \cdot d^\perp \approx 2n$ by modifying the third construction. 

Let $s \equiv 2 \pmod{4}$, 
\[
\widetilde{V}_{(1,s)}:= \left[R_{(0,s)} \setminus \left(N_{(0,s)} \cup C_{7 \cdot 2^\frac{s}{2} +7}^{(2,n)} \right)\right] \cup M 
\]
and $\widetilde{V}_{(0,s)} := \{1,2,\cdots,n-1\} \setminus \widetilde{V}_{(1,s)}$.
 Then both $\widetilde{V}_{(0,s)}$ and $\widetilde{V}_{(1,s)}$ consist of unions of certain $2$-cyclotomic cosets. Let $\widetilde{\mathcal{C}}_{(i,s)}$ be the binary cyclic code of length $n$ with defining set $\widetilde{V}_{(i,s)}$ for  $i \in \{0,1\}$. The following lemmas are used to characterize the parameters of $\widetilde{\mathcal{C}}_{(i,s)}$.

\begin{lemma}\label{lem-25}
  Let $s \equiv 2 \pmod{4}$ and $s\geq 10$. The following conclusions are obtained.
  \begin{enumerate}
    \item $C_{7 \cdot 2^\frac{s}{2} +7}^{(2,n)} \subseteq R_{(0,s)}$.
    \item The coset leader of $C_{7 \cdot 2^\frac{s}{2} +7}^{(2,n)}$ is $7 \cdot 2^\frac{s}{2} + 7$ and $\left|C_{7 \cdot 2^\frac{s}{2} + 7}^{(2,n)}\right| = \frac{s}{2}$.
    \item $C_{7 \cdot 2^\frac{s}{2} + 7}^{(2,n)} \cap M=\emptyset$.
    \item $C_{7 \cdot 2^\frac{s}{2} + 7}^{(2,n)} \cap N_{(0,s)}=\emptyset$.
  \end{enumerate}
\end{lemma}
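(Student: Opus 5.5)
We prove the four items in order, modelled on the proof of Lemma \ref{lem-14}. Throughout, write $m=\frac{s}{2}$, so that $m$ is odd and $m\ge 5$, and set $u=7\cdot 2^{m}+7$.

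\textbf{Item 1).} The binary expansion of $u$ is $u=2^{m+2}+2^{m+1}+2^{m}+4+2+1$. Since $m\ge 5$, these two blocks of three ones do not overlap, so $\omega_2(u)=6\equiv 0 \pmod 2$. Hence every element of $C_u^{(2,n)}$ has even weight, because multiplication by $2$ modulo $n$ is a cyclic shift of the $s$-bit string. This gives $C_u^{(2,n)}\subseteq R_{(0,s)}$.

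\textbf{Item 2).} View $u$ as an $s$-bit cyclic string. It is the block $111$ at positions $0,1,2$ and again at positions $m,m+1,m+2$; all other bits are $0$.
\begin{itemize}
\item The string is invariant under a cyclic shift by $m$, since $2^{m}u=7\cdot 2^{s}+7\cdot 2^{m}\equiv 7+7\cdot 2^{m}\pmod n$. So $|C_u^{(2,n)}|$ divides $m$.
\item The two blocks are separated by gaps of zeros of length $m-3$ each. The rotations of the string are exactly $u\cdot 2^{j}$ for $0\le j\le m-1$, and these are pairwise distinct, because a shift by $j<m$ cannot map the block pattern onto itself. Hence $|C_u^{(2,n)}|=m=\frac{s}{2}$.
\item Among all rotations, $u$ itself is the smallest. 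The smallest rotation is the one whose highest $1$ sits as low as possible, which means placing a maximal run of zeros at the top. For $u$ the highest bit is $m+2$, and every other rotation in the coset (for $j\le m-1$) has its top block higher. So $u$ is the coset leader.
\end{itemize}

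\textbf{Item 3).} The set $M$ is the union of the cosets of $1,\dots,2^{m+1}-4$. Every coset meeting $M$ has an element at most $2^{m+1}-4$, hence a coset leader at most that value. The leader of $C_u^{(2,n)}$ is $u>2^{m+1}$, so $C_u^{(2,n)}\cap M=\emptyset$.

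\textbf{Item 4).} This step needs the most care. Suppose $C_u^{(2,n)}\cap N_{(0,s)}\neq\emptyset$. Then $u2^{l}\bmod n=n-h$ for some $1\le h\le 2^{m+1}-4$ and some $l$. The number $n-h$ is at least $2^{s}-2^{m+1}+3$. Its top $m-1$ bits (positions $m+1,\dots,s-1$) are all $1$, because $h<2^{m+1}$.

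By contrast, each element of $C_u^{(2,n)}$ has exactly $6$ ones, in two blocks of $3$, while $m-1\ge 4$. So no element can have $m-1\ge 4$ consecutive top bits equal to $1$. This is a contradiction.

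Equivalently, we can bound sizes as in Lemma \ref{lem-14}. The maximal rotation of $u$ is $2^{s-1}+2^{s-2}+2^{s-3}+2^{m-1}+2^{m-2}+2^{m-3}$. This is less than $2^{s}-2^{m+1}+3$, since $2^{s-3}>2^{m+1}$ when $s\ge 10$. This bound is the only place where $s\ge 10$ is needed. The same bound also shows $C_u^{(2,n)}\cap N_{(0,s)}=\emptyset$ regardless of weight parity.
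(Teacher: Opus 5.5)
Your proof is correct and follows essentially the same route as the paper, whose proof is omitted by reference to the argument of Lemma~\ref{lem-14}. That argument uses weight parity for 1), the $2$-adic expansion for 2), and a comparison of coset leaders for 3); your direct version of 3) avoids Lemma~\ref{lem-3}. For 4) it shows that no element of the coset reaches $n-h\ge 2^s-2^{m+1}+3$, and your run-of-ones argument is a clean substitute for the paper's size bound there.

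One small slip is in your alternative bound for 4): $2^{s-3}>2^{m+1}$ alone is not the needed inequality, which is $2^{s-3}>2^{m+1}+7\cdot 2^{m-3}-3$. That inequality does hold, because $2^{s-3}\ge 2^{m+2}$ for $m\ge 5$. Also, $s\ge 10$ is not used only there, since your run argument also needs it via $m-1\ge 4$.
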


\begin{proof}
The proof follows the same argument as Lemma \ref{lem-14} and is therefore omitted.
\end{proof}

\begin{lemma}\label{lem-26}
Let $s \equiv 2 \pmod 4$ and $s \geq 10$. Then the following hold.
\begin{enumerate}
    \item $\{h : 1 \leq h \leq  2^{\frac{s}{2}+1} -4\} \subseteq \widetilde{V}_{(1,s)}$.
    
    \item $\{n - h : 1 \leq h \leq 2^{\frac{s}{2}} - 2\} \subseteq \widetilde{V}_{(0,s)}$.
    
    \item $|\widetilde{V}_{(1,s)}| = 2^{s-1} - 2 $ and $|\widetilde{V}_{(0,s)}| = 2^{s-1} $.
\end{enumerate}
\end{lemma}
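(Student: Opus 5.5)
The plan mirrors the proof of Lemma \ref{lem-15}, with $A$, $B_{(0,s)}$ replaced by $M$, $N_{(0,s)}$ and Lemma \ref{lem-14} replaced by Lemma \ref{lem-25}.

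\emph{Part 1).} By definition, $M=\bigcup_{h=1}^{2^{\frac{s}{2}+1}-4} C_h^{(2,n)}$ contains every $h$ with $1\le h\le 2^{\frac{s}{2}+1}-4$. Since $M\subseteq \widetilde{V}_{(1,s)}$ by construction, 1) follows immediately.

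\emph{Part 2).} Fix $1\le h\le 2^{\frac{s}{2}}-2$. We must show $n-h\notin \widetilde{V}_{(1,s)}$.
\begin{itemize}
\item First, $n-h\notin M$. By Lemma \ref{lem-21}, the only coset lying in both $M$ and $\bigcup_{h'} C_{n-h'}^{(2,n)}$ is $C_{2^{\frac{s}{2}}-1}^{(2,n)}$. Its element of the form $n-h$ is $n-(2^{\frac{s}{2}}-1)$, coming from the solution $h'=2^{\frac{s}{2}}-1$ in the analogue of Case 3 of Lemma \ref{lem-5}. This $h'$ lies outside our range $h\le 2^{\frac{s}{2}}-2$. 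To be fully safe, I will redo the case analysis of Lemma \ref{lem-5} with the bounds $1\le i\le 2^{\frac{s}{2}+1}-4$ and $1\le h\le 2^{\frac{s}{2}}-2$. The solutions listed there have $h\in\{2^{\frac{s}{2}}-1,\,2^{\frac{s}{2}+1}-2,\,2^{\frac{s}{2}+1}-3,\,3\cdot 2^{\frac{s}{2}-1}-1\}$, and none of these is at most $2^{\frac{s}{2}}-2$.
\item Second, $n-h\notin R_{(0,s)}\setminus N_{(0,s)}$. Since $s$ is even, $\omega_2(n-h)\equiv \omega_2(h)\pmod 2$. If $\omega_2(h)$ is even, then $n-h\in N_{(0,s)}$, because $h\le 2^{\frac{s}{2}+1}-4$. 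If $\omega_2(h)$ is odd, then $n-h\notin R_{(0,s)}$.
\end{itemize}
Removing the coset $C_{7\cdot 2^{\frac{s}{2}}+7}^{(2,n)}$ only shrinks $\widetilde{V}_{(1,s)}$, so $n-h\notin\widetilde{V}_{(1,s)}$ and hence $n-h\in\widetilde{V}_{(0,s)}$.

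\emph{Part 3).} Write $V_{(1,s)}=(R_{(0,s)}\setminus N_{(0,s)})\cup M$ as in the third construction. Parts 1), 3) and 4) of Lemma \ref{lem-25} give $C:=C_{7\cdot 2^{\frac{s}{2}}+7}^{(2,n)}\subseteq R_{(0,s)}$, $C\cap M=\emptyset$ and $C\cap N_{(0,s)}=\emptyset$. Therefore
\[
\widetilde{V}_{(1,s)}=V_{(1,s)}\setminus C, \qquad C\subseteq V_{(1,s)}.
\]
Using 3) of Lemma \ref{lem-23} and 2) of Lemma \ref{lem-25},
\[
|\widetilde{V}_{(1,s)}|=\Bigl(2^{s-1}-2+\tfrac{s}{2}\Bigr)-\tfrac{s}{2}=2^{s-1}-2,
\]
and then $|\widetilde{V}_{(0,s)}|=n-1-|\widetilde{V}_{(1,s)}|=2^{s-1}$.

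\emph{Main obstacle.} The only delicate point is confirming that $|V_{(1,s)}|=2^{s-1}-2+\frac{s}{2}$ in 3) of Lemma \ref{lem-23}, which the paper states with proof omitted. If needed, I would recheck it by the counting in Lemma \ref{lem-8}.
\begin{itemize}
\item $|M|=|A|-s$, since $C_{2^{\frac{s}{2}+1}-3}^{(2,n)}$ is dropped; this gives $|M|=s\cdot 2^{\frac{s}{2}}-\frac{5s}{2}$.
\item $|R_{(0,s)}\cap M|=|N_{(0,s)}|$, by the symmetry $h\leftrightarrow n-h$.
\item $N_{(0,s)}\cap M=\emptyset$, by Lemma \ref{lem-22}.
\end{itemize}
Hence $|V_{(1,s)}|=|R_{(0,s)}|+|M|-2|N_{(0,s)}|$. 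For $s\equiv 2\pmod 4$, the value $\omega_2(2^{\frac{s}{2}+1}-3)=\frac{s}{2}$ is odd, so it does not affect $N_{(0,s)}$. Then $|N_{(0,s)}|=s\cdot 2^{\frac{s}{2}-1}-\frac{3s}{2}$, and $|V_{(1,s)}|=2^{s-1}-2+\frac{s}{2}$, consistent with Lemma \ref{lem-23}.
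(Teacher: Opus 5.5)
Your proposal is correct and follows the route the paper intends, since the paper omits this proof by deferring to Lemma \ref{lem-15}: 1) from $M\subseteq\widetilde{V}_{(1,s)}$, 2) via the $M$-plus-parity argument of Lemma \ref{lem-8}, and 3) via $\widetilde{V}_{(1,s)}=V_{(1,s)}\setminus C_{7\cdot 2^{s/2}+7}^{(2,n)}$ together with Lemmas \ref{lem-23} and \ref{lem-25}. Your recount confirming $|V_{(1,s)}|=2^{s-1}-2+\frac{s}{2}$ is also correct, and your reliance on Lemma \ref{lem-5}'s solution list in part 2) is safe: the one solution that lemma misses in its Case 3, namely $(i,h)=(2^{\frac{s}{2}+1}-2,\,2^{\frac{s}{2}+1}-2)$, still has $h>2^{\frac{s}{2}}-2$.
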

\begin{proof}
The proof follows the same argument as Lemma \ref{lem-15} and is therefore omitted.
\end{proof}

\begin{theorem}\label{th-29}
Let $s \geq 10$ and $s \equiv 2 \pmod{4}$. Then we obtain the following conclusions.
\begin{enumerate}

\item The binary cyclic code $\widetilde{\mathcal{C}}_{(1,s)}$ has parameters
    $
    \left[ 2^s - 1, 2^{s-1} + 1 , d \geq 2^{\frac{s}{2} + 1 } -3 \right]
    $
    and the binary cyclic code $\widetilde{\mathcal{C}}_{(0,s)}$ has parameters
    $
    \left[ 2^s - 1, 2^{s-1} - 1 , d \geq 2^{\frac{s}{2}} - 1 \right].
    $
\item The dual code $\widetilde{\mathcal{C}}_{(1,s)}^{ \perp}$ has parameters
    $
    \left[ 2^s - 1, 2^{s-1} - 2 , d^\perp \geq 2^{\frac{s}{2}} \right]
    $
    and the dual code $\widetilde{\mathcal{C}}_{(0,s)}^{ \perp} $ has parameters
    $
    \left[ 2^s - 1, 2^{s-1}, d^\perp \geq 2^{\frac{s}{2}+1} -2 \right].
    $
\end{enumerate}    
\end{theorem}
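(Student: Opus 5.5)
The proof will follow the pattern of Theorem \ref{th-16}, with Lemma \ref{lem-26} playing the role of Lemma \ref{lem-15}. We use the BCH bound for the two codes and for their duals, and Lemma \ref{lem-26} for the dimensions.

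\emph{Dimensions.} By 3) of Lemma \ref{lem-26}, $|\widetilde{V}_{(1,s)}| = 2^{s-1}-2$ and $|\widetilde{V}_{(0,s)}| = 2^{s-1}$. Hence
$\dim(\widetilde{\mathcal{C}}_{(1,s)}) = n - (2^{s-1}-2) = 2^{s-1}+1$ and $\dim(\widetilde{\mathcal{C}}_{(0,s)}) = 2^{s-1}-1$. The dual dimensions then follow from $\dim(\widetilde{\mathcal{C}}_{(i,s)}) + \dim(\widetilde{\mathcal{C}}^{\perp}_{(i,s)}) = n$, giving $2^{s-1}-2$ and $2^{s-1}$ respectively.

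\emph{Minimum distances.} By 1) of Lemma \ref{lem-26}, the defining set $\widetilde{V}_{(1,s)}$ contains the run $\{1,2,\ldots,2^{\frac{s}{2}+1}-4\}$ of $2^{\frac{s}{2}+1}-4$ consecutive integers. The BCH bound with $b=1$ and $\delta-2 = 2^{\frac{s}{2}+1}-5$ gives $d \geq 2^{\frac{s}{2}+1}-3$. By 2) of Lemma \ref{lem-26}, $\widetilde{V}_{(0,s)}$ contains $\{n-h : 1\le h\le 2^{\frac{s}{2}}-2\}$, a run of $2^{\frac{s}{2}}-2$ consecutive residues. This gives $d(\widetilde{\mathcal{C}}_{(0,s)}) \ge 2^{\frac{s}{2}}-1$.

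\emph{Dual distances.} The defining set of $\widetilde{\mathcal{C}}^{\perp}_{(i,s)}$ is $\mathbb{Z}_n\setminus(-\widetilde{V}_{(i,s)})$. Since $0\notin \widetilde{V}_{(i,s)}$, the element $0$ always lies in this set.

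For $i=1$, 2) of Lemma \ref{lem-26} says each $n-h$ with $1\le h\le 2^{\frac{s}{2}}-2$ lies in $\widetilde{V}_{(0,s)}$, hence not in $\widetilde{V}_{(1,s)}$. So $h\notin -\widetilde{V}_{(1,s)}$, and the dual defining set contains $\{0,1,\ldots,2^{\frac{s}{2}}-2\}$. That is $2^{\frac{s}{2}}-1$ consecutive elements, giving $d^\perp\ge 2^{\frac{s}{2}}$.

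For $i=0$, 1) of Lemma \ref{lem-26} gives $h\in\widetilde{V}_{(1,s)}$ for $1\le h\le 2^{\frac{s}{2}+1}-4$, so $h\notin\widetilde{V}_{(0,s)}$. Hence $n-h\notin -\widetilde{V}_{(0,s)}$, and the dual defining set contains $\{n-h: 0\le h\le 2^{\frac{s}{2}+1}-4\}$. These are $2^{\frac{s}{2}+1}-3$ consecutive residues, yielding $d^\perp\ge 2^{\frac{s}{2}+1}-2$.

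The main obstacle does not lie in this theorem but in Lemma \ref{lem-26}, which is taken as given. There the critical facts are that removing $C_{7\cdot 2^{s/2}+7}^{(2,n)}$, of size $\frac{s}{2}$ by Lemma \ref{lem-25}, from $V_{(1,s)}$ drops its size from $2^{s-1}-2+\frac{s}{2}$ (Lemma \ref{lem-23}) to exactly $2^{s-1}-2$. One also needs that this coset avoids both $M$ and $N_{(0,s)}$, so that the consecutive runs survive.
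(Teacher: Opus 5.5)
Your proposal is correct and follows essentially the same route as the paper's proof. Both use the BCH bound on the runs supplied by 1) and 2) of Lemma \ref{lem-26}, take the dimensions from 3) of that lemma, and get the dual distances via the dual defining set $\mathbb{Z}_n\setminus(-\widetilde{V}_{(i,s)})$. Your constants are the right ones, whereas the paper's written proof mistakenly carries over $2^{\frac{s}{2}+1}+5$ and $\{n-h: 0\le h\le 2^{\frac{s}{2}+1}+4\}$ from the sixth construction.

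One small correction to your closing remark. The disjointness of $C_{7\cdot 2^{s/2}+7}^{(2,n)}$ from $M$ and $N_{(0,s)}$ is what makes the size count in Lemma \ref{lem-26} work. It is not needed for the consecutive runs to survive. Those runs survive automatically, since $M\subseteq\widetilde{V}_{(1,s)}$ by definition and $\widetilde{V}_{(0,s)}\supseteq V_{(0,s)}$.
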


\begin{proof}
By 1) and 2) of Lemma \ref{lem-26}, the defining set $\widetilde{V}_{(1,s)}$ of the code $\widetilde{\mathcal{C}}_{(1,s)}$ contains the set $\{h : 1 \leq h \leq  2^{\frac{s}{2}+1} -4\}$, while the defining set $\widetilde{V}_{(0,s)}$ of  $\widetilde{\mathcal{C}}_{(0,s)}$ contains the set $\{n - h : 1 \leq h \leq 2^{\frac{s}{2}} - 2\}$.
The BCH bound implies that
$
d(\widetilde{C}_{(1,s)}) \geq 2^{\frac{s}{2} + 1 } +5 \quad \text{and} \quad d(\widetilde{C}_{(0,s)}) \geq 2^{\frac{s}{2}} - 1,
$
respectively.
 For each $i \in \{0,1\}$, we have  $\dim(\widetilde{\mathcal{C}}_{(i,s)}) = n - |\widetilde{V}_{(i,s)}|$. Then the dimensions follow from Lemma \ref{lem-26}. Since $\dim(\widetilde{\mathcal{C}}_{(i,s)}) +\dim(\widetilde{\mathcal{C}}^{\perp}_{(i,s)})= n$, the dimension of $\widetilde{\mathcal{C}}^{\perp}_{(i,s)}$ for each $i \in \{0,1\}$ directly follows. To study the minimum distances of the dual codes, we consider the following two cases. 

Case 1: Let $i =1$. The set $\{h : 0 \leq h \leq 2^{\frac{s}{2}} - 2\} \subseteq \mathbb{Z}_{n} \setminus (-\widetilde{V}_{(1,s)})$ by 2) of Lemma \ref{lem-26}.
  By the BCH bound, the desired lower bound on $d(\widetilde{\mathcal{C}}^{\perp}_{(1,s)})$ is obtained.
 
Case 2: Let $i =0$. The set $\{n-h : 0 \leq h \leq 2^{\frac{s}{2}+1} + 4\} \subseteq \mathbb{Z}_{n} \setminus (-\widetilde{V}_{(0,s)})$ by 1) of Lemma \ref{lem-26}.
   By the BCH bound, the desired lower bound on   $d(\widetilde{\mathcal{C}}^{\perp}_{(0,s)})$ is obtained.
 
 Then we have completed the proof. 
\end{proof}

\begin{remark}
In comparison with the binary cyclic codes with the same lengths and dimensions in \cite{CTCD}, the cyclic codes $\widetilde{\mathcal{C}}_{(1,s)}$ and $\widetilde{\mathcal{C}}_{(0,s)}$ in Theorem \ref{th-29} have larger minimum distances and dual distances. 
In comparison with the binary cyclic codes with the same lengths and dimensions in \cite{LWQ}, the cyclic codes $\widetilde{\mathcal{C}}_{(1,s)}$ and $\widetilde{\mathcal{C}}_{(0,s)}$ in Theorem \ref{th-29} have larger minimum distances and the same dual distances.
\end{remark}

The extended code of $\widetilde{\mathcal{C}}_{(i,s)}$ are characterized as follows.

\begin{theorem}\label{th-30}
Let $s \equiv 2 \pmod{4}$ and $s \geq 10$. Then the extended code $\overline{\widetilde{\mathcal{C}}_{(1,s)}}$ has parameters
    $$
    \left[ 2^s, 2^{s-1} + 1, d \geq 2^{\frac{s}{2} + 1 } -2 \right]
    $$
    and the extended code $\overline{\widetilde{\mathcal{C}}_{(0,s)}}$ has parameters
    $$
    \left[ 2^s, 2^{s-1} - 1, d \geq 2^{\frac{s}{2}}  \right].
    $$
\end{theorem}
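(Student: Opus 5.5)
The plan is to read off the extended-code parameters directly from Theorem \ref{th-29}, exactly as Theorems \ref{th-2}, \ref{th-17} and \ref{th-23} were obtained from their parent theorems.

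By Theorem \ref{th-29}, the cyclic code $\widetilde{\mathcal{C}}_{(1,s)}$ has parameters $[2^s-1,2^{s-1}+1,d]$ with $d\geq 2^{\frac{s}{2}+1}-3$. The cyclic code $\widetilde{\mathcal{C}}_{(0,s)}$ has parameters $[2^s-1,2^{s-1}-1,d]$ with $d\geq 2^{\frac{s}{2}}-1$. I will use these bounds as stated. The BCH argument behind them comes from parts 1) and 2) of Lemma \ref{lem-26}: the defining set $\widetilde{V}_{(1,s)}$ contains the $2^{\frac{s}{2}+1}-4$ consecutive integers $1,\dots,2^{\frac{s}{2}+1}-4$. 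The defining set $\widetilde{V}_{(0,s)}$ contains $n-1,\dots,n-(2^{\frac{s}{2}}-2)$.

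Next I apply the extension rule recalled in the subsection on extended codes. Appending an overall parity-check coordinate gives an $[n+1,k,d_1]$ code with the same dimension, where $d_1=d$ if $d$ is even and $d_1=d+1$ if $d$ is odd. The true minimum distance $d$ is not known exactly, so I argue via the bound. Let $\delta$ be the odd lower bound, namely $2^{\frac{s}{2}+1}-3$ or $2^{\frac{s}{2}}-1$. If $d\geq \delta+1$, then $d_1\geq d\geq\delta+1$. If $d=\delta$, then $d$ is odd and $d_1=\delta+1$. Either way $d_1\geq \delta+1$. This yields length $2^s$, unchanged dimensions $2^{s-1}+1$ and $2^{s-1}-1$, and minimum distances at least $2^{\frac{s}{2}+1}-2$ and $2^{\frac{s}{2}}$ respectively, as claimed.

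There is no real obstacle. The only care needed is the parity argument above, because extension gains $+1$ only when the actual distance is odd. One should also rely on the bound $2^{\frac{s}{2}+1}-3$ that Lemma \ref{lem-26} genuinely supports, namely a run of $2^{\frac{s}{2}+1}-4$ consecutive elements giving $\delta=2^{\frac{s}{2}+1}-3$.
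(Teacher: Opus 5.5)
Your proposal is correct and follows the paper's own route: the paper likewise deduces the result from Theorem~\ref{th-29} and the parity rule for extended codes, and your case split ($d\geq\delta+1$ versus $d=\delta$ odd) makes explicit a step the paper leaves implicit. You are also right to use the BCH bound $2^{\frac{s}{2}+1}-3$, which comes from the run $1,\dots,2^{\frac{s}{2}+1}-4$ in Lemma~\ref{lem-26}, rather than the value $2^{\frac{s}{2}+1}+5$ that appears as a typo inside the paper's proof of Theorem~\ref{th-29}.
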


\begin{proof}
 The desired results can be deduced from Theorem \ref{th-29} and the definition of the extended codes.
\end{proof}

\subsection{The fifth construction of binary cyclic codes with $d\cdot d^\perp \approx n$}
 Let $n = 2^s - 1$ and $s \geq 6$ be even. Let
\[
R_{(i,s)} := \{h : 1 \leq h \leq n-1, \, \omega_2(h) \equiv i \pmod{2}\}
\]
for $i \in \{0,1\}$. Let
\[
E := \bigcup_{h=1}^{2^{\frac{s}{2}+1}-1} C_{h}^{(2,n)}
\]
and
\[
F_{(i,s)} := \bigcup_{\substack{h=1 \\ \omega_2(h) \equiv i \pmod{2}}}^{2^{\frac{s}{2}+1}-1} C_{n-h}^{(2,n)}
\]
for each $i \in \{0,1\}$. Let
\[
U_{(1,s)} = 
\begin{cases}
(R_{(0,s)} \setminus F_{(0,s)}) \cup E & \text{if } s \equiv 2 \pmod{4}, \\
(R_{(1,s)} \setminus F_{(1,s)}) \cup E & \text{if } s \equiv 0 \pmod{4},
\end{cases}
\]
and $U_{(0,s)} = \{1,2,\cdots,n-1\} \setminus U_{(1,s)}$. 
Then both $U_{(0,s)}$ and $U_{(1,s)}$ consist of unions of certain $2$-cyclotomic cosets. 
Let $\mathcal{C}_{(i,s)}'''$ be the binary cyclic code of length $n$ with defining set $U_{(i,s)}$ for  $i \in \{0,1\}$. The following lemmas are used to characterize the parameters of $\mathcal{C}_{(i,s)}'''$.

\begin{lemma}\label{lem-28}
Let $s \geq 6$ be even and $i$ be an integer with $1 \leq i \leq  2^{\frac{s}{2} + 1} - 1$. Then we have
\begin{align*}
 &\left(\bigcup_{i=1}^{2^{\frac{s}{2} + 1} - 1} C_i^{(2,n)}\right) \bigcap  \left(\bigcup_{h=1}^{2^{\frac{s}{2} + 1} - 1} C_{n-h}^{(2,n)}\right)\\
 &= C_{2^{\frac{s}{2}-1}-1}^{(2,n)} \cup C_{2^{\frac{s}{2}}-1}^{(2,n)} \cup C_{3 \cdot 2^{\frac{s}{2}-1}-1}^{(2,n)} \cup C_{2^{\frac{s}{2}+1}-3}^{(2,n)} \cup C_{2^{\frac{s}{2}+1}-1}^{(2,n)}.
\end{align*}
\end{lemma}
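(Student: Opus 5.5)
We will follow the argument of Lemma \ref{lem-5}, now with the bound $2^{\frac{s}{2}+1}-1$ in place of $2^{\frac{s}{2}+1}-2$. For $1\le i\le 2^{\frac{s}{2}+1}-1$, the coset $C_i^{(2,n)}$ meets the union $\bigcup_{h} C_{n-h}^{(2,n)}$ if and only if there exist $1\le h\le 2^{\frac{s}{2}+1}-1$ and $0\le k\le s-1$ with
\[
i+h\cdot 2^k\equiv 0 \pmod n .
\]
Since both unions are unions of full cyclotomic cosets, the intersection is then the union of the cosets $C_i^{(2,n)}$ over all such solutions $i$. Moreover, the roles of $(i,h)$ and the exponent $k$ are symmetric under $k\mapsto s-k$, since $i\equiv -h2^k$ is equivalent to $h\equiv -i2^{s-k}$. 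So we only need to treat $k\le \frac{s}{2}$ and then mirror the solutions.

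We split into cases according to $k$, using $1<i+h2^k\le (2^{\frac{s}{2}+1}-1)(2^k+1)$.

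\textbf{Case $k\le \frac{s}{2}-3$.} Here the bound is at most about $2^{s-1}<n$, so there is no solution.

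\textbf{Case $k=\frac{s}{2}-2$.} The bound is $(2^{\frac{s}{2}+1}-1)(2^{\frac{s}{2}-2}+1)=2^{s-1}+\tfrac{7}{4}2^{\frac{s}{2}}-1<n$, so again there is no solution. This also confirms that Case~1 of Lemma \ref{lem-5} still applies with the enlarged range.

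\textbf{Case $k=\frac{s}{2}-1$.} The bound is $2^s+\tfrac{3}{2}2^{\frac{s}{2}}-1<2n$, so we need exactly $i+h2^{\frac{s}{2}-1}=2^s-1$. Then $i\equiv -1 \pmod{2^{\frac{s}{2}-1}}$, so $i=j2^{\frac{s}{2}-1}-1$ with $1\le j\le 4$, and $h=2^{\frac{s}{2}+1}-j$. Since $h\le 2^{\frac{s}{2}+1}-1$ we get $j\ge1$, giving the solutions
\[
(i,h)=(2^{\frac{s}{2}-1}-1,\,2^{\frac{s}{2}+1}-1),\ (2^{\frac{s}{2}}-1,\,2^{\frac{s}{2}+1}-2),\ (3\cdot2^{\frac{s}{2}-1}-1,\,2^{\frac{s}{2}+1}-3),\ (2^{\frac{s}{2}+1}-1,\,2^{\frac{s}{2}+1}-4).
\]

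\textbf{Case $k=\frac{s}{2}$.} The bound is $2^{s+1}+2^{\frac{s}{2}}-1$, which now slightly exceeds $2n$, so we must check both $i+h2^{\frac{s}{2}}=n$ and $i+h2^{\frac{s}{2}}=2n$.
\begin{itemize}
\item The first equation forces $i\equiv -1 \pmod{2^{\frac{s}{2}}}$, so $i\in\{2^{\frac{s}{2}}-1,\,2^{\frac{s}{2}+1}-1\}$, giving $h=2^{\frac{s}{2}}-1$ or $h=2^{\frac{s}{2}}-2$.
\item The second equation forces $i\equiv -2 \pmod{2^{\frac{s}{2}}}$, so $i=2^{\frac{s}{2}}-2$ or $i=2^{\frac{s}{2}+1}-2$, and correspondingly $h=2^{\frac{s}{2}+1}-1$ or $h=2^{\frac{s}{2}+1}-2$. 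The second pair violates the size bound, and the first gives $i=2^{\frac{s}{2}}-2\in C_{2^{\frac{s}{2}-1}-1}^{(2,n)}$.
\end{itemize}

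\textbf{Cases $k\ge\frac{s}{2}+1$.} By the symmetry above, these produce solutions whose $i$-values are the $h$-values found in the cases $k\le \frac{s}{2}-1$. These are $2^{\frac{s}{2}+1}-1$, $2^{\frac{s}{2}+1}-2\in C_{2^{\frac{s}{2}}-1}^{(2,n)}$, $2^{\frac{s}{2}+1}-3$, and $2^{\frac{s}{2}+1}-4\in C_{2^{\frac{s}{2}-1}-1}^{(2,n)}$.

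Collecting the $i$-values from all cases and reducing each to its coset, with evens halved to odd representatives, gives exactly the five cosets in the statement. Conversely, each listed coset actually occurs, as witnessed by the explicit pairs above.

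The main obstacle is the boundary case $k=\frac{s}{2}$, where the crude bound no longer excludes the multiple $2n$ and the two congruence conditions must be handled carefully. A second point of care is the final identification of cosets: we must make sure that even representatives such as $2^{\frac{s}{2}+1}-4$ and $2^{\frac{s}{2}}-2$ fall into the listed cosets, which we do by dividing by $2$, and that no extra coset appears. Lemma \ref{lem-3} guarantees that the five odd leaders are distinct cosets.
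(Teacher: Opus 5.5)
Your proposal is correct and follows essentially the paper's proof: the same case split on the exponent $k$ using the bound $(2^{\frac{s}{2}+1}-1)(2^k+1)$, the same explicit solutions for $k=\frac{s}{2}-1$ and $k=\frac{s}{2}$, and the mirrored cases $k\geq\frac{s}{2}+1$, which you justify explicitly via $k\mapsto s-k$ where the paper only says ``similarly''. One slip needs correcting: for $k=\frac{s}{2}$ the pair $(i,h)=(2^{\frac{s}{2}+1}-2,\,2^{\frac{s}{2}+1}-2)$ violates no bound and does solve $i+h2^{\frac{s}{2}}=2n$. This changes nothing, since $2^{\frac{s}{2}+1}-2\in C_{2^{\frac{s}{2}}-1}^{(2,n)}$ is already listed, and your handling of the $2n$ equation is still more accurate than the paper's, which claims it has no solution and thereby also overlooks $(2^{\frac{s}{2}}-2,\,2^{\frac{s}{2}+1}-1)$.
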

\begin{proof}
For $1 \leq i \leq  2^{\frac{s}{2} + 1} - 1$, we have $C_i^{(2,n)} \subseteq \bigcup_{h=1}^{2^{\frac{s}{2} + 1} - 2} C_{n-h}^{(2,n)}$ if and only if $i \in C_{n-h}^{(2,n)}$ for some $1 \leq h \leq  2^{\frac{s}{2} + 1} - 1$. Equivalently, one can find an integer $k$ with $0 \leq k \leq s - 1$ such that
\begin{eqnarray}\label{eqn-11}
i + h \cdot 2^k \equiv 0 \pmod{n}.
\end{eqnarray}
 
Case 1: If $0 \leq k \leq \frac{s}{2}-2$, then for any $1 \leq i, h \leq 2^{\frac{s}{2} + 1} - 1$ it holds that
\begin{eqnarray*}
1 < i + h \cdot 2^k &\leq (2^{\frac{s}{2}+1}-1)(2^{\frac{s}{2}-2}+1)\\
&=2^{s-1} + 7 \cdot 2^{\frac{s}{2}-2} -1 < n,
\end{eqnarray*}
which implies that Equation \eqref{eqn-11} has no solution.

Case 2: If $k = \frac{s}{2}-1$, then for any $1 \leq i, h \leq 2^{\frac{s}{2} + 1} - 1$ we have
\begin{eqnarray*}
1 < i + h \cdot 2^k &\leq (2^{\frac{s}{2}+1}-1)(2^{\frac{s}{2}-1}+1)\\
&= 2^s + 3 \cdot 2^{\frac{s}{2}-1} -1 <2n.
\end{eqnarray*}
In this case, Equation \eqref{eqn-11} holds if and only if $i + h \cdot 2^{\frac{s}{2}-1} = 2^s - 1$. Therefore, the equation has the following solutions: $(i, h) = (2^{\frac{s}{2}} - 1, 2^{\frac{s}{2}+1} - 2)$, $(2^{\frac{s}{2}-1} - 1, 2^{\frac{s}{2}+1} - 1)$, $(2^{\frac{s}{2}+1} - 1, 2^{\frac{s}{2}+1} - 4)$ and $(3 \cdot 2^{\frac{s}{2}-1} - 1, 2^{\frac{s}{2}+1} - 3)$.

Case 3: If $k = \frac{s}{2}$, then for any $1 \leq i, h \leq 2^{\frac{s}{2} + 1} - 1$ we have
\begin{eqnarray*}
1 < i + h \cdot 2^k &\leq (2^{\frac{s}{2}+1}-1)(2^\frac{s}{2}+1)\\
&= 2^{s+1} + 2^\frac{s}{2}-1  < 3n.
\end{eqnarray*}
In this case, if $i + h \cdot 2^{\frac{s}{2}} = 2^s - 1$, then the solutions are given by $(i, h) = (2^{\frac{s}{2}} - 1, 2^{\frac{s}{2}} - 1)$ and $(2^{\frac{s}{2}+1} - 1, 2^{\frac{s}{2}} - 2)$. If $i + h \cdot 2^{\frac{s}{2}} = 2^{s+1} - 2$, then this equation  has no solution.

Case 4: If $k = \frac{s}{2}+1$, then $s - k = \frac{s}{2} - 1$. Similarly to Case 2, Equation \eqref{eqn-11} has solutions $(i, h) = ( 2^{\frac{s}{2}+1} - 2, 2^{\frac{s}{2}} - 1)$, $( 2^{\frac{s}{2}+1} - 1, 2^{\frac{s}{2}-1} - 1)$, $( 2^{\frac{s}{2}+1} - 4, 2^{\frac{s}{2}+1} - 1)$ and $( 2^{\frac{s}{2}+1} - 3, 3 \cdot 2^{\frac{s}{2}-1} - 1)$.

Case 5: If $ \frac{s}{2} + 2 \leq k \leq s-1$, then $1 \leq s-k \leq \frac{s}{2}-2$. Similarly to Case 1, Equation \eqref{eqn-11} has no solution.

Therefore, the desired conclusion is obtained from the above discussion.
\end{proof}

 \begin{lemma}\label{lem-29}
   Let $s \geq 6$ be even. Then the following conclusions are obtained.
   \begin{enumerate}
    \item  $F_{(i,s)} \subseteq R_{(i,s)}$ for  $i \in \{0, 1\}$.
    
    \item If $s \equiv 2 \pmod{4}$, then $F_{(0,s)} \cap E = C_{2^{\frac{s}{2}-1}-1}^{(2,n)} \cup C_{2^{\frac{s}{2}+1}-1}^{(2,n)}$ and $F_{(1,s)} \cap E = C_{2^{\frac{s}{2}}-1}^{(2,n)} \cup C_{3 \cdot 2^{\frac{s}{2}-1}-1}^{(2,n)} \cup C_{2^{\frac{s}{2}+1}-3}^{(2,n)}$.
    
    \item If $s \equiv 0 \pmod{4}$, then $F_{(0,s)} \cap E = C_{2^{\frac{s}{2}}-1}^{(2,n)} \cup C_{3 \cdot 2^{\frac{s}{2}-1}-1}^{(2,n)} \cup C_{2^{\frac{s}{2}+1}-3}^{(2,n)}$ and $F_{(1,s)} \cap E = C_{2^{\frac{s}{2}-1}-1}^{(2,n)} \cup C_{2^{\frac{s}{2}+1}-1}^{(2,n)}$.
\end{enumerate}

 \end{lemma}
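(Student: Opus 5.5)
The plan is to follow the proof of Lemma \ref{lem-7} step by step, using Lemma \ref{lem-28} in place of Lemma \ref{lem-5}.

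For part 1), note that $\omega_2(n-h)=s-\omega_2(h)$ for $0\le h\le n-1$, since $n=2^s-1$ is the all-ones string of length $s$. Because $s$ is even, $\omega_2(n-h)\equiv\omega_2(h)\pmod 2$. The weight $\omega_2$ is invariant under multiplication by $2$ modulo $n$, because this is a cyclic shift of the $s$-bit expansion. Hence every element of $C_{n-h}^{(2,n)}$ with $\omega_2(h)\equiv i\pmod 2$ lies in $R_{(i,s)}$. Also, no such element is $0$, since $h\le 2^{\frac{s}{2}+1}-1<n$. This gives $F_{(i,s)}\subseteq R_{(i,s)}$.

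For parts 2) and 3), observe that $F_{(0,s)}\cup F_{(1,s)}=\bigcup_{h=1}^{2^{\frac{s}{2}+1}-1}C_{n-h}^{(2,n)}$. Lemma \ref{lem-28} then gives
\[
(F_{(0,s)}\cap E)\cup(F_{(1,s)}\cap E)=C_{2^{\frac{s}{2}-1}-1}^{(2,n)}\cup C_{2^{\frac{s}{2}}-1}^{(2,n)}\cup C_{3\cdot 2^{\frac{s}{2}-1}-1}^{(2,n)}\cup C_{2^{\frac{s}{2}+1}-3}^{(2,n)}\cup C_{2^{\frac{s}{2}+1}-1}^{(2,n)}.
\]
Both $F_{(i,s)}\cap E$ are unions of cyclotomic cosets. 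By part 1) and $R_{(0,s)}\cap R_{(1,s)}=\emptyset$, each of the five cosets lies in $F_{(i,s)}\cap E$ for exactly the $i$ congruent to its weight parity, and it is disjoint from the other $F_{(i',s)}$. So it remains to compute the five weights:
\begin{itemize}
\item $\omega_2(2^{\frac{s}{2}-1}-1)=\frac{s}{2}-1$;
\item $\omega_2(2^{\frac{s}{2}+1}-1)=\frac{s}{2}+1$;
\item $\omega_2(2^{\frac{s}{2}}-1)=\frac{s}{2}$;
\item $\omega_2(3\cdot2^{\frac{s}{2}-1}-1)=\frac{s}{2}$, since the number equals $2^{\frac{s}{2}}+(2^{\frac{s}{2}-1}-1)$;
\item $\omega_2(2^{\frac{s}{2}+1}-3)=\frac{s}{2}$, since the number is $2^{\frac{s}{2}+1}-1$ with bit $1$ cleared.
\end{itemize}
The first two weights have the parity of $\frac{s}{2}+1$, and the last three have the parity of $\frac{s}{2}$.

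If $s\equiv2\pmod 4$, then $\frac{s}{2}$ is odd. The first two cosets therefore go to $F_{(0,s)}\cap E$ and the other three to $F_{(1,s)}\cap E$, which is part 2). If $s\equiv0\pmod4$, the parities swap, which gives part 3).

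The only point needing care is the set identity taken from Lemma \ref{lem-28}. Once it is granted, the argument is routine bookkeeping of binary weights.
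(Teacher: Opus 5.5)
Your proposal is correct and takes the same route as the paper, which omits this proof as identical to that of Lemma \ref{lem-7}. That argument is: parity of $\omega_2$ is preserved under $h\mapsto n-h$ and under doubling, and the five cosets from Lemma \ref{lem-28} are then split according to the parity of their leaders' weights ($\frac{s}{2}\pm 1$ versus $\frac{s}{2}$), exactly as you do, and your explicit use of $R_{(0,s)}\cap R_{(1,s)}=\emptyset$ to show each coset lies in exactly one $F_{(i,s)}$ fills in a step the paper leaves implicit.
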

\begin{proof}
The proof follows the same argument as Lemma \ref{lem-7} and is therefore omitted.
\end{proof}

 \begin{lemma}\label{lem-30}
Let $s \geq 6$ be even. Then the following conclusions are obtained.
\begin{enumerate}
    \item $\{h : 1 \leq h \leq  2^{\frac{s}{2}+1} + 4\} \subseteq U_{(1,s)}$.
    
    \item $\{n - h : 1 \leq h \leq 2^{\frac{s}{2}-1} - 2\} \subseteq U_{(0,s)}$.
    
    \item If $s \equiv 2 \pmod{4}$, then $|U_{(1,s)}| = 2^{s-1} - 2 + \frac{5s}{2}$ and $|U_{(0,s)}| = 2^{s-1} - \frac{5s}{2}$.
    
    \item If $s \equiv 0 \pmod{4}$, then $|U_{(1,s)}| = 2^{s-1} + \frac{3s}{2}$ and $|U_{(0,s)}| = 2^{s-1} - 2 - \frac{3s}{2}$.
\end{enumerate}
\end{lemma}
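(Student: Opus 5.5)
The plan follows the proof of Lemma \ref{lem-8}, with the range $1\le h\le 2^{\frac{s}{2}+1}-1$ in place of $2^{\frac{s}{2}+1}-2$, and with Lemmas \ref{lem-28} and \ref{lem-29} in place of Lemmas \ref{lem-5} and \ref{lem-7}.

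\textbf{Part 1).} Since $E\subseteq U_{(1,s)}$, it suffices to show that each $h$ with $1\le h\le 2^{\frac{s}{2}+1}+4$ lies in $E$. For $h\le 2^{\frac{s}{2}+1}-1$ this holds by the definition of $E$. The five remaining values are handled one at a time.
\begin{itemize}
\item $2^{\frac{s}{2}+1}\in C_1^{(2,n)}$.
\item $2^{\frac{s}{2}+1}+2=2(2^{\frac{s}{2}}+1)\in C^{(2,n)}_{2^{\frac{s}{2}}+1}$.
\item $2^{\frac{s}{2}+1}+4=4(2^{\frac{s}{2}-1}+1)\in C^{(2,n)}_{2^{\frac{s}{2}-1}+1}$.
\item For $j\in\{1,3\}$, use $2^{\frac{s}{2}+1}\cdot 2^{\frac{s}{2}-1}=2^s\equiv 1 \pmod n$. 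This gives $(2^{\frac{s}{2}+1}+j)\,2^{\frac{s}{2}-1}\equiv 1+j\,2^{\frac{s}{2}-1}\pmod n$, and $1+j\,2^{\frac{s}{2}-1}\le 3\cdot 2^{\frac{s}{2}-1}+1<2^{\frac{s}{2}+1}-1$.
\end{itemize}
In every case the coset meets $\{1,\dots,2^{\frac{s}{2}+1}-1\}$, so all five values lie in $E$.

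\textbf{Part 2).} Fix $1\le h\le 2^{\frac{s}{2}-1}-2$.
\begin{itemize}
\item First, $n-h\notin E$. By Lemma \ref{lem-28}, the cosets common to $E$ and $\bigcup_{h'} C^{(2,n)}_{n-h'}$ are the five listed ones. Writing each element of these five cosets as $n-h'$ with $1\le h'\le 2^{\frac{s}{2}+1}-1$, the proof of Lemma \ref{lem-28} gives the values $h'\in\{2^{\frac{s}{2}-1}-1,\ 2^{\frac{s}{2}}-2,\ 2^{\frac{s}{2}}-1,\dots\}$. All of these are at least $2^{\frac{s}{2}-1}-1$, hence larger than our $h$. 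I will double-check this against the solution list in Cases 2--4 of that proof.
\item Second, $n-h\notin R_{(i,s)}\setminus F_{(i,s)}$, by the same two-case parity argument as in Lemma \ref{lem-8}. If $\omega_2(h)\equiv i\pmod 2$, then $n-h\in F_{(i,s)}$. Otherwise, since $\omega_2(n-h)=s-\omega_2(h)\equiv\omega_2(h)\pmod 2$, we get $n-h\in R_{(i\oplus1,s)}$.
\end{itemize}
Together these give $n-h\notin U_{(1,s)}$, that is, $n-h\in U_{(0,s)}$.

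\textbf{Parts 3) and 4).} Write $U_{(1,s)}=(R\setminus F)\cup E$, where $(R,F)=(R_{(0,s)},F_{(0,s)})$ if $s\equiv2\pmod 4$ and $(R,F)=(R_{(1,s)},F_{(1,s)})$ if $s\equiv0\pmod 4$. Since $F\subseteq R$ by Lemma \ref{lem-29}, inclusion--exclusion gives
\[
|U_{(1,s)}|=|R|-|F|+|E|-|R\cap E|+|F\cap E|.
\]
The individual terms are obtained as follows.
\begin{itemize}
\item By Lemma \ref{lem-3}, all odd $h\le 2^{\frac{s}{2}+1}-1$ are coset leaders of size $s$, except $2^{\frac{s}{2}}+1$, whose coset has size $s/2$. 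Hence $|E|=s\cdot 2^{\frac{s}{2}}-\frac{s}{2}$.
\item By Lemma \ref{lem-6}, combined with the fact that $\omega_2(2^{\frac{s}{2}}+1)=2$, we get $|R\cap E|=|F|=s\cdot2^{\frac{s}{2}-1}-\frac{s}{2}$ in the case $R=R_{(0,s)}$, and $|R\cap E|=|F|=s\cdot 2^{\frac{s}{2}-1}$ in the case $R=R_{(1,s)}$.
\item By Lemma \ref{lem-29}, $F\cap E$ is the union of $C^{(2,n)}_{2^{\frac{s}{2}-1}-1}$ and $C^{(2,n)}_{2^{\frac{s}{2}+1}-1}$. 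These are two distinct full cosets of size $s$ by Lemma \ref{lem-3}, so $|F\cap E|=2s$.
\end{itemize}
Using Lemma \ref{lem-4} for $|R|$, this yields
\[
|U_{(1,s)}|=2^{s-1}-2+\tfrac{s}{2}+2s=2^{s-1}-2+\tfrac{5s}{2} \quad\text{if } s\equiv 2\pmod 4,
\]
\[
|U_{(1,s)}|=2^{s-1}-\tfrac{s}{2}+2s=2^{s-1}+\tfrac{3s}{2} \quad\text{if } s\equiv 0\pmod 4.
\]
In both cases $|U_{(0,s)}|=n-1-|U_{(1,s)}|$.

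The main obstacle is the careful bookkeeping in part 2). One must confirm from Lemma \ref{lem-28} that no element $n-h$ with small $h$ is pulled into $E$. In particular, the coset $C^{(2,n)}_{2^{\frac{s}{2}-1}-1}$ contains $n-(2^{\frac{s}{2}+1}-1)\cdot 2^{k}$-type elements, and one must check that none of these has the form $n-h$ with $h<2^{\frac{s}{2}-1}-1$. This check is what forces the smaller bound $2^{\frac{s}{2}-1}-2$ in part 2).
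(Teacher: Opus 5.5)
Your proposal is correct and takes the route the paper intends: the paper omits the proof with a pointer to the argument of Lemma~\ref{lem-8}, and you carry that argument out, with Lemmas~\ref{lem-28} and \ref{lem-29} in place of Lemmas~\ref{lem-5} and \ref{lem-7}, the extra term $+|F\cap E|=2s$ in the inclusion--exclusion, and an explicit check of the five values $2^{\frac{s}{2}+1},\dots,2^{\frac{s}{2}+1}+4$. The check you deferred in part 2) goes through: the complete set of $h'\le 2^{\frac{s}{2}+1}-1$ with $n-h'\in E$ is $\{2^{\frac{s}{2}-1}-1,\,2^{\frac{s}{2}}-2,\,2^{\frac{s}{2}}-1,\,3\cdot 2^{\frac{s}{2}-1}-1,\,2^{\frac{s}{2}+1}-4,\dots,2^{\frac{s}{2}+1}-1\}$, whose minimum is $2^{\frac{s}{2}-1}-1$; Case 3 of the paper's proof of Lemma~\ref{lem-28} misses the $2n$-solutions $(2^{\frac{s}{2}}-2,\,2^{\frac{s}{2}+1}-1)$ and $(2^{\frac{s}{2}+1}-2,\,2^{\frac{s}{2}+1}-2)$, but these contribute no new values.
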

\begin{proof}
The proof follows the same argument as Lemma \ref{lem-8} and is therefore omitted.
\end{proof}

The following theorem characterizes the parameters of $\mathcal{C}'''_{(i,s)}$. 
 
\begin{theorem}\label{th-31}
Let $s \geq 6$ be even. Then the following conclusions are obtained.
\begin{enumerate}
    \item If $s \equiv 2 \pmod{4}$, then the code $\mathcal{C}'''_{(1,s)}$ has parameters
    $
    \left[ 2^s - 1, 2^{s-1} + 1 - \frac{5s}{2}, d \geq 2^{\frac{s}{2} + 1 } + 5 \right]
    $
    and the  code $\mathcal{C}'''_{(0,s)}$ has parameters
    $
    \left[ 2^s - 1, 2^{s-1} - 1 + \frac{5s}{2}, d \geq 2^{\frac{s}{2}-1} - 1 \right].
    $
    
    \item If $s \equiv 0 \pmod{4}$, then the code $\mathcal{C}'''_{(1,s)}$ has parameters
    $
    \left[ 2^s - 1, 2^{s-1} - 1 - \frac{3s}{2}, d \geq  2^{\frac{s}{2} + 1} +5 \right]
    $
    and the code $\mathcal{C}'''_{(0,s)}$ has parameters
    $
    \left[ 2^s - 1, 2^{s-1} + 1 + \frac{3s}{2}, d \geq 2^{\frac{s}{2}-1} - 1 \right].
    $
\end{enumerate}
\end{theorem}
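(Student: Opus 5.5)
The plan is to derive Theorem~\ref{th-31} directly from Lemma~\ref{lem-30}, using the BCH bound and the relation $\dim(\mathcal{C}) = n - |T|$. The structure follows the proofs of Theorems~\ref{th-1} and \ref{th-22}.

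\emph{Dimensions.} Since $U_{(0,s)}$ and $U_{(1,s)}$ are unions of $2$-cyclotomic cosets, $\dim(\mathcal{C}'''_{(i,s)}) = n - |U_{(i,s)}|$. I will substitute the cardinalities from parts 3) and 4) of Lemma~\ref{lem-30}.
\begin{itemize}
\item If $s \equiv 2 \pmod 4$: $2^s - 1 - (2^{s-1} - 2 + \frac{5s}{2}) = 2^{s-1} + 1 - \frac{5s}{2}$, and $2^s - 1 - (2^{s-1} - \frac{5s}{2}) = 2^{s-1} - 1 + \frac{5s}{2}$.
\item If $s \equiv 0 \pmod 4$: the same computation gives $2^{s-1} - 1 - \frac{3s}{2}$ and $2^{s-1} + 1 + \frac{3s}{2}$.
\end{itemize}
These match the dimensions claimed in the theorem.

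\emph{Minimum distances.} By part 1) of Lemma~\ref{lem-30}, $U_{(1,s)}$ contains the run of consecutive integers $\{h : 1 \le h \le 2^{\frac{s}{2}+1} + 4\}$. Taking $b = 1$ and $b + \delta - 2 = 2^{\frac{s}{2}+1} + 4$ in the BCH bound gives $\delta = 2^{\frac{s}{2}+1} + 5$. Hence $d(\mathcal{C}'''_{(1,s)}) \ge 2^{\frac{s}{2}+1} + 5$.

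By part 2), $U_{(0,s)}$ contains the run $\{n - h : 1 \le h \le 2^{\frac{s}{2}-1} - 2\}$. This is $2^{\frac{s}{2}-1} - 2$ consecutive residues, starting at $b = n - 2^{\frac{s}{2}-1} + 2$. The BCH bound then gives $d(\mathcal{C}'''_{(0,s)}) \ge 2^{\frac{s}{2}-1} - 1$. Both bounds are independent of $s \bmod 4$.

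\emph{Main obstacle.} All the substance lies in part 1) of Lemma~\ref{lem-30}, which I take as given. It is delicate because the run extends beyond $E$, which only covers integers up to $2^{\frac{s}{2}+1} - 1$.

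The even integers $2^{\frac{s}{2}+1}$, $2^{\frac{s}{2}+1} + 2$ and $2^{\frac{s}{2}+1} + 4$ lie in $C_1$, $C_{2^{\frac{s}{2}}+1}$ and $C_{2^{\frac{s}{2}-1}+1}$ respectively, so they belong to $E$. The odd integers $2^{\frac{s}{2}+1} + 1$ and $2^{\frac{s}{2}+1} + 3$, however, must be placed in $U_{(1,s)}$ through the set $R \setminus F$. To do this one has to check their $\omega_2$-parity and verify that their cosets avoid $F_{(i,s)}$.

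If I had to re-verify this step, I would write the $2$-adic expansions of these two odd integers and argue that no cyclic shift of them is $\ge n - 2^{\frac{s}{2}+1} + 1$. This is the same style of argument as in part 4) of Lemma~\ref{lem-14}.
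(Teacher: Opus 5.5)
Your proof is correct and follows the paper's argument exactly. The dimensions come from parts 3)--4) of Lemma~\ref{lem-30} via $\dim(\mathcal{C}'''_{(i,s)})=n-|U_{(i,s)}|$, and the distance bounds from the BCH bound applied to the runs in parts 1)--2); your value $2^{\frac{s}{2}-1}-1$ for $\mathcal{C}'''_{(0,s)}$ is the right one, while the paper's proof text misprints it as $2^{\frac{s}{2}}-1$.

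Your closing remark on part 1) is wrong, though. Since $2^{\frac{s}{2}-1}(2^{\frac{s}{2}+1}+1)\equiv 2^{\frac{s}{2}-1}+1$ and $2^{\frac{s}{2}-1}(2^{\frac{s}{2}+1}+3)\equiv 3\cdot 2^{\frac{s}{2}-1}+1 \pmod{n}$, the whole run $\{1,\dots,2^{\frac{s}{2}+1}+4\}$ already lies in $E$, which is a union of full cosets rather than just an interval. The route through $R\setminus F$ that you sketch would in fact fail: $\omega_2(2^{\frac{s}{2}+1}+1)=2$ has the wrong parity when $s\equiv 0 \pmod{4}$, and $\omega_2(2^{\frac{s}{2}+1}+3)=3$ has the wrong parity when $s\equiv 2 \pmod{4}$.
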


\begin{proof}
By 1) and 2) of Lemma \ref{lem-30}, the defining set $U_{(1,s)}$ of the code $\mathcal{C}_{(1,s)}'''$ contains the set $\{h : 1 \leq h \leq  2^{\frac{s}{2}+1} + 4\}$, while the defining set $U_{(0,s)}$ of  $\mathcal{C}_{(0,s)}'''$  contains the set $\{n - h : 1 \leq h \leq 2^{\frac{s}{2}-1} - 2\}$.
The BCH bound implies that
$
d(\mathcal{C}'''_{(1,s)}) \geq 2^{\frac{s}{2} + 1 } +5 \quad \text{and} \quad d(\mathcal{C}'''_{(0,s)}) \geq 2^{\frac{s}{2}} - 1,
$
respectively.
 We have $\dim(\mathcal{C}'''_{(i,s)}) = n - |U_{(i,s)}|$ for $i \in \{0,1\}$. Then the dimensions follow from Lemma \ref{lem-30}. 
\end{proof}

\begin{theorem}\label{th-32}
Let $s \geq 6$ be even. Then the following conclusions are obtained.
\begin{enumerate}
    \item If $s \equiv 2 \pmod{4}$, then the extended code $\overline{\mathcal{C}'''_{(1,s)}}$ has parameters
    $
    \left[ 2^s, 2^{s-1} + 1 - \frac{5s}{2}, d \geq 2^{\frac{s}{2} + 1 } + 6 \right]
    $
    and the extended code $\overline{\mathcal{C}'''_{(0,s)}}$ has parameters
    $
    \left[ 2^s, 2^{s-1} - 1 + \frac{5s}{2}, d \geq 2^{\frac{s}{2}-1}\right].
    $
    
    \item If $s \equiv 0 \pmod{4}$, then the extended code $\overline{\mathcal{C}'''_{(1,s)}}$ has parameters
    $
    \left[ 2^s, 2^{s-1} - 1 - \frac{3s}{2}, d \geq  2^{\frac{s}{2} + 1} +6 \right]
    $
    and the extended code $\overline{\mathcal{C}'''_{(0,s)}}$ has parameters
   $
    \left[ 2^s, 2^{s-1} + 1 + \frac{3s}{2}, d \geq 2^{\frac{s}{2}-1}  \right].
   $
\end{enumerate}
\end{theorem}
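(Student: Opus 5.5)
The plan is to derive Theorem \ref{th-32} directly from Theorem \ref{th-31} and the definition of the extended code $\overline{\mathcal{C}}$ recalled in the preliminaries. Extension only appends an overall parity-check coordinate. Hence the length becomes $2^s$, and the dimension is unchanged: the map $\mathbf{c}\mapsto(\mathbf{c},\sum_i c_i)$ is injective and linear. So the dimensions $2^{s-1}+1-\frac{5s}{2}$, $2^{s-1}-1+\frac{5s}{2}$ (for $s\equiv 2 \pmod 4$) and $2^{s-1}-1-\frac{3s}{2}$, $2^{s-1}+1+\frac{3s}{2}$ (for $s\equiv 0 \pmod 4$) carry over verbatim from Theorem \ref{th-31}.

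For the minimum distances I will use the elementary fact that every nonzero codeword of $\overline{\mathcal{C}}$ has even weight, and that its weight is $\mathrm{wt}(\mathbf{c})$ or $\mathrm{wt}(\mathbf{c})+1$. Thus if $d(\mathcal{C})\ge \delta$ with $\delta$ odd, every nonzero codeword of $\overline{\mathcal{C}}$ has even weight at least $\delta$, hence weight at least $\delta+1$. Both bounds from Theorem \ref{th-31} are odd, namely $2^{\frac{s}{2}+1}+5$ and $2^{\frac{s}{2}-1}-1$. So $d(\overline{\mathcal{C}'''_{(1,s)}})\ge 2^{\frac{s}{2}+1}+6$ and $d(\overline{\mathcal{C}'''_{(0,s)}})\ge 2^{\frac{s}{2}-1}$ in both congruence classes. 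This is exactly the claimed bound. Note the claim is a lower bound, so we need only this parity argument, not the exact value of $d$.

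The only point needing care is that we must use the lower bounds as stated in Theorem \ref{th-31}, which rest on Lemma \ref{lem-30}. For $\mathcal{C}'''_{(1,s)}$ the run $\{1,\dots,2^{\frac{s}{2}+1}+4\}$ lies in $U_{(1,s)}$. For $\mathcal{C}'''_{(0,s)}$ the run $\{n-h:1\le h\le 2^{\frac{s}{2}-1}-2\}$ lies in $U_{(0,s)}$, which gives $2^{\frac{s}{2}-1}-1$ by the BCH bound. The proof of Theorem \ref{th-31} writes $2^{\frac{s}{2}}-1$ at one point, but the statement gives the correct value. So the main obstacle is simply confirming that the oddness of $\delta$ is used and the bookkeeping is consistent. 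Otherwise the argument is routine.
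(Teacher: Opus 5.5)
Your proposal is correct and follows the paper's route: the paper's proof simply invokes Theorem \ref{th-31} together with the definition of the extended code. Your parity argument (all weights in $\overline{\mathcal{C}}$ are even and at least $\delta$, so an odd lower bound $\delta$ becomes $\delta+1$) supplies the detail the paper leaves implicit, and you are right that the BCH run $\{n-h:1\le h\le 2^{\frac{s}{2}-1}-2\}$ gives $2^{\frac{s}{2}-1}-1$, not the $2^{\frac{s}{2}}-1$ written in the proof of Theorem \ref{th-31}.
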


\begin{proof}
 The desired results can be deduced from Theorem \ref{th-31} and the definition of the extended codes.
\end{proof}

\begin{theorem}\label{th-33}
Let $s \geq 6$ be even. Then the following conclusions are obtained.
\begin{enumerate}
    \item If $s \equiv 2 \pmod{4}$, then the dual code $\mathcal{C}_{(1,s)}^{'''\perp}$ has parameters
    $
    \left[ 2^s - 1, 2^{s-1} - 2 + \frac{5s}{2}, d^\perp \geq 2^{\frac{s}{2}-1} \right]
    $
    and the dual code $\mathcal{C}_{(0,s)}^{'''\perp}$ has parameters
    $
    \left[ 2^s - 1, 2^{s-1} - \frac{5s}{2}, d^\perp \geq 2^{\frac{s}{2}} + 6 \right].
    $
    
    \item If $s \equiv 0 \pmod{4}$, then the dual code $\mathcal{C}_{(1,s)}^{'''\perp}$ has parameters
    $
    \left[ 2^s - 1, 2^{s-1} + \frac{3s}{2}, d^\perp \geq  2^{\frac{s}{2}-1}  \right]
    $
    and the dual code $\mathcal{C}_{(0,s)}^{'''\perp}$ has parameters
    $
    \left[ 2^s - 1, 2^{s-1} -2 - \frac{3s}{2}, d^\perp \geq 2^{\frac{s}{2}+1} + 6 \right].
    $
\end{enumerate}
\end{theorem}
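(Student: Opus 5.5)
The plan is to follow the proof of Theorem \ref{th-3} (and of Theorem \ref{th-24}) step by step, now using Theorem \ref{th-31} and Lemma \ref{lem-30}.

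\textbf{Dimensions.} Since $\dim(\mathcal{C}'''_{(i,s)})+\dim(\mathcal{C}^{'''\perp}_{(i,s)})=n$, the dimensions of the dual codes are obtained by subtracting the dimensions in Theorem \ref{th-31} from $n=2^s-1$. For $s\equiv 2\pmod 4$ this gives
\[
n-\left(2^{s-1}+1-\tfrac{5s}{2}\right)=2^{s-1}-2+\tfrac{5s}{2}, \qquad n-\left(2^{s-1}-1+\tfrac{5s}{2}\right)=2^{s-1}-\tfrac{5s}{2}.
\]
The case $s\equiv 0\pmod 4$ is analogous.

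\textbf{Defining sets of the duals.} The defining set of $\mathcal{C}^{'''\perp}_{(i,s)}$ is $\mathbb{Z}_n\setminus(-U_{(i,s)})$. Because $U_{(0,s)}$ and $U_{(1,s)}$ partition $\{1,\ldots,n-1\}$, this set equals
\[
\{0\}\cup\left(-U_{(1-i,s)}\right).
\]
We then read off a long run of consecutive integers in it and apply the BCH bound.

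\textbf{Case $i=1$.} By 2) of Lemma \ref{lem-30}, $\{n-h: 1\le h\le 2^{\frac{s}{2}-1}-2\}\subseteq U_{(0,s)}$. Hence $-U_{(0,s)}$ contains $\{h:1\le h\le 2^{\frac{s}{2}-1}-2\}$. Adding $0$, the defining set contains the $2^{\frac{s}{2}-1}-1$ consecutive integers $0,1,\ldots,2^{\frac{s}{2}-1}-2$. The BCH bound with $\delta=2^{\frac{s}{2}-1}$ gives $d^\perp\ge 2^{\frac{s}{2}-1}$ in both congruence classes of $s$.

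\textbf{Case $i=0$.} By 1) of Lemma \ref{lem-30}, $\{h:1\le h\le 2^{\frac{s}{2}+1}+4\}\subseteq U_{(1,s)}$. Hence $-U_{(1,s)}$ contains $\{n-h:1\le h\le 2^{\frac{s}{2}+1}+4\}$. Together with $0\equiv n$, the defining set contains the consecutive residues
\[
n-(2^{\frac{s}{2}+1}+4),\ \ldots,\ n-1,\ n\equiv 0,
\]
which is a run of $2^{\frac{s}{2}+1}+5$ elements. The BCH bound therefore gives $d^\perp\ge 2^{\frac{s}{2}+1}+6$. This proves the claim for $s\equiv 0\pmod 4$. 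For $s\equiv 2\pmod 4$ it gives the stated bound $2^{\frac{s}{2}}+6$ a fortiori, since $2^{\frac{s}{2}+1}+6\ge 2^{\frac{s}{2}}+6$.

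\textbf{Main obstacle.} The argument relies entirely on 1) of Lemma \ref{lem-30}, whose proof the paper omits. The set $E$ only covers the cosets of $1,\ldots,2^{\frac{s}{2}+1}-1$. So one must check that the five further integers $2^{\frac{s}{2}+1},\ldots,2^{\frac{s}{2}+1}+4$ still lie in $U_{(1,s)}$, i.e., that each of them lies in $E$ or in $R_{(0,s)}\setminus F_{(0,s)}$ (resp.\ $R_{(1,s)}\setminus F_{(1,s)}$).

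Two of these are easy: $2^{\frac{s}{2}+1}\in C_1^{(2,n)}$ and $2^{\frac{s}{2}+1}+2\in C_{2^{\frac{s}{2}}+1}^{(2,n)}$, so both lie in $E$. The others need the correct Hamming-weight parity together with a size comparison against the residues $n-h$ with $h\le 2^{\frac{s}{2}+1}-1$, as in 4) of Lemma \ref{lem-14}, to show they avoid $F$.

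If that verification fails for some of these integers, I would fall back on the largest run actually certified. That run contains at least $\{h:1\le h\le 2^{\frac{s}{2}+1}-1\}\subseteq E\subseteq U_{(1,s)}$. It yields $d^\perp\ge 2^{\frac{s}{2}+1}+1$, which still implies the bound $2^{\frac{s}{2}}+6$ stated for $s\equiv 2\pmod 4$ (since $s\ge 6$).
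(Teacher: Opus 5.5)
Your proposal is correct and coincides with the paper's proof: the dimensions come from complementing Theorem~\ref{th-31}, and the distance bounds come from the BCH bound applied to the runs $\{0,\dots,2^{\frac{s}{2}-1}-2\}$ and $\{n-2^{\frac{s}{2}+1}-4,\dots,n-1,0\}$ supplied by 2) and 1) of Lemma~\ref{lem-30} (your run for $i=1$ is the right one; the paper's upper end $2^{\frac{s}{2}}-2$ there is a slip). The obstacle you flag is harmless, but the fix you propose is misdirected: all of $2^{\frac{s}{2}+1}+j$, $0\le j\le 4$, already lie in $E$ (e.g.\ $2^{\frac{s}{2}-1}(2^{\frac{s}{2}+1}+1)\equiv 2^{\frac{s}{2}-1}+1$ and $2^{\frac{s}{2}-1}(2^{\frac{s}{2}+1}+3)\equiv 3\cdot 2^{\frac{s}{2}-1}+1 \pmod{n}$), whereas a weight-parity argument could never place the odd-weight $2^{\frac{s}{2}+1}+3$ (for $s\equiv 2\pmod 4$) or the even-weight $2^{\frac{s}{2}+1}+1$, $2^{\frac{s}{2}+1}+4$ (for $s\equiv 0\pmod 4$) in $U_{(1,s)}$, and your fallback run would not reach $2^{\frac{s}{2}+1}+6$ when $s\equiv 0\pmod 4$.
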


\begin{proof}
Since $\dim(\mathcal{C}'''_{(i,s)}) = n - \dim(\mathcal{C}^{'''\perp}_{(i,s)})$ for $i\in\{0,1\}$, the dimension of $\dim(\mathcal{C}_{(i,s)}^{'''\perp})$ follows from Theorem \ref{th-31}. Note that
the defining set of $ \mathcal{C}^{'''\perp}_{(i,s)}$ is $\mathbb{Z}_{n} \setminus (-U_{(i,s)})$. To study the minimum distances of the dual codes, we consider the following two cases. 

Case 1: Let $i =1$. The set $\{h : 0 \leq h \leq 2^{\frac{s}{2}} - 2\} \subseteq \mathbb{Z}_{n} \setminus (-U_{(1,s)})$ by 2) of Lemma \ref{lem-30}.
By the BCH bound,  the desired lower bound on $d(\mathcal{C}^{'''\perp}_{(1,s)})$ is obtained.
 
Case 2: Let $i =0$. The set $\{n-h : 0 \leq h \leq 2^{\frac{s}{2}+1} + 4\} \subseteq \mathbb{Z}_{n} \setminus (-U_{(0,s)})$ by 1) of Lemma \ref{lem-30}.
 By the BCH bound,  the desired lower bound on $d(\mathcal{C}^{'''\perp}_{(0,s)})$ is obtained.
\end{proof}

\begin{example}
We now use Magma to give an example which shows that the lower bounds on the minimum distances of the codes in this subsection are tight. 
  When $s=6$,  the code $\mathcal{C}'''_{(1,s)}$ has parameters $\left[63, 18, 21 \right]$, the code $\mathcal{C}'''_{(0,s)}$ has parameters $\left[63, 46, 7 \right]$, the dual code $\mathcal{C}_{(0,s)}^{'''\perp}$ has parameters $\left[63, 17, 22 \right]$. These three binary codes have the best  known parameters according to the Database\cite{Grassl}. The extended code $\overline{\mathcal{C}'''_{(1,s)}}$ has parameters $\left[64, 18, 22 \right]$, the dual code $\mathcal{C}_{(1,s)}^{'''\perp}$ has parameters $\left[63, 45, 8 \right]$, the extended code $\overline{\mathcal{C}'''_{(0,s)}}$ has parameters $\left[64, 46, 8 \right]$. These three binary codes are optimal according to the Database\cite{Grassl}.
\end{example}

\subsection{The sixth construction of binary cyclic codes with $d\cdot d^\perp \approx n$}
In this subsection, we modify the fifth construction of binary cyclic codes and propose the sixth construction. 

Let $s \equiv 2 \pmod{4}$ and
$$G: = C_{3 \cdot 2^{\frac{s}{2}}+3}^{(2,n)} \cup C_{3 \cdot 2^{\frac{s}{2}}-7}^{(2,n)} \cup C_{3 \cdot 2^{\frac{s}{2}}-11}^{(2,n)}.$$
Let
\[
\widehat{U}_{(1,s)} := \left[R_{(0,s)} \setminus (F_{(0,s)} \cup G)\right] \cup E 
\]
and $\widehat{U}_{(0,s)} := \{1,2,\cdots,n-1\} \setminus \widehat{U}_{(1,s)}$.  Then both $\widehat{U}_{(0,s)}$ and $\widehat{U}_{(1,s)}$ consist of unions of certain $2$-cyclotomic cosets. 
 Let $\widehat{\mathcal{C}}_{(i,s)}$ be the binary cyclic code of length $n$ with defining set $\widehat{U}_{(i,s)}$ for $i \in \{0,1\}$. The following lemmas are used to characterize the parameters of $\widehat{\mathcal{C}}_{(i,s)}$.

\begin{lemma}\label{lem-35}
Let $s \geq 10$ and $s \equiv 2 \pmod{4}$. Then 
\[
G \cap \left(\bigcup_{i=1}^{2^{\frac{s}{2}+1}-1} C^{(2,n)}_{n-i}\right)=\emptyset\text{ and }G\cap F_{(0,s)}=\emptyset.
\]
\end{lemma}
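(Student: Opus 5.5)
The plan is to translate membership in the cyclotomic cosets $C_{n-i}^{(2,n)}$ into a statement about runs of ones in binary expansions, and then read the answer off the explicit $2$-adic expansions of the three coset leaders defining $G$. Throughout, write $m=\frac{s}{2}$. Since $s\equiv 2\pmod 4$ and $s\geq 10$, $m$ is odd and $m\geq 5$.

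\emph{Reduction.} For $0\le x\le n$, identify $x$ with its $s$-bit string. Multiplication by $2$ modulo $n$ is a cyclic rotation of this string, and $x\mapsto n-x$ complements every bit. Suppose $g\in G$ lies in $C^{(2,n)}_{n-i}$ for some $1\le i\le 2^{m+1}-1$. Then some rotation $g\cdot 2^l \bmod n$ equals $n-i'$ for an element $i'$ of $C_i^{(2,n)}$. Complementing, the rotation $(n-g)\cdot 2^l \bmod n$ lies in $C_i^{(2,n)}$.

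I would restate the condition directly in terms of some element $i''\in C_i^{(2,n)}$ with $1\le i''\le 2^{m+1}-1$; taking $i''=i$ itself works. Then some rotation of $n-g$ equals $i''$. Such an $i''$ has its top $s-(m+1)=m-1$ bits equal to zero. Hence the cyclic string of $n-g$ contains a run of at least $m-1$ consecutive zeros, so the cyclic string of $g$ contains a run of at least $m-1$ consecutive ones.

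It therefore suffices to show the following: for each of the three coset leaders $g$ of $G$, the $s$-bit string of $g$, read cyclically, has no run of $m-1$ consecutive ones. This property is invariant under rotation, so checking the leader covers the whole coset.

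\emph{Expansions.} The three leaders decompose as follows.
\begin{itemize}
\item $3\cdot 2^{m}+3=2^{m+1}+2^{m}+2+1$, so its ones are at positions $\{m+1,m,1,0\}$.
\item $3\cdot 2^{m}-7=2^{m+1}+(2^{m}-8)+1$, so its ones are at positions $\{m+1\}\cup\{3,\dots,m-1\}\cup\{0\}$.
\item $3\cdot 2^{m}-11=2^{m+1}+(2^m-16)+4+1$, so its ones are at positions $\{m+1\}\cup\{4,\dots,m-1\}\cup\{2,0\}$.
\end{itemize}
In each case position $m$ of the last two strings is $0$, separating bit $m+1$ from the lower block. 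Bit $s-1=2m-1>m+1$ is $0$, so the wrap-around from position $s-1$ to position $0$ joins no runs. The longest cyclic runs of ones therefore have lengths $2$, $m-3$ and $m-4$ respectively. All of these are $<m-1$ because $m\ge5$.

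This proves $G\cap\bigl(\bigcup_{i=1}^{2^{m+1}-1}C^{(2,n)}_{n-i}\bigr)=\emptyset$. The second claim follows at once, since by definition $F_{(0,s)}$ is a sub-union of that same union.

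The only delicate step is the reduction: one must be careful that "some element of $C^{(2,n)}_{n-i}$" corresponds to a rotation of the complement string. One must also handle the wrap-around at bit $s-1$; this is exactly where $s\ge10$ (so $m\ge5$ and $2m-1>m+1$) is used.
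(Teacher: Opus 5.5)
Your argument is correct, and it takes a different route from the paper. The paper fixes $g=3\cdot 2^{s/2}-j$ with $j\in\{-3,7,11\}$ and splits into seven cases according to the rotation exponent $k$:
\begin{itemize}
\item $0\le k\le \frac{s}{2}-2$;
\item $k=\frac{s}{2}-1$;
\item $k=\frac{s}{2}$;
\item $\frac{s}{2}+1\le k\le s-4$;
\item $k=s-3$, $k=s-2$, $k=s-1$.
\end{itemize}
In each case it computes or bounds $g\cdot 2^k \bmod n$ and checks that the result stays below $n-i$.

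You instead observe once that any $i\le 2^{m+1}-1$ has its top $m-1$ bits equal to zero. By complementing, membership in some $C^{(2,n)}_{n-i}$ therefore forces a cyclic run of $m-1$ ones in $g$. You then read the longest runs $2$, $m-3$, $m-4$ directly off the binary expansions.

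Both proofs rest on the same fact: no rotation of $g$ lands in $[2^s-2^{m+1},\,2^s-2]$. Your run criterion has several advantages:
\begin{itemize}
\item it disposes of all $k$ at once;
\item it makes visible exactly where $m\ge 5$ is needed;
\item it would settle the analogous disjointness claims in Lemmas \ref{lem-14} and \ref{lem-25} in the same way;
\item it covers the stated range $i\le 2^{\frac{s}{2}+1}-1$ directly.
\end{itemize}
By contrast, the paper's write-up nominally uses $i\le 2^{\frac{s}{2}+1}-2$, although its bounds do cover the endpoint. It also has index slips such as $k\le 2^{\frac{s}{2}-2}$ written for $k\le \frac{s}{2}-2$. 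In exchange, the paper's computation yields the explicit residue of every rotation, which is more than this lemma needs.

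Incidentally, your expansion $3\cdot 2^{m}-11=2^{m+1}+(2^m-16)+4+1$ is the correct one. The paper's Lemma \ref{lem-36} writes the tail as $+3$, though the weight $m-1$ is unaffected.
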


\begin{proof}
It suffices to prove that $3 \cdot 2^{\frac{s}{2}}-j \notin \bigcup_{i=1}^{2^{\frac{s}{2}+1}-2} C^{(2,n)}_{n-i}$ for any $j \in \{-3, 7, 11\}$. Suppose $3 \cdot 2^{\frac{s}{2}}-j \in C^{(2,n)}_{n-i}$ for some $1 \leq i \leq 2^{\frac{s}{2}+1}-2$. One can find an integer $k$ with $0 \leq k \leq s-1$ such that
\begin{align}\label{eq-3}
(3 \cdot 2^{\frac{s}{2}}-j) \cdot 2^k \equiv n-i \pmod{n}.
\end{align}
We now consider the following cases. 

Case 1: If $0 \leq k \leq 2^{\frac{s}{2}-2}$, then
\begin{align*}
(3 \cdot 2^{\frac{s}{2}}-j) \cdot 2^k \pmod{n} &\leq 3 \cdot 2^{s-2} - 2^{\frac{s}{2}-2} \cdot j\\
                                                  &  <  3 \cdot 2^{s-2} + 2^{\frac{s}{2}}\\
                                                  &  < 2^s - 1 - (2^{\frac{s}{2}+1}-1)\\
                                                  & \leq n-i,
\end{align*}
 which contradicts Equation \eqref{eq-3}.
 
 Case 2: If $k = 2^{\frac{s}{2}-1}$, then 
 \begin{align*}
 (3 \cdot 2^{\frac{s}{2}}-j) \cdot 2^k \pmod{n} &= 2^{s-1} + 1 - j \cdot 2^{\frac{s}{2}-1}\\
                                                   & \leq  2^{s-1} + 1 + 3\cdot 2^{\frac{s}{2}} < n-i,
 \end{align*}
  which contradicts Equation \eqref{eq-3}.
  
  Case 3: If $k = 2^{\frac{s}{2}}$, then $(3 \cdot 2^{\frac{s}{2}}-j) \cdot 2^k \pmod{n} = 2^{s} + 2 - j \cdot 2^{\frac{s}{2}} \pmod{n}$.
  Then we consider the following subcases.                                    
\begin{itemize}
  \item If $j=-3$, then $2^{s} + 2 - j \cdot 2^{\frac{s}{2}} \pmod{n} = 3 \cdot 2^\frac{s}{2} + 3 < n-i$.
  \item If $j \in \{7,11\}$, then $2^{s} + 2 - j \cdot 2^{\frac{s}{2}} \leq 2^{s} + 2 - 7 \cdot 2^{\frac{s}{2}} < n-i$.
\end{itemize}
 Therefore $(3 \cdot 2^{\frac{s}{2}}-j) \cdot 2^k \pmod{n} < n-i$ for any $j \in \{-3, 7, 11\}$, which contradicts Equation \eqref{eq-3}.
  
 Case 4: If $\frac{s}{2} + 1\leq k \leq s-4$, we consider the following subcases. 
 \begin{itemize}
 \item If $j \in \{7,11\}$, then
\begin{align*}
& \quad (3 \cdot 2^{\frac{s}{2}}-j) \cdot 2^k \pmod{n} \\
                                                  &= 2^s - 1 + 2^{k + 1-\frac{s}{2}} + 2^{k - \frac{s}{2}} - j \cdot 2^k\\
                                                  &\leq  2^s - 1 + 5 - j \cdot 2^{\frac{s}{2} + 1}\\
                                                  & < 2^s + 4 - 6 \cdot 2^{\frac{s}{2}+1} < n-i,
\end{align*}
\item If $j=-3$, then
\begin{align*}
& \quad (3 \cdot 2^{\frac{s}{2}}-j) \cdot 2^k \pmod{n} \\
                                                  &= 2^{k + 1-\frac{s}{2}} + 2^{k - \frac{s}{2}} + 3 \cdot 2^k\\
                                                  &\leq  3 \cdot 2^{s-4} - 2^{\frac{s}{2}-3} - 2^{\frac{s}{2}-4}< n-i.
\end{align*}
 \end{itemize}
 Therefore $(3 \cdot 2^{\frac{s}{2}}-j) \cdot 2^k \pmod{n} < n-i$ for any $j \in \{-3, 7, 11\}$, which contradicts Equation \eqref{eq-3}.
 
 Case 5: If $k = s-3$, then $ (3 \cdot 2^{\frac{s}{2}}-j) \cdot 2^k \pmod{n} =  2^{\frac{s}{2}-2} + 2^{\frac{s}{2}-3} - j \cdot 2^{s-3} \pmod{n}$.
 Next we consider the following subcases. 
  \begin{itemize}
  \item If $j = -3$, then $2^{\frac{s}{2}-2} + 2^{\frac{s}{2}-3} - j \cdot 2^{s-3} \pmod{n} = 3 \cdot 2^{s-3} + 3 \cdot 2^{\frac{s}{2}-3}$.
   \item If $j = 7$, then $2^{\frac{s}{2}-2} + 2^{\frac{s}{2}-3} - j \cdot 2^{s-3} \pmod{n} = 2^{s-3} + 3 \cdot 2^{\frac{s}{2}-3} -1$.
   \item If $j = 11$, then $2^{\frac{s}{2}-2} + 2^{\frac{s}{2}-3} - j \cdot 2^{s-3} \pmod{n} = 5 \cdot 2^{s-3} + 3 \cdot 2^{\frac{s}{2}-3} -2$.
 \end{itemize}
 Therefore, $(3 \cdot 2^{\frac{s}{2}}-j) \cdot 2^k \pmod{n} \leq 5 \cdot 2^{s-3} + 3 \cdot 2^{\frac{s}{2}-2} -2< n-i $ for any $j \in \{-3, 7, 11\}$,  which contradicts Equation \eqref{eq-3}.
 
 Case 6: If $k = s-2$, then $ (3 \cdot 2^{\frac{s}{2}}-j) \cdot 2^k \pmod{n} =  2^{\frac{s}{2}-1} + 2^{\frac{s}{2}-2} - j \cdot 2^{s-2} \pmod{n}$. We now consider the following subcases. 
 \begin{itemize}
   \item If $j = -3$, then $2^{\frac{s}{2}-1} + 2^{\frac{s}{2}-2} - j \cdot 2^{s-2} \pmod{n} = 3 \cdot 2^{s-2} + 3 \cdot 2^{\frac{s}{2}-3}$.
   \item If $j = 7$, then $2^{\frac{s}{2}-1} + 2^{\frac{s}{2}-2} - j \cdot 2^{s-2} \pmod{n} = 2^{s-2} + 3 \cdot 2^{\frac{s}{2}-2} -2$.
   \item If $j = 11$, then $2^{\frac{s}{2}-1} + 2^{\frac{s}{2}-2} - j \cdot 2^{s-2} \pmod{n} =  2^{s-2} + 3 \cdot 2^{\frac{s}{2}-2} -3$.
 \end{itemize}
 Therefore, $(3 \cdot 2^{\frac{s}{2}}-j) \cdot 2^k \pmod{n} \leq 3 \cdot 2^{s-2} + 3 \cdot 2^{\frac{s}{2}-2} < n-i $ for any $j \in \{-3, 7, 11\}$,  which contradicts Equation \eqref{eq-3}.
 
  Case 7: If $k = s-1$, then $(3 \cdot 2^{\frac{s}{2}}-j) \cdot 2^k \pmod{n} =  2^{\frac{s}{2}} + 2^{\frac{s}{2}-1} - j \cdot 2^{s-1} \pmod{n}$.
  We consider the following subcases. 
 \begin{itemize}
   \item If $j = -3$, then $2^{\frac{s}{2}} + 2^{\frac{s}{2}-1} - j \cdot 2^{s-2} \pmod{n} = 2^{s-1} + 3 \cdot 2^{\frac{s}{2}-1} +1$.
   \item If $j = 7$, then $2^{\frac{s}{2}} + 2^{\frac{s}{2}-1} - j \cdot 2^{s-2} \pmod{n} = 2^{s-1} + 3 \cdot 2^{\frac{s}{2}-1} -4$.
   \item If $j = 11$, then $2^{\frac{s}{2}} + 2^{\frac{s}{2}-1} - j \cdot 2^{s-2} \pmod{n} = 2^{s-1} + 3 \cdot 2^{\frac{s}{2}-1} -6$.
 \end{itemize}
 Therefore, $(3 \cdot 2^{\frac{s}{2}}-j) \cdot 2^k \pmod{n} \leq 2^{s-1} + 3 \cdot 2^{\frac{s}{2}-1} +1 < n-i $ for any $j \in \{-3, 7, 11\}$,  which contradicts Equation \eqref{eq-3}.
 
Therefore, the desired conclusion is obtained from the above discussion.
\end{proof}

  \begin{lemma}\label{lem-36}
  Let $s \equiv 2 \pmod{4}$ and $s \geq 10$. Then the following conclusions are obtained.
  \begin{enumerate}
    \item  $\omega_2(3 \cdot 2^{\frac{s}{2}} - 7)= \omega_2(3 \cdot 2^{\frac{s}{2}} - 11) = \frac{s-2}{2}$ and $G \subseteq R_{(0,s)} $.
    \item $|G| = \frac{5s}{2}$ and $G \cap E=\emptyset$.
  \end{enumerate}
  \end{lemma}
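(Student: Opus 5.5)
The plan is to work entirely with the $2$-adic expansions of the three representatives of $G$, using that multiplication by $2$ modulo $n$ cyclically rotates the length-$s$ binary string. Throughout write $m=\frac{s}{2}$; since $s\equiv 2 \pmod 4$ and $s\ge 10$, $m$ is odd and $m\ge 5$.

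\emph{Part 1) (weights and $G\subseteq R_{(0,s)}$).} I will compute the expansions directly:
\[
3\cdot 2^{m}-7=2^{m+1}+\sum_{h=3}^{m-1}2^h+1,\qquad 3\cdot 2^{m}-11=2^{m+1}+\sum_{h=4}^{m-1}2^h+2^2+1 .
\]
Both have weight $m-1=\frac{s-2}{2}$, which is even because $m$ is odd. Also $3\cdot 2^m+3=2^{m+1}+2^m+2+1$ has weight $4$. Cyclic rotation preserves $\omega_2$, so every element of $G$ has even weight, and hence $G\subseteq R_{(0,s)}$.

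\emph{Part 2) (the size $|G|=\frac{5s}{2}$).} First, $2^m(3\cdot 2^m+3)\equiv 3+3\cdot 2^m \pmod n$, so the coset of $3\cdot 2^m+3$ has size dividing $m$. Its string is the block $11$ placed at positions $0,1$ and at $m,m+1$. Its minimal period is therefore exactly $m$, so the coset has size $m$. For $3\cdot 2^m-7$ and $3\cdot 2^m-11$, a coset size $d<s$ would have to divide $s$, so $d\le m$ and the string would have period dividing $m$; in particular the lower and upper halves would coincide. But the upper half (positions $m,\dots,s-1$) contains only the single bit at position $m+1$, whereas the lower half has weight $m-2\ge 3$. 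So both cosets have size $s$.

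Next I must show the three cosets are pairwise distinct. For $s\ge 14$ the weights $4$ and $m-1$ already separate the first coset from the other two, but for $s=10$ they coincide. So I will instead compare the multisets of cyclic run lengths, which are rotation invariants. The string of $3\cdot 2^m+3$ has exactly two runs of ones, each of length $2$. That of $3\cdot 2^m-7$ has a run of ones of length $m-3$ merged with the bit at $m+1$ if adjacent, and so on. For $3\cdot 2^m-11$ the ones pattern near the bottom is $0101$. A short case check (including $s=10$, where the three numbers are $99,89,85$) shows the run structures differ. Hence $|G|=m+s+s=\frac{5s}{2}$.

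\emph{Part 2) (the disjointness $G\cap E=\emptyset$).} $E$ is a union of cosets each of which contains some $h\le 2^{m+1}-1$. By Lemma \ref{lem-3}, each such coset has an odd coset leader $\le 2^{m+1}-1$. So it suffices to show that each of $3\cdot 2^m+3$, $3\cdot 2^m-7$, $3\cdot 2^m-11$ is the coset leader of its own coset, since each exceeds $2^{m+1}-1$. To do this I will view each string cyclically. Each has a run of zeros at the top of length $m-2$ (positions $m+2,\dots,s-1$), and all other zero runs have length at most $m-2$ ($\le 2$ for the last two, and exactly $m-2$ for $3\cdot 2^m+3$). A rotation yields a value no larger only if it puts a maximal zero run in the top positions. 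This leaves only the identity rotation, or for $3\cdot 2^m+3$ the rotation by $m$, which fixes the value. Therefore each representative is minimal in its coset.

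I expect this coset-leader/minimality verification, together with the distinctness of the three cosets when $s=10$, to be the only delicate point. If the run argument becomes messy, I would fall back on bounding $(3\cdot 2^m-j)2^k \bmod n$ case by case in $k$, exactly as in the proof of Lemma \ref{lem-35}.
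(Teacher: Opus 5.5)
Your proposal is correct and follows the same route as the paper. Binary expansions give the weights (your expansion $3\cdot 2^{m}-11=2^{m+1}+\sum_{h=4}^{m-1}2^h+2^2+1$, $m=\frac{s}{2}$, is the right one; the paper's ``$+3$'' is a typo), $2^{m}(3\cdot 2^{m}+3)\equiv 3\cdot 2^{m}+3 \pmod n$ gives the coset of size $\frac{s}{2}$, and $G\cap E=\emptyset$ follows from Lemma~\ref{lem-3} once the three representatives are coset leaders exceeding $2^{\frac{s}{2}+1}-1$, which the paper only asserts and your zero-run argument actually proves.

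One small slip: a proper divisor $d$ of $s=2m$ need not divide $m$ (e.g.\ $d=2$ when $s=10$), so ``the halves would coincide'' does not by itself give coset size $s$. Your minimality argument (for $3\cdot 2^{m}-7$ and $3\cdot 2^{m}-11$ only the trivial rotation gives a value not exceeding the representative) already forces size $s$. It also makes the run-length distinctness check unnecessary, since distinct coset leaders give distinct cosets, and the first coset is separated anyway by its size $m\neq s$.
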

  
  \begin{proof}
    1) It is easy to verify that $3 \cdot 2^{\frac{s}{2}} - 7 = 2^{\frac{s}{2}+1} + \sum_{j=3}^{\frac{s}{2}-1} 2^{j} + 1$ and  $3 \cdot 2^{\frac{s}{2}} - 11 = 2^{\frac{s}{2}+1} + \sum_{j=4}^{\frac{s}{2}-1} 2^{j} + 3$. Then $\omega_2(3 \cdot 2^{\frac{s}{2}} - 7)= \omega_2(3 \cdot 2^{\frac{s}{2}} - 11) = \frac{s-2}{2}$.  Since $s \equiv 2 \pmod{4}$, we have $\omega_2(3 \cdot 2^{\frac{s}{2}} - 7)= \omega_2(3 \cdot 2^{\frac{s}{2}} - 11) \equiv 0 \pmod{2}$. It is clear that $\omega_2(3 \cdot 2^{\frac{s}{2}} + 3) = 4 \equiv 0 \pmod{2}$. Therefore, we have $G \subseteq R_{(0,s)}$.
    
    2) It is easy to see that $3 \cdot 2^{\frac{s}{2}} + 3$, $3 \cdot 2^{\frac{s}{2}} - 7$ and  $3 \cdot 2^{\frac{s}{2}} - 11$ are coset leaders. Since $(3 \cdot 2^{\frac{s}{2}} +3) \cdot 2^\frac{s}{2} \pmod{n} =  3 \cdot 2^{\frac{s}{2}} +3$, we have $\left|C_{3 \cdot 2^{\frac{s}{2}}+3}^{(2,n)}\right|=\frac{s}{2}$.
    Similarly, we have $\left|C_{3 \cdot 2^{\frac{s}{2}}-7}^{(2,n)}\right|=\left| C_{3 \cdot 2^{\frac{s}{2}}-11}^{(2,n)}\right|=s$.
    Then $|G| = \frac{s}{2}+s+s=\frac{5s}{2}$. By Lemma \ref{lem-3},  every integer $i$ with $1 \leq i \leq 2^{\frac{s}{2}+1} -1 $ and $i \equiv 1 \pmod{2}$  is a coset leader. Since $3 \cdot 2^{\frac{s}{2}} + 3 >3 \cdot 2^{\frac{s}{2}} - 7 > 3 \cdot 2^{\frac{s}{2}} - 11 > 2^\frac{s+2}{2} - 1 $, we deduce $C_{3 \cdot 2^{\frac{s}{2}}+3}^{(2,n)}\cap E=\emptyset$. Similarly, we have $C_{3 \cdot 2^{\frac{s}{2}}-7}^{(2,n)}\cap E=\emptyset$ and $C_{3 \cdot 2^{\frac{s}{2}}-11}^{(2,n)}\cap E=\emptyset$.
    Therefore, $G \cap E=\emptyset$ by the definition of $G$. 
  \end{proof}

 \begin{lemma}\label{lem-37}
Let $s \equiv 2 \pmod{4}$ and $s \geq 10$. Then the following conclusions are obtained.
\begin{enumerate}
    \item $\{h : 1 \leq h \leq  2^{\frac{s}{2}+1} + 4\} \subseteq \widehat{U}_{(1,s)}$.
    
    \item $\{n - h : 1 \leq h \leq 2^{\frac{s}{2}-1} - 2\} \subseteq \widehat{U}_{(0,s)}$.
    
    \item $|\widehat{U}_{(1,s)}| = 2^{s-1} - 2 $ and $|\widehat{U}_{(0,s)}| = 2^{s-1}$.
    
\end{enumerate}

\begin{proof}
  1)   It is clear that $\{h : 1 \leq h \leq 2^{\frac{s}{2}} + 4\} \subseteq E \subseteq \widehat{U}_{(1,s)}$.
  
  2) The proof follows the same argument as Lemma \ref{lem-8} and is therefore omitted.
  
  3) By Lemma \ref{lem-35} and \ref{lem-36}, we have
$$
\widehat{U}_{(1,s)} = [(R_{(0,s)} \setminus F_{(0,s)}) \cup E]\setminus G.
$$
Then $|\widehat{U}_{(1,s)}| = \left|[(R_{(0,s)} \setminus F_{(0,s)}) \cup E] \right|- |G|=|U_{1,s}|-|G|=2^{s-1} - 2 $ by Lemmas \ref{lem-30} and \ref{lem-36}.  
Besides, we have $|\widehat{U}_{(0,s)}| = n-1-|\widehat{U}_{(1,s)}| =2^{s-1}$.
\end{proof}
\end{lemma}

The following theorem characterizes the parameters of $\widehat{\mathcal{C}}_{(i,s)}$. 
\begin{theorem}\label{th-38}
Let $s \geq 10$ and $s \equiv 2 \pmod{4}$. Then the following conclusions are obtained.
\begin{enumerate}

\item The binary cyclic code $\widehat{\mathcal{C}}_{(1,s)}$ has parameters
    $
    \left[ 2^s - 1, 2^{s-1} + 1 , d \geq 2^{\frac{s}{2} + 1 } + 5 \right]
    $
    and the binary cyclic code $\widehat{\mathcal{C}}_{(0,s)}$ has parameters
    $
    \left[ 2^s - 1, 2^{s-1} - 1 , d \geq 2^{\frac{s}{2}-1} - 1 \right].
    $
\item The dual code $\widehat{\mathcal{C}}_{(1,s)}^{\perp}$ has parameters
    $
    \left[ 2^s - 1, 2^{s-1} - 2 , d^\perp \geq 2^{\frac{s}{2}-1} \right]
    $
    and the dual code $\widehat{\mathcal{C}}_{(0,s)}^{\perp} $ has parameters
    $
    \left[ 2^s - 1, 2^{s-1}, d^\perp \geq 2^{\frac{s}{2}+1} + 6 \right].
    $
\end{enumerate}    
\end{theorem}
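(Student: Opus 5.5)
The plan mirrors the proofs of Theorems \ref{th-16} and \ref{th-29}, now using Lemma \ref{lem-37} in place of Lemmas \ref{lem-15} and \ref{lem-26}.

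\emph{Dimensions.} Since $\dim(\widehat{\mathcal{C}}_{(i,s)})=n-|\widehat{U}_{(i,s)}|$, part 3) of Lemma \ref{lem-37} gives
\[
\dim(\widehat{\mathcal{C}}_{(1,s)})=2^s-1-(2^{s-1}-2)=2^{s-1}+1, \qquad \dim(\widehat{\mathcal{C}}_{(0,s)})=2^s-1-2^{s-1}=2^{s-1}-1 .
\]
The dual dimensions $2^{s-1}-2$ and $2^{s-1}$ then follow from $\dim(\widehat{\mathcal{C}}_{(i,s)})+\dim(\widehat{\mathcal{C}}^{\perp}_{(i,s)})=n$.

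\emph{Minimum distances of $\widehat{\mathcal{C}}_{(i,s)}$.} By 1) of Lemma \ref{lem-37}, $\widehat{U}_{(1,s)}$ contains the run of $2^{\frac{s}{2}+1}+4$ consecutive integers $1,\dots,2^{\frac{s}{2}+1}+4$. The BCH bound with $b=1$ and $\delta=2^{\frac{s}{2}+1}+5$ gives $d(\widehat{\mathcal{C}}_{(1,s)})\ge 2^{\frac{s}{2}+1}+5$. By 2) of the same lemma, $\widehat{U}_{(0,s)}$ contains the consecutive run $n-(2^{\frac{s}{2}-1}-2),\dots,n-1$, which has length $2^{\frac{s}{2}-1}-2$. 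This gives $d(\widehat{\mathcal{C}}_{(0,s)})\ge 2^{\frac{s}{2}-1}-1$.

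\emph{Dual distances.} The defining set of $\widehat{\mathcal{C}}^{\perp}_{(i,s)}$ is $\mathbb{Z}_n\setminus(-\widehat{U}_{(i,s)})$. The key observation is that $0\notin\widehat{U}_{(i,s)}$, and that an element $-j$ lies outside $-\widehat{U}_{(i,s)}$ exactly when $j\in\widehat{U}_{(1-i,s)}$, since the two sets partition $\{1,\dots,n-1\}$.

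For $i=1$: by 2), $-(n-h)=h\notin -\widehat{U}_{(1,s)}$ for $1\le h\le 2^{\frac{s}{2}-1}-2$. Together with $0$, the defining set contains $\{0,1,\dots,2^{\frac{s}{2}-1}-2\}$, a run of length $2^{\frac{s}{2}-1}-1$. The BCH bound then gives $d^\perp\ge 2^{\frac{s}{2}-1}$.

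For $i=0$: by 1), $n-h\notin-\widehat{U}_{(0,s)}$ for $1\le h\le 2^{\frac{s}{2}+1}+4$. Together with $0$, the defining set contains $\{n-h:0\le h\le 2^{\frac{s}{2}+1}+4\}$, a run of $2^{\frac{s}{2}+1}+5$ consecutive residues ending at $0$. This gives $d^\perp\ge 2^{\frac{s}{2}+1}+6$.

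\emph{Main obstacle.} The real content sits in 1) of Lemma \ref{lem-37}: one must check that $\widehat{U}_{(1,s)}$ contains every $h\le 2^{\frac{s}{2}+1}+4$, including the values in $(2^{\frac{s}{2}+1}-1,\,2^{\frac{s}{2}+1}+4]$ that lie outside $E$. Each such $h$ has to be shown to belong to $R_{(0,s)}\setminus(F_{(0,s)}\cup G)$, using Lemma \ref{lem-35} and the bound on the coset leaders. I would double-check this by examining the 2-adic weights and cosets of $2^{\frac{s}{2}+1}+1,\,2^{\frac{s}{2}+1}+3$ directly (the even values reduce to smaller odd ones). Granting the lemma as stated, the theorem is just the BCH bound plus dimension counting.
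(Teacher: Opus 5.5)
Your proof is correct and matches the paper's: the dimensions come from part 3) of Lemma \ref{lem-37}, the BCH bound is applied to the runs given by parts 1) and 2), and the duals use $\mathbb{Z}_n\setminus(-\widehat{U}_{(i,s)})=\{0\}\cup(-\widehat{U}_{(1-i,s)})$. Your run lengths $2^{\frac{s}{2}-1}-2$ are the correct ones, whereas the paper's written proof slips to $2^{\frac{s}{2}}-1$ and $2^{\frac{s}{2}}-2$.

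One correction to your side remark: the integers $2^{\frac{s}{2}+1},\dots,2^{\frac{s}{2}+1}+4$ do lie in $E$, because $E$ is a union of full cyclotomic cosets; for instance $2^{\frac{s}{2}+1}+1\in C_{2^{\frac{s}{2}-1}+1}^{(2,n)}$ and $2^{\frac{s}{2}+1}+3\in C_{3\cdot 2^{\frac{s}{2}-1}+1}^{(2,n)}$. The latter has odd weight, so it could never be placed in $R_{(0,s)}\setminus(F_{(0,s)}\cup G)$, and part 1) of Lemma \ref{lem-37} is simply the inclusion $E\subseteq\widehat{U}_{(1,s)}$.
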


\begin{proof}
By 1) and 2) of Lemma \ref{lem-37}, the defining set  $\widehat{U}_{(1,s)}$ of the code  $\widehat{\mathcal{C}}_{(1,s)}$ contains the set $\{h : 1 \leq h \leq  2^{\frac{s}{2}+1} + 4\}$, while the defining set $\widehat{U}_{(0,s)}$ of  $\widehat{\mathcal{C}}_{(0,s)}$  contains the set $\{n - h : 1 \leq h \leq 2^{\frac{s}{2}-1} - 2\}$.
By the BCH bound, we derive
$
d(\widehat{\mathcal{C}}_{(1,s)}) \geq 2^{\frac{s}{2} + 1 } +5 \quad \text{and} \quad d(\widehat{\mathcal{C}}_{(0,s)}) \geq 2^{\frac{s}{2}} - 1.
$
Note that $\dim(\widehat{\mathcal{C}}_{(i,s)}) = n - |\widehat{U}_{(i,s)}|$ for each $i \in \{0,1\}$. The dimensions of the binary cyclic codes follow from Lemma \ref{lem-37}. Since $\dim(\widehat{\mathcal{C}}_{(i,s)}) = n - \dim(\widehat{\mathcal{C}}^{\perp}_{(i,s)})$, the dimensionss of the duals directly follow. The minimum distance of the duals are discussed in the following cases. 

Case 1: Let $i =1$. The set $\{h : 0 \leq h \leq 2^{\frac{s}{2}} - 2\} \subseteq \mathbb{Z}_{n} \setminus (-\widehat{U}_{(1,s)})$ by 2) of Lemma \ref{lem-37}.
 By the BCH bound,  the desired lower bound on  $d(\widehat{\mathcal{C}}^{\perp}_{(1,s)})$ is obtained.
 
Case 2: Let $i =0$. The set $\{n-h : 0 \leq h \leq 2^{\frac{s}{2}+1} + 4\} \subseteq \mathbb{Z}_{n} \setminus (-\widehat{U}_{(0,s)})$ by 1) of Lemma \ref{lem-37}.
 By the BCH bound,  the desired lower bound on  $d(\widehat{\mathcal{C}}^{\perp}_{(0,s)})$ is obtained. 
\end{proof}

The extended code of $\widehat{\mathcal{C}}_{(i,s)}$ are characterized as follows.

\begin{theorem}\label{th-39}
Let $s \equiv 2 \pmod{4}$ and $s \geq 10$. Then the extended code $\overline{\widehat{\mathcal{C}}_{(1,s)}}$ has parameters
    $$
    \left[ 2^s, 2^{s-1} + 1, d \geq 2^{\frac{s}{2} + 1 } + 6 \right]
    $$
    and the extended code $\overline{\widehat{\mathcal{C}}_{(0,s)}}$ has parameters
    $$
    \left[ 2^s, 2^{s-1} - 1, d \geq 2^{\frac{s}{2}-1}  \right].
    $$
\end{theorem}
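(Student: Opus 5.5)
The plan is to read everything off Theorem \ref{th-38} together with the basic facts on extended codes recalled in the preliminaries. Extending a binary code by an overall parity bit does not change the dimension. So $\overline{\widehat{\mathcal{C}}_{(1,s)}}$ has dimension $2^{s-1}+1$ and $\overline{\widehat{\mathcal{C}}_{(0,s)}}$ has dimension $2^{s-1}-1$, taken directly from part 1) of Theorem \ref{th-38}. The length increases from $2^s-1$ to $2^s$.

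For the minimum distance I will use the parity property of extensions rather than the exact value of $d$. Every nonzero codeword of the extended code has even weight. A nonzero codeword $(\mathbf{c},c_n)$ has weight $\mathrm{wt}(\mathbf{c})+c_n$, and $\mathbf{c}\neq 0$, so the weight is at least $\mathrm{wt}(\mathbf{c})\ge d(\widehat{\mathcal{C}}_{(i,s)})$. Being even, it is also at least the smallest even integer $\ge d(\widehat{\mathcal{C}}_{(i,s)})$. Consequently, whenever $d(\widehat{\mathcal{C}}_{(i,s)})\ge \delta$ with $\delta$ odd, we get $d(\overline{\widehat{\mathcal{C}}_{(i,s)}})\ge \delta+1$. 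This slightly strengthens the statement ``$d_1=d$ or $d+1$'' in the preliminaries, since it only needs a lower bound on $d$.

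Applying this with $\delta=2^{\frac{s}{2}+1}+5$, which is odd, gives $d(\overline{\widehat{\mathcal{C}}_{(1,s)}})\ge 2^{\frac{s}{2}+1}+6$. Applying it with $\delta=2^{\frac{s}{2}-1}-1$, which is odd because $s\ge 10$, gives $d(\overline{\widehat{\mathcal{C}}_{(0,s)}})\ge 2^{\frac{s}{2}-1}$.

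The only delicate point is whether the bounds in Theorem \ref{th-38} actually hold. The BCH step there contains typos: some intervals are written with $2^{\frac{s}{2}}$ where $2^{\frac{s}{2}+1}$ or $2^{\frac{s}{2}-1}$ is meant. What is needed comes from Lemma \ref{lem-37}. Part 1) gives $\{1,\dots,2^{\frac{s}{2}+1}+4\}\subseteq\widehat{U}_{(1,s)}$, and the BCH bound then yields $d\ge 2^{\frac{s}{2}+1}+5$. Part 2) gives $\{n-h:1\le h\le 2^{\frac{s}{2}-1}-2\}\subseteq\widehat{U}_{(0,s)}$, a run of $2^{\frac{s}{2}-1}-2$ consecutive elements, and the BCH bound yields $d\ge 2^{\frac{s}{2}-1}-1$. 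I will cite these corrected bounds explicitly, and then the parity argument finishes the proof.
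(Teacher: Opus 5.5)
Your proposal is correct and takes the paper's route: the paper's proof just reads the parameters off Theorem \ref{th-38} and the definition of the extended code, and your parity argument (a nonzero extended codeword has even weight at least $d(\widehat{\mathcal{C}}_{(i,s)})$, so an odd lower bound $\delta$ becomes $\delta+1$) is the step the paper leaves implicit. You are also right that the bound for $\widehat{\mathcal{C}}_{(0,s)}$ must be $2^{\frac{s}{2}-1}-1$, coming from Lemma \ref{lem-37} 2), not the $2^{\frac{s}{2}}-1$ misprinted in the proof of Theorem \ref{th-38}.
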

\begin{proof}
 The desired results can be deduced from Theorem \ref{th-38} and the definition of the extended codes.
\end{proof}

\section{Summary and concluding remarks}
Let $W$ consist of unions of certain $2$-cyclotomic cosets modulo $n$ such that $0 \notin W$ and $W^c := \{1, 2, \cdots, n-1\} \setminus W$.
In this paper, by selecting some suitable sets $W$, we have presented six constructions of binary cyclic codes with defining set $W$ or $W^c$. The dimensions of the binary cyclic codes were characterized, and lower bounds on their minimum distances were derived.
The extended codes of the binary cyclic codes were also studied. The parameters of the binary cyclic codes and their duals have been listed in Table \ref{tab}. 
By Table \ref{tab},  these binary cyclic ones have rates near $1/2$, and simultaneously large minimal distances and large dual distances such
that the lower bounds of $d\cdot d^\perp$ are close to $n$ or $2n$. 
In particular, most of these binary codes improve on known parameters in Table \ref{tab}. In addition, our examples yield several binary codes with excellent parameters.

To the best of our knowledge, only a few known families of binary cyclic codes whose lower bounds of $d\cdot d^\perp$ reach $cn$ for $c\geq 1$, even without any restriction on the dimension $k$. Hence it is of interest to construct further infinite families of binary cyclic codes of length $n$ with large lower bounds on $d \cdot d^\perp$.

\end{document}